\title{Fundamental Quality Bound on Optical Quantum Communication}
\author[1]{Tobias Rippchen\thanks{Contact information: tobias.rippchen@rwth-aachen.de}}
\author[2]{Ludovico Lami}
\author[3]{Gerardo Adesso}
\author[1]{Mario Berta}
\affil[1]{Institute for Quantum Information, RWTH Aachen University, Germany}
\affil[2]{Scuola Normale Superiore, Piazza dei Cavalieri 7, 56126 Pisa, Italy}
\affil[3]{School of Mathematical Sciences and Centre for the Mathematical and Theoretical Physics of Quantum Non-Equilibrium Systems, University of Nottingham, University Park, Nottingham, NG7 2RD, United Kingdom}
\date{}
\DeclareMathOperator{\Id}{Id}
\DeclareMathOperator{\arcoth}{arcoth}
\newcommand{\abs}[1]{\left| #1 \right|}
\newcommand{\norm}[1]{\left\lVert#1\right\rVert}
\newcommand{\tr}[1]{\mathrm{Tr}\left[#1\right]}
\newcommand{\trA}[1]{\mathrm{Tr}_{A}\left[#1\right]}
\newcommand{\com}[2]{\left[#1,#2\right]}
\newcommand{\anticom}[2]{\left\{#1,#2\right\}}
\newcommand{\ket}[1]{|{#1}\rangle}
\newcommand{\bra}[1]{\langle{#1}|}
\newcommand{\braket}[2]{\langle{#1}|{#2}\rangle}
\newcommand{\ketbra}[2]{|{#1}\rangle\langle{#2}|}
\theoremstyle{plain}
\newtheorem{prototheorem}{Theorem}[section]
\newenvironment{theorem}
   {\begin{leftbar}\begin{prototheorem}}
   {\end{prototheorem}\end{leftbar}}
\newtheorem{protocorollary}{Corollary}[section]
\newtheorem{protoproposition}[prototheorem]{Proposition}
\newenvironment{proposition}
   {\begin{leftbar}\begin{protoproposition}}
   {\end{protoproposition}\end{leftbar}}
\newtheorem{protolemma}[prototheorem]{Lemma}
\newenvironment{lemma}
   {\begin{leftbar}\begin{protolemma}}
   {\end{protolemma}\end{leftbar}}
\newtheorem{prototheorem2}{Theorem}
\newtheorem{protoproposition2}[prototheorem2]{Main Finding}
\newenvironment{proposition2}
   {\begin{leftbar}\begin{protoproposition2}}
   {\end{protoproposition2}\end{leftbar}}
\theoremstyle{definition}
\newtheorem{definition}[prototheorem]{Definition}
\newtheorem{remark}[prototheorem]{Remark}
\begin{document}
	
\maketitle
	
\begin{abstract}
    Sending quantum information reliably over long distances is a central challenge in quantum technology in general, and in quantum optics in particular, since most quantum communication relies on optical fibres or free-space links. Here, we address this problem by shifting the focus from the quantity of information sent to the quality of the transmission, i.e.\ the rate of decay of the transmission error with respect to the number of channel uses. For the general class of teleportation-simulable channels, which includes all channels arising in quantum optical communication, we prove that the single-letter reverse relative entropy of entanglement of the Choi state upper bounds the (zero-rate) error exponent of two-way assisted quantum communication\,---\,paralleling the celebrated capacity bound of [Pirandola {\it et al.}, Nat.\ Comm.\ (2017)] in terms of the regularised relative entropy of entanglement. Remarkably, for Gaussian channels our bound can be computed efficiently through a convex program with simple constraints involving only finite-dimensional covariance matrices. As a prototypical application, we derive closed-form analytical expressions of our upper bound as well as random-coding-based lower bounds for several one-mode Gaussian channels. Extending recent work [Lami {\it et al.}, arXiv:2408.07067 (2024)] to infinite-dimensional systems, we further endow the reverse relative entropy of entanglement with an exact operational interpretation in entanglement testing, and show that it characterises the rate of entanglement distillation under non-entangling operations. These findings offer a new perspective on entanglement as a resource and sharpen the theoretical benchmarks for future quantum optical networks.
\end{abstract}

\begin{center}
    {\large \bfseries Introduction \par}
\end{center}

\paragraph{The Challenge with Capacities.} Quantum communication lies at the heart of emerging quantum technologies, from secure cryptography to distributed quantum computing \cite{Pirandola_2020}. It is thus of crucial importance to understand the fundamental limits to which the transmission of quantum information through noisy channels, such as optical fibres or free-space links, is subjected. Traditionally, this problem has been analysed in terms of \emph{capacities}, which quantify the maximum amount of information that can be transmitted reliably. Of particular importance is the \emph{quantum capacity}, which measures the ability of a channel to faithfully transmit qubits, enabling core primitives such as the distribution of entanglement.

Despite its importance, the quantum capacity of generic channels remains poorly understood. The main challenge is that optimal communication involves encoding the information across multiple uses of the channel. Consequently, the capacity is defined in an information-theoretic sense as an asymptotic rate, i.e.\ it quantifies how the performance scales with the number of channel uses. Notably, this is already true in the purely classical case of transmitting classical information via a classical channel. However, as Shannon showed in his foundational work~\cite{Shannon_1948}, the capacity in this case is mathematically given by a \emph{single-letter} formula that involves optimising an entropic quantity\,---\,the mutual information\,---\,over a single use of the channel. In stark contrast, the (one-way assisted) quantum capacity \(\mathcal{Q}\) of a quantum channel \(\mathcal{N}\) is given by a \emph{regularised} formula, i.e.\ by an optimisation of an entropic quantity\,---\,the coherent information \( I_\mathrm{coh}\)\,---\,in the limit of arbitrarily many uses of the channel~\cite{Lloyd_1996, Shor_2002, Devetak_2005_2}:
\begin{equation}\label{Eq:Quantum_Capacity}
	\mathcal{Q}( \mathcal{N} ) := \lim_{n \to \infty} \frac{1}{n} \mathcal{Q}_n( \mathcal{N} )  \quad \text{with} \quad \mathcal{Q}_n( \mathcal{N} ) := \sup_{ \rho_n } I_\mathrm{coh} \left( \mathcal{N}^{\otimes n}, \rho_n \right) \, ,
\end{equation}
where \(\mathcal{Q}_n( \mathcal{N} )\) is the coherent information maximised for \(n\) uses of the channel \(\mathcal{N}\) over a joint input state \(\rho_n\). 

These regularised formulas are notoriously difficult to compute, both analytically and numerically, because the underlying entropic measures are typically not additive~\cite{Shor_2004, Hastings_2009}. Moreover, even deciding whether a channel has non-zero quantum capacity generally requires an unbounded number of channel uses~\cite{Cubitt_2014}. Because of these computational hurdles, we cannot calculate the quantum capacity even for very simple and physically relevant channels, e.g.\ thermal attenuators, which constitute the paradigmatic model for optical quantum communication.

Recently, a shift in perspective has gained traction in quantum information theory that may be summarised as \emph{quality over quantity}~\cite{Lami_2024_2, Ji_2024, Nuradha_2025, Girardi_2025, Girardi_2025_2, Hayashi_2025, Hayashi_2025_2}. Rather than asking "how much" information can be transmitted optimally, one may instead ask about the quality of the transmission. In fact, in many practical scenarios\,---\,such as distributing entanglement for cryptography\,---\,it is often more desirable to obtain less entanglement of higher quality than a lot of it of poorer quality (see Figure~\ref{Fig:Distribution of Entanglement}). The central question of this work may thus be phrased as: 

\begin{tcolorbox}[colback=yellow!15,colframe=yellow!15, boxrule=0pt, left=2pt, right=2pt, top=2pt, bottom=2pt]
\noindent
Can we give quantitative bounds on the quality of quantum communication and entanglement distribution over (very noisy) optical channels for which the quantity perspective has not led to strong capacity characterisations? 
\end{tcolorbox}

\begin{figure}[htbp]
 	\centering
    \includegraphics[width=\linewidth]{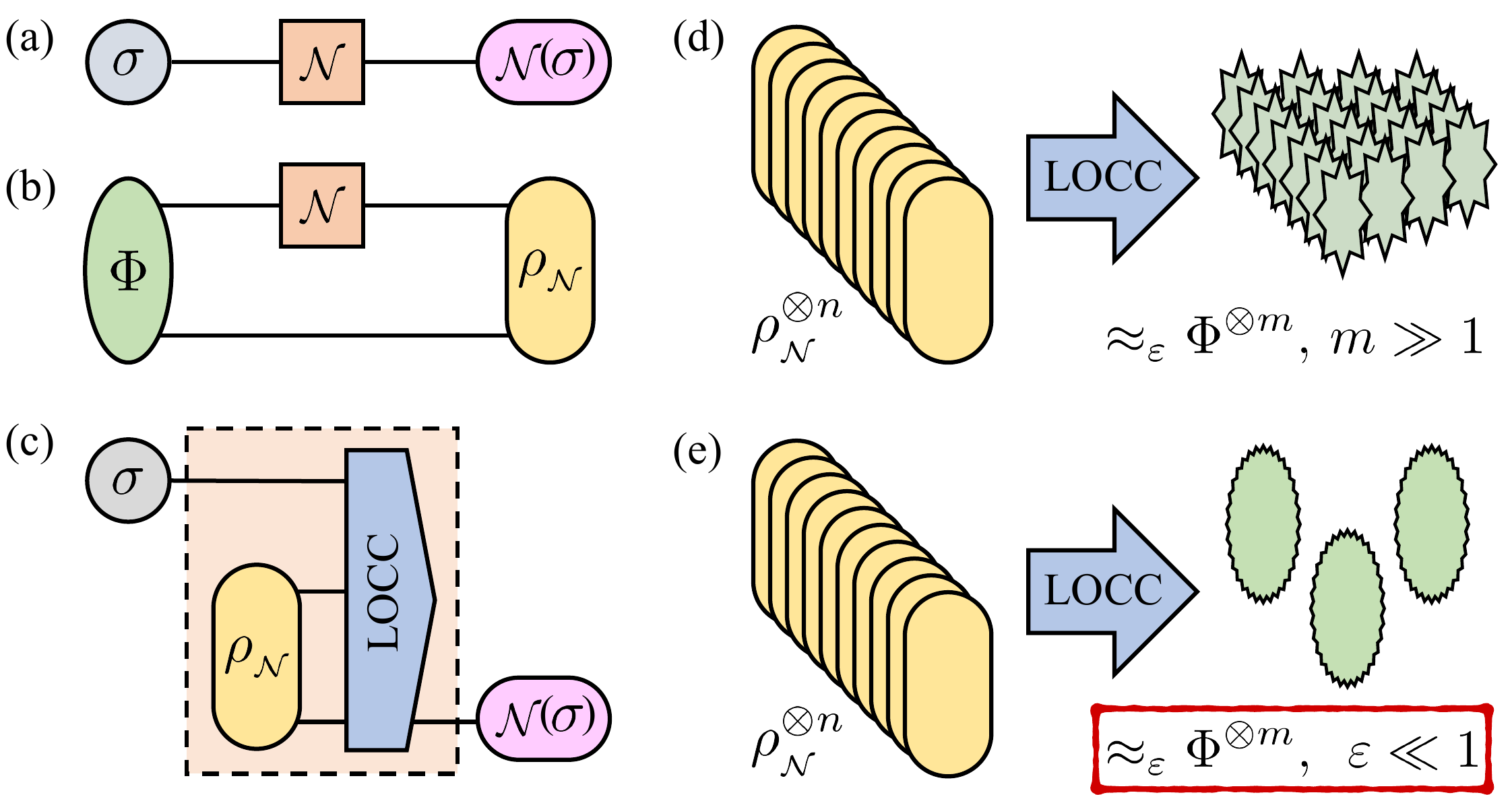}
    \caption{(a) A quantum channel \(\mathcal{N}\) mapping an input state \(\sigma\) to an output state \(\mathcal{N}(\sigma)\). (b) The Choi state \(\rho_{\mathcal{N}} = (\mathcal{N} \otimes I )(\Phi)\) of the channel \(\mathcal{N}\), where \(\Phi\) is a maximally entangled state. (c) The channel \(\mathcal{N}\) is said to be teleportation-simulable if its action on any input state \(\sigma\) can be reproduced by means of an LOCC protocol supplemented by the Choi state \(\rho_{\mathcal{N}}\) distributed between sender and receiver. The quantum communication scenario studied in this work can be recast in terms of how well one can distill \(m\) copies of approximately maximally entangled states \(\approx_{\varepsilon}\Phi^{\otimes m}\) from \(n \gg 1\) copies \(\rho^{\otimes n}_{\mathcal{N}}\) of the Choi state \(\rho_{\mathcal{N}}\) of a channel \(\mathcal{N}\), for which two paradigms can be considered:  (d) aiming for \emph{quantity}, i.e.\ maximising the number of output copies \(m\), or (e) aiming for \emph{quality}, i.e.\ maximising the exponent at which the error \(\varepsilon\) decays as \(n \rightarrow \infty\). This work investigates the latter scenario.}
    \label{Fig:Distribution of Entanglement}
\end{figure}

\paragraph{A Link to Entanglement Theory.} The answer to this question has important consequences beyond communication theory itself. In particular, the theory of entanglement measures is deeply linked with quantum communication. To make this connection explicit, it is instructive to consider the framework of (unlimited) two-way assistance, referred to in technical terms as \emph{adaptive local operations and classical communication} (adaptive LOCC). This represents the most general class of protocols that sender and receiver may employ to aid quantum communication without relying on pre-shared entanglement. For example, when distributing entanglement this is a natural assistance model, since classical communication is considered inexpensive in this context. The associated \emph{two-way assisted capacity} thus provides an important fundamental benchmark for quantum communication.

The two-way assisted capacity is substantially easier to analyse for the class of \emph{teleportation-simulable channels}\,---\,those whose action can be simulated by running the quantum teleportation protocol using their own Choi state as a resource (see Figure~\ref{Fig:Distribution of Entanglement} for a visual definition). Importantly, this encompasses all physically relevant channels in optical communication. For these channels, the two-way assisted quantum capacity coincides with the \emph{distillable entanglement} of the Choi state~\cite{Bennett_1996_3, Pirandola_2017}. The latter, in turn, characterises the asymptotic yield of \emph{entanglement distillation}, a fundamental primitive in entanglement theory, in which the goal is to convert noisy entangled states into pure, maximally entangled ones~\cite{Bennett_1996_1, Bennett_1996_2, Bennett_1996_3}. This establishes a rigorous and quantitative link between the theory of entanglement measures and quantum communication. 

Naturally, the same non-additivity problems that plague quantum communication theory also arise in the theory of entanglement measures~\cite{Vollbrecht_2001, Hayden_2001, Devetak_2005}. More precisely, the distillable entanglement is related to the \emph{regularised relative entropy of entanglement}~\cite{Vedral_1997, Vedral_1998,Brandao_2010_1}, defined via
\begin{equation}
	D^\infty(\rho\|\mathrm{SEP}) := \lim_{n \to \infty} \frac{1}{n} \inf_{\sigma_n \in \mathrm{SEP}}D \left( \rho^{\otimes n} \middle\| \sigma_n \right) \, ,
\end{equation}
i.e.\ as the asymptotic relative-entropy distance between the bipartite product state \(\rho^{\otimes n}\) and the set of states that is not entangled across the bipartite cut of the global system, denoted \(\mathrm{SEP}\). Crucially, the set \(\mathrm{SEP}\) lacks the product structure of \(\rho^{\otimes n}\), leading to a fundamental non-additivty that necessitates regularisation. The difficulty of evaluating this regularised formula remains one of the main bottlenecks to theoretical progress. 

However, shifting the perspective from \emph{quantity to quality} has already led to fundamentally new insights in entanglement theory. In a prior work~\cite{Lami_2024_2}, some of us discovered that the \emph{reverse relative entropy of entanglement}~\cite{Vedral_1997, Eisert_2003}, defined as
\begin{equation}
	D( \mathrm{SEP} \| \rho ) := \inf_{\sigma \in \mathrm{SEP} } D( \sigma \| \rho) \, ,
\end{equation}
 plays a central role in the theory of entanglement measures when adopting the quality framework. Specifically, this measure characterises the (zero-rate) \emph{error exponent} of entanglement distillation under \emph{non-entangling} operations. The latter class of operations is a useful relaxation of the traditional LOCC framework, whose study has already been fruitful in the past~\cite{Brandao_2010_1, Brandao_2010_2, Brandao_2015, Hayashi_2024, Lami_2024_1}. Strikingly, the reverse relative entropy of entanglement enjoys a highly desirable property\,---\,it is additive on product inputs, which makes its regularisation superfluous. Thus, precisely for the previously poorly understood mixed quantum states, the shift of perspective from the \emph{quantity} to the \emph{quality} of the distilled entanglement enables a single-letter characterisation of the asymptotic performance. This gives us a strong intuition that this entanglement measure should also play a similar role in the theory of quantum communication.


\begin{center}
    {\large \bfseries Results \par}
\end{center}

\paragraph{The Paradigm Shift.}
In following the above intuition, one faces three immediate problems: (1)~the previous work~\cite{Lami_2024_2} dealt with states and not channels, (2)~the proof is clearly limited to finite-dimensional systems and, most importantly, (3)~even if you could reprove the generalised Sanov theorem that underpins their main findings, the resulting expression would in general \emph{not} be efficiently computable, as it would involve an optimisation over infinite-dimensional separable states. Here, we solve all three of these problems, finding \emph{fundamental bounds on the quality} of entanglement distribution and quantum communication over the most optically relevant quantum channels. 

Our finding complements the celebrated work by Pirandola et al.~\cite{Pirandola_2017}, which established fundamental bounds on the \emph{quantity} of entanglement that can be distributed over the same class of optical channels (see also Bennett et al.~\cite{Bennett_1996_3}). The duality between our result and theirs is striking: while~\cite{Pirandola_2017} showed that, for teleportation-simulable channels, the standard relative entropy of entanglement yields an upper bound on the two-way assisted quantum capacity, we prove that the \emph{reverse} relative entropy of entanglement upper bounds the two-way assisted (zero-rate) \emph{error exponent}. Moreover, unlike in the case of~\cite{Pirandola_2017}, we can prove that the quantifiers we find have an exact \emph{operational interpretation} in entanglement distillation, when relaxing the LOCC framework to non-entangling operations. The experience with entanglement measures suggests that our single-letter bound on the error exponent may already be among the tightest possible, in the sense that going substantially beyond it will unavoidably run into the fundamental problem of NPT bound entanglement, which is currently beyond our understanding~\cite{Horodecki-open-problems} (see Remark \ref{Rem:Tight_Bound} in the Supplementary Material for more details). We substantiate this intuition by deriving explicit random-coding-based achievability bounds on the error exponent for the most important one-mode Gaussian channels, indicating that our upper bound is not far from optimal.

\paragraph{The Zero-Rate Error Exponent of Quantum Communication.} 

Consider two parties, Alice and Bob, connected by a noisy bosonic channel \(\mathcal{N}\), whose aim is to distribute entanglement between them. In order to achieve this task, they may employ the help of adaptive LOCC which we denote in the following by \(\mathrm{LOCC}_\leftrightarrow\) (see Figure~\ref{Fig:Distribution of Entanglement} for the setup of this problem). Observe that in this setup once they can reliably establish entanglement, they can also transmit arbitrary quantum information using the quantum teleportation protocol. 

More formally, the two parties will apply some protocol \(\Lambda_n\), that uses the the channel \(n\) times, and produces at the output a state \(\rho(\Lambda_n)\) that approximates \(m\) copies of the maximally entangled qubit state \(\Phi = \ketbra{\Phi}{\Phi}\) with \(\ket{\Phi} = \frac{1}{\sqrt{2}} \left(\ket{00} + \ket{11} \right)\). Denoting the approximation error as \(\varepsilon_n\), we can write this as \(\rho(\Lambda_n ) \approx_{\varepsilon_n} \Phi^{\otimes m} \), using a suitable measure of distance\,---\,either the trace distance or, equivalently, the infidelity. Although we allow for a finite error, we do require that it satisfies \(\lim_{n \to \infty} \varepsilon_n = 0\); that is, as the channel can be used more often, the quality of the distributed entanglement should increase, becoming perfect in the asymptotic limit. 

Traditionally, the analysis of this task was focussed on the \emph{quantity} of distributed entanglement. In this setting, the rate \(\frac{m}{n}\) of the protocol is considered the figure-of-merit and the goal is to find the largest asymptotic rate \( \lim_{n \to \infty} \frac{m}{n} \) that can be achieved among all feasible protocols. This then leads to the definition of the two-way assisted quantum capacity. In this work, we shift the perspective to the \emph{quality} of the obtained entanglement. That is, we require that \(\varepsilon_n \sim 2^{-c n}\) and characterise the optimal achievable error exponent \(c\). The \emph{error exponent of two-way assisted quantum communication} is then defined as the largest such exponent that can be achieved in the asymptotic limit\footnote{To be technically precise, this defines the \emph{zero-rate} error exponent in Shannon theory parlance.}
\begin{align}
    Q_{\leftrightarrow, \mathrm{err} }(\mathcal{N}) := \lim_{m \to \infty} \sup \bigg\{ \lim_{n \to \infty} - \frac{1}{n} \log \varepsilon_n : \rho(\Lambda_n ) \approx_{\varepsilon_n} \Phi^{\otimes m}, \; \Lambda_n \in \mathrm{LOCC}_{\leftrightarrow}(\mathcal{N}^{\times n}) \bigg\} \, ,
\end{align}
where we optimise over sequences of adaptive LOCC protocols that use the quantum channel \(n\) times. Observe that this definition no longer places any importance on the precise number of maximally entangled copies (provided that it can be made as large as desired), but only on the exponentially decreasing error. We then find the following \emph{single-letter} upper bound.

\begin{proposition2}\label{Prop:Main_1}
	For the class of teleportation-simulable channels, the reverse relative entropy of entanglement provides an upper bound on their (zero-rate) error exponent of two-way assisted quantum communication. Specifically, we have for a channel \(\mathcal{N}\) that acts on \(m\) bosonic modes that
\begin{equation}
	Q_{\leftrightarrow, \mathrm{err} }(\mathcal{N}) \leq \liminf_{r \to \infty} D( \mathrm{SEP} \| \rho_\mathcal{N}(r) ) =: \mathrm{UB}(\mathcal{N})\, ,
\end{equation}
where \(\mathrm{SEP}\) denotes the set of separable states and the quasi-Choi state, \( \rho_\mathcal{N}(r) \coloneqq (\mathcal{N} \otimes \mathcal{I})\big(\Phi(r)^{\otimes m}\big) \), is obtained by sending one half of the state \(\Phi(r)^{\otimes m}\), where $\Phi(r)$ is a two-mode squeezed vacuum state,
through the channel.
\end{proposition2}

In our proof, we start from the observation that for teleportation-simulable channels any adaptive LOCC protocol that employs $n$ uses of the channel can be reduced to an equivalent protocol that acts on $n$ copies of the associated Choi state (cf.~\cite{Bennett_1996_3} and~\cite{Pirandola_2017}). Our bound then follows by exploiting the specific properties of the reverse relative entropy of entanglement, most notably its additivity, after a suitable reformulation of the error exponent. The technical details can be found in Sec.\ \ref{Sec:Upper_Bound} and \ref{Sec:Upper_Bound_2} of the Supplementary Material.

\paragraph{Gaussian Reverse Relative Entropy of Entanglement.} 

Although the reverse relative entropy of entanglement does not require regularisation, that does not mean it is efficiently computable in general; in fact, the optimisation over separable states is NP-hard (see e.g.~\cite{Ioannou_2006} and references therein), and in infinite-dimensional systems one cannot even resort to SDP hierarchies (via extendibility) to soften this issue. Fortunately, we show that these problems evaporate in the case of $m$-mode bosonic Gaussian states. Specifically, we prove that the computation of the reverse relative entropy of entanglement reduces to a convex program over $2m$-dimensional quantum covariance matrices with two simple positive semi-definite constraints.  
\begin{proposition2}\label{Prop:Main_2}
The reverse relative entropy of entanglement of the Gaussian state \( \rho_G \) on the bipartite system \(A \otimes B\) with quantum covariance matrix \( \boldsymbol{V}_\rho\) can be computed via the \textbf{convex} program
\begin{align}
     \displaystyle \min_{\boldsymbol{V}_\sigma, \boldsymbol{\gamma}_{A} > 0} \quad & \frac{\tr{ \boldsymbol{V}_\sigma ( \boldsymbol{G}[\boldsymbol{V}_\rho] - \boldsymbol{G}[\boldsymbol{V}_\sigma] ) }}{2 \ln(2)} + \log \sqrt{  \frac{ \det \left( \boldsymbol{V}_\rho + i \boldsymbol{\Omega}_{AB} \right)}{\det \left( \boldsymbol{V}_\sigma + i \boldsymbol{\Omega}_{AB} \right)} } \\
     \mathrm{s.t.} \quad & \boldsymbol{V}_\sigma \geq \boldsymbol{\gamma}_A \oplus i \boldsymbol{\Omega}_B \nonumber \quad \mathrm{and} \quad \boldsymbol{\gamma}_A \geq i \boldsymbol{\Omega}_A \nonumber 
\end{align}
where \(\boldsymbol{\Omega}\) is the canonical symplectic form of the system, and \(\boldsymbol{G}[\boldsymbol{V}]\) is the Gibbs matrix associated with \(\boldsymbol{V}\).
\end{proposition2}

Our result has fundamental consequences in the theory of entanglement measures in continuous-variable systems. While Gaussian states are fundamental and ubiquitous, prior to our work it was unknown whether any single Gaussian entanglement measure could be at the same time (A)~operationally meaningful and (B)~efficiently computable. The negativity~\cite{Vidal_2002} and Gaussian entanglement of formation~\cite{Wolf_2004} are both efficiently computable, but they do not enjoy a strong operational interpretation in themselves. It is the actual (regularised) entanglement of formation that has an operational interpretation, but we do not know in general if it coincides with its Gaussian version (except in special cases~\cite{Giedke_2003, Akbari_2015}); and, crucially, we do not know how to regularise it. Here, we find:
	\begin{tcolorbox}[colback=yellow!15,colframe=yellow!15, boxrule=0pt, left=2pt, right=2pt, top=2pt, bottom=2pt]
		\noindent
		The reverse relative entropy of entanglement is -- to the best of the authors' knowledge -- the first \emph{operational} entanglement measure that is also \emph{efficiently computable} for Gaussian states.
	\end{tcolorbox}

The key to this finding is that, for Gaussian inputs, the regular and Gaussian reverse relative entropy of entanglement coincide. The main technical hurdle to prove this is to show that the optimisation can be restricted to separable states with finite second moments. We achieve this with an argument based on the variational formula for the measured relative entropy from~\cite{Ferrari_2023}. Once this has been established, a Gaussification argument allows us to restrict the optimisation further to Gaussian separable states. This is especially noteworthy, as it is unknown whether a similar restriction holds for the standard relative entropy of entanglement. The convex program follows by combining the efficient description for Gaussian separability from~\cite{Lami_2018_1} with the characterisation of the relative entropy in terms of statistical moments from~\cite{Pirandola_2017}. Convexity of the resulting program is established by lifting the convexity of the relative entropy on states via our Gaussification argument. The technical derivations are presented in  Sec.\ \ref{Sec:Gaussian_Reverse_REE} and \ref{Sec:Convex_Program} of the Supplementary Material.

\paragraph{Thermal Attenuator Channel.}

As a paradigmatic example from optical communication, we consider the \emph{thermal attenuator channel} \(\mathcal{N}_{\lambda,n}\), which serves as the predominant model for realistic optical quantum links. This is also the simplest non-trivial example because the error exponent of the (previously well understood) pure-loss channel diverges\,---\,as it should, because zero-error entanglement generation over a pure-loss channel is possible via dual-rail encoding. Moreover, it is also precisely this type of \emph{noisier} channel, for which the capacity bound of Pirandola et al.~\cite{Pirandola_2017} is no longer tight, making them harder to understand from the \emph{quantity} perspective on quantum communication.

Mathematically, the single-mode thermal attenuator channel is modelled by mixing the input mode at a beamsplitter of transmissivity \( \lambda \in [0,1]\) with an ancillary mode prepared in a Gaussian thermal state with vanishing first moments and covariance matrix \(n \boldsymbol{\sigma}_0\), where \(\boldsymbol{\sigma}_0\) denotes the \(2 \times 2\) identity matrix. This precisely corresponds to the Gibbs state of the free Hamiltonian associated with the thermal noise parameter \(n \geq 1\). 

We find the following upper bound on its (zero-rate) error exponent:
\begin{align}\label{Eq:Upper_Attenuator}
	\mathrm{UB} \left( \mathcal{N}_{\lambda, n} \right) &= \frac{ n_\mathrm{sep} \left(  \arcoth(n) - \arcoth(n_\mathrm{sep}) \right)  }{\ln(2)}  + \log_2\left( \sqrt{ \frac{n^2-1}{n_\mathrm{sep}^2-1} } \right)  \quad \mathrm{with} \quad n_\mathrm{sep}(\lambda) := \frac{1+\lambda}{1-\lambda}
\end{align}
for \( 1 \leq n \leq n_\mathrm{sep}(\lambda) \) and zero otherwise (see Sec.\ \ref{App:Example} for details). As expected, this diverges for \(n \to 1 \), i.e.\ in the case of the pure-loss channel. Additionally, we find that, aesthetically pleasing, the asymptotically closest separable state coincides with the asymptotic Choi state obtained by sending the \emph{other} half of the two-mode squeezed vacuum state through a thermal attenuator channel with the same transmissivity and thermal noise parameter given by \(n_\mathrm{sep}(\lambda)\). 

We go on and show that Eq.\ \ref{Eq:Upper_Attenuator} is not far from optimal by obtaining the first lower bounds on the (zero-rate) error exponent of this channel. We do this by lifting a novel achievability result for the error exponent of LOCC-assisted channel coding from \cite{Berta_2026} in the zero-rate limit to Gaussian channels (see Sec.\ \ref{App:Achievability} for details). We obtain the following lower bound on our exponent:
\begin{equation}
    \mathrm{LB}\left( \mathcal{N}_{\lambda, n}\right) = - \log_2\left((1-\lambda) \left( n + \sqrt{n^2-1} \right) \right) \quad  \text{for} \quad 1 \geq \lambda > 1 - \left( n + \sqrt{n^2-1} \right)^{-1} \, .
\end{equation}
We find that our lower bound becomes tight as the thermal noise parameter \(n\) increases (see Fig.\ \ref{Fig:Quality Bound} for a comparison of our lower and upper bounds). 

\paragraph{Thermal Amplifier Channel.} Another essential example for optical communication is the \emph{thermal amplifier channel} \(\mathcal{A}_{\eta, n}\), which models amplification in the presence of thermal noise. It can be described by the interaction with a thermal mode\,---\,with covariance matrix \(n \boldsymbol{\sigma}_0\), and null first moments\,---\,through a two-mode squeezing with gain \(\eta \geq 1\). 

We find that its error exponent is bounded qualitatively by the same expression as the thermal attenuator:
\begin{align}
	\mathrm{UB}(\mathcal{A}_{\eta,n}) = \frac{ n_\mathrm{sep} \left(  \arcoth(n) - \arcoth(n_\mathrm{sep}) \right)  }{\ln(2)}  + \log_2\left( \sqrt{ \frac{n^2-1}{n_\mathrm{sep}^2-1} } \right)  \quad \mathrm{with} \quad n_\mathrm{sep}(\eta) := \frac{\eta +1}{\eta-1}
\end{align}
for \( 1 \leq n \leq n_\mathrm{sep}(\eta) \) and zero otherwise (see Sec.\ \ref{App:Example} for details). As above, this diverges for \(n_\mathrm{th} \to 1\), which corresponds to the quantum limited amplifier. Moreover, we find that the error exponent is bounded from below by
	\begin{equation}
    	\mathrm{LB}\left( \mathcal{A}_{\eta, n}\right) = - \log_2\left(\frac{\eta-1}{\eta} \left( n + \sqrt{n^2-1} \right) \right) \quad  \text{for} \quad 1 \leq \eta < 1 + \left( n-1 + \sqrt{n^2-1} \right)^{-1}
	\end{equation}
(see Sec.\ \ref{App:Achievability} for details).

\paragraph{Additive-Noise Channel.} Perhaps the conceptually simplest model of decoherence in optical communication is given by the additive-noise Gaussian channel \(\mathcal{N}_\mu\). This can be seen as the action of random Gaussian displacement on the input mode and is modelled by adding the covariance matrix \(\mu \boldsymbol{\sigma}_0\), with \(\mu \geq 0\), to the input covariance matrix. 

We find that its error exponent is bounded above by 
\begin{equation}
	\mathrm{UB}(\mathcal{N}_\mu) = \frac{2 - \mu}{\mu \ln(2) } + \log_2\left( \frac{\mu}{2} \right)
\end{equation}
provided that \( 0 \leq \mu \leq 2 \) and zero otherwise. From below, we find
\begin{equation}
	\mathrm{LB}(\mathcal{N}_\mu) = - \log_2(2\mu) \quad \mathrm{for} \quad 0 \leq \mu \leq \frac{1}{2} \, .
\end{equation}
The technical derivations can be found in Sec.\ \ref{App:Example} and \ref{App:Achievability}.
\begin{figure}[htbp]
 	\centering
    \includegraphics[width=\linewidth]{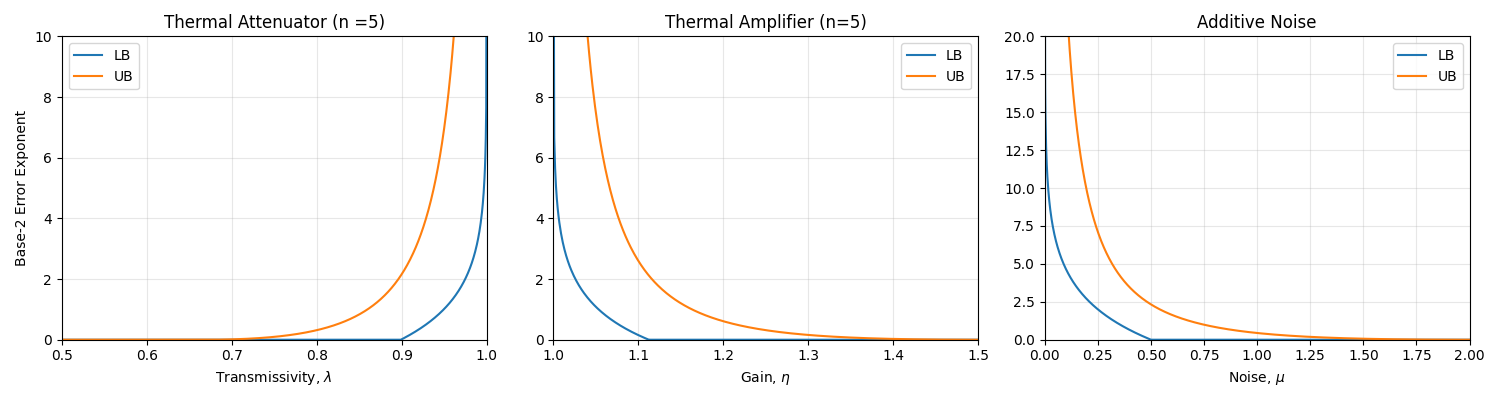}
    \caption{Comparison of our converse upper bound (UB, orange) with our achievability lower bound (LB, blue) for the thermal attenuator, thermal amplifier and Gaussian additive-noise channel, respectively.  For the former two channels, we fix the thermal noise parameter at \(n = 5 \). For the thermal attenuator, the bound shown is achieved by a forward-assisted protocol, whereas for the amplifier it is achieved via a backward-assisted protocol. For the additive-noise channel, both protocols yield the same bound. See Sec.\ \ref{App:Achievability} of the Supplementary material for details.}
    \label{Fig:Quality Bound}
\end{figure}

\paragraph{Operational Interpretations.} The reverse relative entropy of entanglement was endowed with an operational interpretation in~\cite{Lami_2024_2}, but this was limited to finite-dimensional systems only. In this work, we rigorously extend this interpretation to \emph{general} (separable) infinite-dimensional quantum systems, establishing the reverse relative entropy of entanglement as an operationally meaningful entanglement measure in this setting. 

We start by adopting the quality framework from the prior work \cite{Lami_2024_2} to analyse the task of entanglement distillation. Consider again Alice and Bob that now share \(n\) copies of a bipartite (infinite-dimensional) state \(\rho_{AB}\) with the aim to convert them into pure, maximal entanglement. To that end, they apply some protocol \(\Lambda_n\) such that, when acting on \(n\) copies of \(\rho_{AB}\), the final state approximates \(m\) copies of the maximally entangled state \(\Phi\). Regarding the set of feasible protocols, we follow \cite{Lami_2024_2} and relax the physically well-motivated but mathematically difficult LOCC framework to the class of non-entangling operations (NE)~\cite{Brandao_2010_1, Brandao_2010_2, Brandao_2015, Hayashi_2024, Lami_2024_1}\,---\,by definition, \(\Lambda_n(\sigma_n)\) must remain unentangled for all unentangled states \(\sigma_n\). 

Considering the quality of the distilled entanglement as the figure-of-merit, leads to the following definition for the (zero-rate) \emph{error exponent of entanglement distillation}:
\begin{equation}
    E_{d, \mathrm{err}}(\rho_{AB}) := \lim_{m \to \infty} \sup \bigg\{ \lim_{n \to \infty} - \frac{1}{n} \log \varepsilon_n : \Lambda_n( \rho^{\otimes n} ) \approx_{\varepsilon_n} \Phi^{\otimes m}, \; \Lambda_n \in \mathrm{NE} \bigg\} \, .
\end{equation}

We then show that, as in the finite-dimensional case, this exponent is fundamentally connected to a task from quantum state discrimination~\cite{Hiai_1991,Ogawa_2000, Hayashi} known as entanglement testing (see Sec.\ \ref{Sec:Error_Exponent_Distillation} for the technical details). In entanglement testing, the goal is to distinguish the given entangled state \(\rho_{AB}^{\otimes n}\) from the set of all separable states by performing a collective measurement on the global system. As usual, one can distinguish between two types of errors: the type-1 error occurs when mistaking \(\rho^{\otimes n}_{AB}\) for a separable state; conversely, the type-2 error occurs when mistaking a separable state for \(\rho^{\otimes n}_{AB}\). For a fixed type-2 error probability, the type-1 error decays exponentially and the asymptotically optimal error exponent is referred to as the \emph{Sanov exponent}, denoted \( \mathrm{Sanov}(\rho_{AB}\| \mathcal{S}_{A:B}) \).

As our main operational interpretation, we then prove that the latter exponent coincides exactly with the reverse relative entropy of entanglement, giving the solution to the \emph{generalised quantum Sanov theorem}~\cite{Bjelakovic_2005, Noetzel_2014,Hayashi_2025_2, Lami_2024_2} of entanglement testing (see Sec.\ \ref{Sec:Sanov} for the full argument). To summarise, we establish for (separable) infinite-dimensional systems the equalities:
\begin{align}
	E_{d, \mathrm{err}}(\rho_{AB}) = \mathrm{Sanov}(\rho_{AB}\| \mathcal{S}_{A:B})  = D(\mathcal{S}_{A:B} \|\rho_{AB}) \, .
\end{align}

It is noteworthy that the proof does \emph{not} follow from a standard truncation argument. Indeed, even if we truncate the space so that the tails of $\rho_{AB}$ are eliminated, there is no a priori guarantee that the separable state we have to discriminate it from will lie in the same truncated space: the technical difficulty of our result lies in proving precisely this. Once that is done, our lifting procedure crucially relies on a semi-continuity property of the reverse relative entropy of entanglement, which we establish with a particular choice of operator topology.

\begin{center}
    {\large \bfseries Outlook \par}
\end{center}

In this work, we have studied quantum communication from a quality perspective, establishing a fundamental bound on the error exponent of two-way assisted quantum communication. This bound parallels the bound of Pirandola et.\ al.~\cite{Pirandola_2017} (see also Bennett et.\ al.~\cite[Section~5]{Bennett_1996_3}) for the two-way assisted capacity, which turns out to be very tight even for unassisted communication in certain regimes. Moreover, the derivation of the single-letter capacity bound follows by a generally non-tight relaxation of the regularised relative entropy of entanglement to its single-letter version. Crucially, this relaxation step does not occur in our analysis due to the additivity of the reversed entanglement measure. Additionally, we have provided an analysis of the achievability of our exponent using a random-coding result based on the Choi state of the channel. This could potentially be improved upon and makes a thorough investigation of the achievability of our bound a promising avenue for future work.

Additionally, in the Gaussian case, we showed that our bound can be computed via a finite-dimensional convex program with two positive semi-definite constraints. As such, it can straightforwardly be solved with off-the-shelf solvers based on interior-point methods, which are well-known to be efficient in praxis \cite{Boyd}. However, it might still be desirable to establish a precise complexity-theoretic result for the efficiency of this program. For this, one would (after rewriting it into a standard conic program using the epigraph formulation) need to construct a self-concordant logarithmic barrier function of the resulting convex cone. This would then give provable convergence guarantees using the barrier method, i.e.\ polynomial-time solvability in the number of modes. A possible path to achieve this would be to adapt the specialised literature on relative entropy optimisation such as~\cite{Fawzi_2023} to this special case (see also Remark \ref{Rem:Efficiency} in the Supplementary Material). 

Another interesting direction for future work is to extend our analysis to general quantum resource theories \cite{Chitambar_2019}. Under certain assumptions, the reverse relative entropy of resource attains an operational interpretation in the task of resource testing (see the Supplementary Material and \cite{Lami_2024_2} for more details). Thus, another interesting open question is whether the resulting measure is also efficiently computable for other Gaussian resource theories, such as non-classicality or, more generally, \(\lambda\)-negativity.

\paragraph{Acknowledgments.} The authors thank Bartosz Regula for discussions in the early stages of the project. TR thanks Julius Zeiss for insightful discussions at numerous stages of this work. MB and TR acknowledge support from the Excellence Cluster - Matter and Light for Quantum Computing (ML4Q-2) and funding by the European Research Council (ERC Grant Agreement No. 948139). G.A. acknowledges funding from the UK Engineering and Physical Sciences Research Council (EPSRC Grant No. EP/X010929/1). LL acknowledges financial support from the European Union (ERC StG ETQO, Grant Agreement no.\ 101165230). Views and opinions expressed are however those of the authors only and do not necessarily reflect those of the European Union or the European Research Council. Neither the European Union nor the granting authority can be held responsible for them.

\printbibliography


\begin{center}
	\vspace{0.5cm}
    {\LARGE \bfseries Supplementary Material \par}
    \vspace{0.5cm}
\end{center}

\setcounter{section}{0}

\numberwithin{equation}{section}
\setcounter{equation}{0}


\section{Notation and Preliminaries}

Let us start with the relevant mathematical preliminaries. We refer the reader to Holevo's textbook~\cite{Holevo} for a detailed treatment of general quantum information theory and Serafini's textbook~\cite{Serafini} as well as the review article by Weedbrook et al.~\cite{Weedbrook_2012} for details on the Gaussian theory. 

Throughout this work, we consider quantum systems that are represented mathematically by a separable (infinite-dimensional) Hilbert space \(\mathcal{H}\). Two important sets of linear operators on \(\mathcal{H}\) are the set of bounded operators \(\mathcal{B}(\mathcal{H})\) and the set of trace-class operators \(\mathcal{T}(\mathcal{H})\). The former consists of all operators with finite operator norm \(\norm{\cdot}_\infty \) and the latter of those with finite trace norm \(\norm{\cdot}_1 \).\footnote{The trace norm of the operator \(X\) is defined via \(\norm{X}_1 := \tr{ \sqrt{X X^\dagger} }\).} Equipping each set with the respective norm makes them Banach spaces, and we have the duality \(\mathcal{T}(\mathcal{H})^\star= \mathcal{B}(\mathcal{H})\) on the level of Banach spaces. 

An operator \(X\) is called positive semi-definite (PSD), denoted \(X\geq 0\), if \( \bra{\psi} X \ket{\psi} \geq 0 \) for all \(\ket{\psi} \in \mathcal{H}\). If the inequality is strict for all nonzero \(\ket{\psi}\), the operator is referred to as positive definite, denoted as \(X > 0\). The set of positive semi-definite bounded operators \( \mathcal{B}_+(\mathcal{H}) \) forms a convex cone that induces a partial order on \(\mathcal{B}(\mathcal{H})\): for \(X, Y \in \mathcal{B}(\mathcal{H})\), we write \( X \leq Y \) if and only if \( Y - X \in \mathcal{B}_+(\mathcal{H})\). Analogously, we will denote by \(\mathcal{T}_+(\mathcal{H})\) the cone of positive semi-definite trace-class operators.

A \emph{quantum state} is represented by a density operator on \(\mathcal{H}\), i.e.\ a positive semi-definite operator that is normalised (trace-class with unit trace). The convex set of states on \(\mathcal{H}\) will be denoted  by
\begin{equation}
	\mathcal{D}(\mathcal{H}) := \bigg\{ X \in \mathcal{T}(\mathcal{H}) : X \geq 0, \tr{X} = 1 \bigg\} \, .
\end{equation}
If the density operator is a positive definite operator, i.e.\ \(\rho > 0 \), we will also refer to it as \emph{faithful}. 

Given \(N\) quantum systems with \(N \in \mathbb{N}_+\) each associated with a separable Hilbert space \(\mathcal{H}_i\), the composite system is represented by the tensor product Hilbert space \(\mathcal{H} =  \bigotimes_{i=1}^N\mathcal{H}_i\). A fundamental phenomenon that appears in these multipartite systems is entanglement. Considering a bipartite quantum system \(\mathcal{H}_A \otimes \mathcal{H}_B\), we denote with \(\mathcal{S} = \mathcal{S}_{A:B}\) the set of \emph{separable} 
 states defined as
\begin{equation}\label{Eq:Separable}
	\mathcal{S}(\mathcal{H}) := \text{cl}_\text{tn} \bigg( \text{conv} \bigg\{ \ketbra{\psi}{\psi}_A \otimes \ketbra{\phi}{\phi}_B : \ket{\psi}_A \in \mathcal{H}_A, \ket{\phi}_B \in \mathcal{H}_B, \braket{\psi}{\psi}_A =  1 =\braket{\phi}{\phi}_B \bigg\} \bigg) \, ,
\end{equation}
i.e.\ the closed (w.r.t.\ the trace norm topology) convex hull of the set of pure product states~\cite{Werner_1989}. Entanglement is then defined by negation, i.e.\ a state is called \emph{entangled} if it is not separable.

A \emph{quantum channel} is a bounded linear map \(\Lambda : \mathcal{T}(\mathcal{H}_1) \mapsto \mathcal{T}(\mathcal{H}_2) \) that is completely positive and trace-preserving (CPTP). Here, complete positivity means that \(\Lambda \otimes \mathcal{I}_{\mathcal{H}^\prime} \)  maps positive semi-definite operators on \(\mathcal{H}_1 \otimes \mathcal{H}^\prime \) to positive semi-definite operators on \( \mathcal{H}_2 \otimes \mathcal{H}^\prime \), for all auxiliary Hilbert spaces \(\mathcal{H}^\prime\). Trace preservation in turn means that \( \tr{\Lambda(X) } = \tr{X}\) for all inputs \(X\). The set of all such operations will be denoted \(\mathrm{CPTP}(\mathcal{H}_1 \to \mathcal{H}_2)\). Moreover, we define the adjoint \(\Lambda^\dagger\) of the channel \(\Lambda\) as the linear map \(\Lambda^\dagger : \mathcal{B}(\mathcal{H}_2) \mapsto \mathcal{B}(\mathcal{H}_1) \) that satisfies \( \tr{ X_1 \Lambda^\dagger(Y_2) } = \tr{ \Lambda(X_1) Y_2 } \) for all \(X_1 \in \mathcal{T}(\mathcal{H}_1) \) and \( Y_2 \in \mathcal{B}( \mathcal{H}_2  )\). 

An important subclass of channels are \emph{quantum measurements}. Mathematically, they are represented by a Positive Operator-Valued Measures (POVM). For a measurement with finite number of outcomes, the associated POVM is a finite collection of positive semi-definite bounded operators \(M_i \geq 0\) that form a resolution of the identity, i.e.\ \( \sum_i M_i = \Id_{\mathcal{H}} \) with \(\Id_{\mathcal{H}}\) the identity operator on \(\mathcal{H}\).
 
\begin{remark}
	We will usually denote Hilbert spaces by the capital letters \(A, B, C\) etc.\ and use subscripts to denote which space an operator acts on. However, to simplify the notation, we will drop the explicit reference to the Hilbert space whenever it is clear from context. 
\end{remark}

We now specialise to bosonic continuous-variable (CV) systems with a finite number of modes. A single bosonic mode is mathematically represented by a pair of self-adjoint quadrature operators \(( \hat{x} ,\hat{p})\) that satisfy the \emph{canonical commutation relations} (CCR), 
	\begin{equation}
		\com{\hat{x} }{ \hat{p}} = i  \, ,
	\end{equation}
	where we used natural units with \(\hbar = 1\). The Hilbert space of each bosonic mode is separable but necessarily infinite-dimensional. Its canonical basis is given by the Fock states \( \{ \ket{n} \}_{n=0}^{\infty}\), which are the eigenvectors of the associated number operator \( \hat{N} = \hat{x}^2 + \hat{p}^2 - \frac{1}{2} \).
	
	An \(m\)-mode CV quantum system with \(m \in \mathbb{N}_+\) is associated with the tensor-product Hilbert space \(\mathcal{H} =  \bigotimes_{i=1}^m\mathcal{H}_i\), where \(\mathcal{H}_i\) denotes the infinite-dimensional separable Hilbert space of the \(i\)-th mode. Using the vector notation \( \hat{\boldsymbol{r}} := (\hat{x}_1, \hat{p}_1, ... , \hat{x}_m, \hat{p}_m)^T \), we can express the CCR compactly as \( \com{ \hat{\boldsymbol{r}} }{ \hat{\boldsymbol{r}}^T } = i \boldsymbol{\Omega} \). Here, the commutator on the left side denotes a \(2m \times 2m\) matrix with entries \( \com{ \hat{r}_i }{ \hat{r}_j } \), and on the right side we have the symplectic form
    \begin{align}
        \boldsymbol{\Omega} &:= \bigoplus_{j=1}^{m} \boldsymbol{\Omega}_1 & & \text{with} &\boldsymbol{\Omega}_1 &= \begin{bmatrix} 0 & 1 \\ -1 & 0 \end{bmatrix} \, .
    \end{align}
    
 \begin{remark}
 	Throughout, we use bold symbols to denote finite-dimensional vectors and matrices. 
 \end{remark}

The most important class of CV quantum states are the so-called \emph{Gaussian} states. These are the Gibbs states of quadratic Hamiltonians in the quadrature operators and completely characterised by their first and second statistical moments. A Gaussian state \( \rho_G := \rho_G[ \boldsymbol{\mu},\boldsymbol{V} ]\) is then uniquely specified by its displacement vector \( \boldsymbol{\mu} \), defined via \( \mu_j := \tr{ \rho \hat{r}_j } \), and \emph{covariance matrix} \( \boldsymbol{V}\) with entries
	\begin{equation}
		V_{j,k} := \tr{ \rho \anticom{\hat{r}_j - \mu_j}{  \hat{r}_k - \mu_k} } \, ,
	\end{equation}
	where \( \{ \cdot , \cdot \}\) denotes the anti-commutator.\footnote{Note that different conventions for the covariance matrix are used in the literature. We follow here the convention used in Serafini's textbook~\cite{Serafini}.}
	
The covariance matrix is a \(2m \times 2m\), real and symmetric matrix which must satisfy the Heisenberg uncertainty principle	\begin{equation}\label{Eq:Bona_Fide}
		\boldsymbol{V} + i \boldsymbol{\Omega} \geq 0 \, .
	\end{equation}
	The latter ensures that \(\boldsymbol{V}\) is a \emph{bona fide} quantum covariance matrix. Note that Eq.~\eqref{Eq:Bona_Fide} implies in particular that the covariance matrix must be positive definite. By Williamson's theorem~\cite{Williamson_1936}, any positive definite \(\boldsymbol{V}\) admits a \emph{Williamson decomposition} \( \boldsymbol{V} = \boldsymbol{S} \boldsymbol{D} \boldsymbol{S}^T\), where \(\boldsymbol{S} \in \mathrm{Sp}(2m)\) is a symplectic matrix and \( \boldsymbol{D} = \mathrm{diag}(\nu_1, \nu_1, ..., \nu_m, \nu_m)\) contains the symplectic eigenvalues. Here, the symplectic group \( \mathrm{Sp}(2m) \)\footnote{We refer the interested reader to~\cite{Arvind_1995} for a review of the symplectic group in physics.} is defined as the set of transformations that preserve \(\boldsymbol{\Omega}\) by congruence, i.e.\
	\begin{equation}
		\boldsymbol{S} \in \mathrm{Sp}(2m) \iff \boldsymbol{S} \boldsymbol{\Omega} \boldsymbol{S}^T = \boldsymbol{\Omega} \, ,
	\end{equation}
	and the symplectic spectrum \(\{ \nu_i\}_{i=1}^m\) is given by the standard eigenspectrum of \(|i \boldsymbol{\Omega} \boldsymbol{V} |\).

In the following, we will denote the set of Gaussian states on \(\mathcal{H}\) by \(\mathcal{G}(\mathcal{H})\). According to \cite[Appendix A]{Banchi_2015} (see also~\cite{Chen_2005,Krueger, Holevo_2010}), we can express the density operator of an arbitrary Gaussian state in Gibbs-type exponential form as
\begin{equation}\label{Eq:Gibbs_Form}
    \rho_G [\boldsymbol{\mu}, \boldsymbol{V}] = \frac{1}{Z[\boldsymbol{V}]} \cdot \exp \bigg[- ( \boldsymbol{\hat{r}} - \boldsymbol{\mu})^T \boldsymbol{G}[\boldsymbol{V}]( \boldsymbol{\hat{r}} - \boldsymbol{\mu})  \bigg]
\end{equation}
with the Gibbs matrix \( \boldsymbol{G}[\boldsymbol{V}] =  i \boldsymbol{\Omega} \arcoth(  \boldsymbol{V} i \boldsymbol{\Omega} ) \) and normalisation factor \( Z[\boldsymbol{V}] =  \sqrt{ \det \left( \frac{1}{2} \left( \boldsymbol{V} + i \boldsymbol{\Omega} \right)  \right) }\). 

\begin{remark}
	We use \(\exp(\cdot) \) to denote the inverse of the natural logarithm \(\ln(\cdot)\).
\end{remark}

Due to the simple description of Gaussian states in terms of their statistical moments, a large and important class of quantum channels are similarly easy to describe. By definition, a \emph{Gaussian} channel is a channel that maps Gaussian states to Gaussian states. Its action on Gaussian states is completely characterised by two \(2m \times 2m \) real matrices \(\boldsymbol{X}\) and \(\boldsymbol{Y}\) with \(\boldsymbol{Y} = \boldsymbol{Y}^T\), which act on the statistical moments of the state as
\begin{align}
	\boldsymbol{\mu} &\mapsto \boldsymbol{X} \boldsymbol{\mu} && \text{and} & \boldsymbol{V} \mapsto \boldsymbol{X} \boldsymbol{V} \boldsymbol{X}^T + \boldsymbol{Y} \, .
\end{align}

The \emph{Choi-Jamiolkowski isomorphism} is a well-known bijective mapping between quantum states and quantum channels. For finite-dimensional systems \(A\), it is defined via the canonical maximally entangled state \(\Phi_d = \frac{1}{d} \sum_{i,j=1}^d \ketbra{i}{j}_A \otimes \ketbra{i}{j}_{A^\prime} \), where \(d = \mathrm{dim}(A)\) and \(A^\prime \simeq A\). The infinite-dimensional analogue of \(\Phi_d\) is not normalisable and thus not a valid quantum state. However, one can approach it by a limit of normalisable Gaussian states~\cite{Giedke_2002} (see also~\cite{Holevo_2011} for an alternative definition). Following an operational approach, we consider the two-mode squeezed vacuum state (TMSV) given by
	\begin{equation}
		\ket{\Phi (r)  } := \frac{1}{\cosh (r) } \sum_{n=0}^\infty \tanh^n(r)\ket{ n }_A \otimes \ket{ n }_{A^\prime}
	\end{equation}
with the squeezing parameter \(r \in [0, \infty)\). In the limit \(r \to \infty\), the state \( \ket{\Psi(r)}\) tends to an evenly weighted superposition of tensor products of Fock states and hence approaches the canonical maximally entangled state on \(A \otimes A^\prime\). With this, we then define the \emph{quasi-Choi state} of the CPTP-map \( \mathcal{N}\) that acts on \(m\) bosonic modes via
\begin{equation}\label{Def:Choi}
	\rho_{\mathcal{N}}(r) := ( \mathcal{N}_{A \to B} \otimes \mathcal{I}_{A^\prime} ) ( \ketbra{\Phi(r)}{\Phi(r)}^{\otimes m} ) \, ,
\end{equation}
where \(\mathcal{I}\) denotes the identity channel. 

Given a Gaussian channel \(\mathcal{N} := \mathcal{N}(\boldsymbol{X}, \boldsymbol{Y}) \), its quasi-Choi state \(\rho_{\mathcal{N}}(r)\) is a Gaussian state with covariance matrix
\begin{align}
	\boldsymbol{V}_{\mathcal{N} }(r) & = \begin{pmatrix}
		\cosh(2r)  \boldsymbol{X} \boldsymbol{X}^T + \boldsymbol{Y} & \sinh(2r) \boldsymbol{X} \boldsymbol{\Sigma}_3 \\ \sinh(2r) \boldsymbol{\Sigma}_3 \boldsymbol{X}^T & \cosh(2r) \boldsymbol{\Sigma}_0
	\end{pmatrix} && \text{with} &\boldsymbol{\Sigma}_i &:= \bigoplus_{j=1}^m \boldsymbol{\sigma}_i \, .
\end{align}
Here, we introduced the Pauli sigma matrices, defined as
 	\begin{align}
 		\boldsymbol{\sigma}_0 &:= \begin{pmatrix} 1 \, , & 0 \\ 0 &  1 \end{pmatrix}\, ,& \boldsymbol{\sigma}_1 &:= \begin{pmatrix} 0 & 1 \\ 1 &  0 \end{pmatrix} \, , & \boldsymbol{\sigma}_2 &:= \begin{pmatrix} 0 & -i \\ i &  0 \end{pmatrix} \, , && \text{and} & \boldsymbol{\sigma}_3 &:= \begin{pmatrix} 1 & 0 \\ 0 &  -1 \end{pmatrix} \, .
 	\end{align}

\section{Infinite-Dimensional Quantum Systems}

We start the technical analysis with general infinite-dimensional quantum systems (of type I). In what follows, our main goal is to formally establish the reverse relative entropy of entanglement as an entanglement measure with a precise operational interpretation in the context of entanglement distillation. This lifts the operational interpretation established in the finite-dimensional work~\cite{Lami_2024_2}.

\subsection{Reverse Relative Entropy of Entanglement}\label{Sec:Reverse_REE}

In the literature, a plethora of entanglement measures have been studied (see e.g.~\cite{Horodecki_2001, Vidal_2009} for reviews). A prominent subclass is formed by measures derived from a distance function on state space; here, the state's distance to the set of separable states (w.r.t.\ the chosen distance function) is used as a quantifier of the state's entanglement. In this work, we investigate one such measure derived from the relative entropy function. 

Given two positive semi-definite trace class operators \( X, Y \in \mathcal{T}_+(\mathcal{H}) \) -- with spectral decomposition \(X = \sum_i x_i \ketbra{e_i}{e_i}\) and \(Y = \sum_j y_j \ketbra{f_j}{f_j} \) -- their \emph{relative entropy}~\cite{Umegaki_1962,Lindblad_1973} is defined as
 \begin{align}
	D(X\|Y) &:= \tr{ X ( \log X - \log Y) + Y - X}  \\
					 &:= \sum_{i,j} |\braket{e_i}{f_j} |^2 \left( x_i \log x_i - x_i \log y_i + y_i - x_i \right) \label{Eq:Relative_Entropy}\, .
\end{align} 

\begin{remark}
	We will use \(\log(\cdot) \) to denote the logarithm to base two, corresponding to the canonical choice of measuring (quantum) information in (qu-)bits.
\end{remark}

Hereby, the expression in the first line is to be understood as specified by the second line. As explained in \cite[Section 2]{Lindblad_1973}, the convexity of \(x\log x\) ensures that all summands in Eq.~\eqref{Eq:Relative_Entropy} are positive.\footnote{Explicitly, one uses that a convex differentiable function satisfies \(f(y) \geq f(x) + f^\prime (x) \cdot (y-x)\) for all \(x,y\) in its domain.} Hence, the sum is well-defined (albeit possibly infinite) and the order of summation is irrelevant. By convention, we set \(D(0\|0) = 0\) and \(D(X\|0) = + \infty\) if \(X \not = 0\). It is then evident from Eq.~\ref{Eq:Relative_Entropy} that a necessary condition for \(D(X\|Y) < \infty\) is that \(X\ll Y\), i.e.\ the support of the first argument is contained in the support of the second.\footnote{Note that only in the case of a finite-dimensional Hilbert space, this is also sufficient.}

Let us also briefly mention when we can simplify Eq.~\eqref{Eq:Relative_Entropy} into a more familiar form. For this, we define the (von-Neumann) \emph{entropy} of \(X\) via
\begin{equation}
	H(X) := - \tr{X \log X} :=\sum_i -x_i \log x_i \, .
\end{equation}
Note that \(- x \log x\) is non-negative for all \(x \in [0,1]\) and, as \(X\) is trace-class, only finitely many of its eigenvalues can lie outside this interval; thus, the sum is well-defined (albeit possibly infinite). Provided that  \(H(X) < \infty \), Eq.~\eqref{Eq:Relative_Entropy} can then be reduced to the well-known expression
\begin{equation}
	D(X\| Y)  = - H(X) - \tr{X \log Y} + \tr{Y} -\tr{ X}
\end{equation}
(see \cite[Section II.B]{Lami_2023} for more details). Under the assumption of finite entropy, the above expression is well-defined as the only term that can possibly diverge is the second, which is to be understood as the series 
 \begin{equation}
 	- \tr{X \log Y} := - \sum_{i,j} | \braket{e_i}{f_j} |^2  x_i \log y_i \, .
 \end{equation}

The quantum Stein lemma~\cite{Hiai_1991, Ogawa_2000} endows the relative entropy with an operational interpretation in asymmetric quantum hypothesis testing. This allows us to interpret it as a statistical distance measure on the set of quantum states. However, importantly, it is not a distance function in the strict mathematical sense, as it is not symmetric in general. This asymmetry gives rise to two possible entanglement measures, depending on whether the optimisation over the separable set is carried out in the first or second argument. The standard choice is the second argument, resulting in the well-known \emph{relative entropy of entanglement}~\cite{Vedral_1997, Vedral_1998}. Here, we focus instead on the other case, which -- using the terminology of~\cite{Vedral_1997, Eisert_2003} -- yields the \emph{reverse relative entropy of entanglement}.
\begin{definition}\label{Def:Reverse_REE}
	Let \( \mathcal{H}_{AB} = \mathcal{H}_A \otimes \mathcal{H}_B\) be a bipartite separable (possibly infinite-dimensional) Hilbert space and \( \mathcal{S} = \mathcal{S}_{A:B}\) be the set of separable states on \(\mathcal{H}_{AB}\). The reverse relative entropy of entanglement of the state \(\rho \in \mathcal{D}\) is then defined as 
\begin{equation}
    D( \mathcal{S} \| \rho) := \inf_{\sigma \in \mathcal{S} } D(\sigma\|\rho) \, .
\end{equation}
\end{definition}

\begin{remark}
	Note that this definition can be extended straightforwardly to the framework of general quantum resource theories (see~\cite{Chitambar_2019} for a review). Here, one defines a set of \emph{free states} \(\mathcal{F} \subseteq \mathcal{D} \) that are said to contain no resource in the context of the specific theory. The \emph{reverse relative entropy of resource} \(D(\mathcal{F}\|\rho) \) of the state \(\rho\) is then analogously defined as the relative entropy distance to the set of free states when optimising w.r.t.\ the first argument. 	
\end{remark}

It is straightforward to show that the reverse relative entropy of entanglement is a measure of entanglement in the resource-theoretic sense of~\cite{Chitambar_2019}. We collect its properties in the following lemma using the language of general quantum resource theories. Note that the main computational advantage of this measure is that it is additive, therefore eliminating the need for regularisation. Additionally, we obtain a semi-continuity result that is similar to \cite[Theorem 5]{Lami_2023} and will be a key ingredient in later proofs. 
\begin{lemma}\label{Lem:Reverse_REE_Properties}
		Let \( \mathcal{H}\) be a separable (infinite-dimensional) Hilbert space and \(\mathcal{F} \subseteq \mathcal{D}\) a trace-norm closed set of free states. Then, the reverse relative entropy of resource \( D( \mathcal{F} \| \rho)  \) is a valid resource monotone. That is, the functional is
	\begin{enumerate}
		\item Faithful, i.e. \( D( \mathcal{F} \| \rho) \geq 0 \) and \( D (\mathcal{F} \| \rho) = 0 \) if and only if \( \rho \in \mathcal{F}\).
		\item Monotone under free operations, i.e.\ \( D( \mathcal{F} \|  \Lambda(\rho ) ) \leq D ( \mathcal{F} \| \rho ) \) for any channel \(\Lambda \) such that \( \Lambda(\rho) \in \mathcal{F}\) for all \(\rho \in \mathcal{F}\).
	\end{enumerate}
	
	Moreover, the map \( \rho \mapsto D( \mathcal{F} \| \rho)  \)  is
	\begin{enumerate}
		\setcounter{enumi}{2}
		\item Convex if \(\mathcal{F}\) is closed under convex combinations.
		\item Additive on tensor products if \( \mathcal{F} \) is closed under tensor products and partial traces. 
		\item Lower semicontinuous in the trace norm topology if \(\mathrm{cone}(\mathcal{F}) := \{ \lambda \sigma : \lambda \in [0,\infty), \sigma \in \mathcal{F} \} \) is weak\(^\star\)-closed. Moreover, in this case there always exists \(\sigma_\star \in \mathcal{F}\) such that \( D( \sigma_\star \| \rho ) = D(\mathcal{F}\|\rho) \).
	\end{enumerate}
	
	In particular, all the above properties hold when specialising to \(\mathcal{F} = \mathcal{S}\) (see \cite[Lemma 25]{Lami_2021} for the closure of  \(\mathrm{cone}(\mathcal{S})\) in the weak\(^\star\)-topology).
\end{lemma}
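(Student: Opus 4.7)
Items 1--4 are inherited from the corresponding properties of the Umegaki relative entropy $D$. For (1), non-negativity of $D(\mathcal{F}\|\rho)$ is immediate; the implication $\rho\in\mathcal{F}\Rightarrow D(\mathcal{F}\|\rho)=0$ follows from $D(\rho\|\rho)=0$, and the converse uses a minimising sequence together with the quantum Pinsker inequality and the trace-norm closedness of $\mathcal{F}$. Item (2) is because, for any feasible $\sigma\in\mathcal{F}$, we have $\Lambda(\sigma)\in\mathcal{F}$ by assumption, so the data-processing inequality gives $D(\mathcal{F}\|\Lambda(\rho))\leq D(\Lambda(\sigma)\|\Lambda(\rho))\leq D(\sigma\|\rho)$; taking the infimum over $\sigma$ yields the claim. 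Convexity (3) is inherited directly from the joint convexity of $D$ and the assumed convex closure of $\mathcal{F}$. For additivity (4), the inequality $D(\mathcal{F}\|\rho_A\otimes\rho_B)\leq D(\mathcal{F}\|\rho_A)+D(\mathcal{F}\|\rho_B)$ follows from the product ansatz $\sigma_A\otimes\sigma_B\in\mathcal{F}$, valid by tensor closure, while the reverse inequality uses the identity $D(\sigma_{AB}\|\rho_A\otimes\rho_B)=D(\sigma_{AB}\|\sigma_A\otimes\sigma_B)+D(\sigma_A\|\rho_A)+D(\sigma_B\|\rho_B)$, non-negativity of the mutual information term, and closure of $\mathcal{F}$ under partial traces so that $\sigma_A,\sigma_B\in\mathcal{F}$.

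The heart of the matter is (5), which simultaneously delivers semicontinuity and the existence of an optimiser. Given $\rho_n\to\rho$ in trace norm, choose near-optimal $\sigma_n\in\mathcal{F}$ with $D(\sigma_n\|\rho_n)\leq D(\mathcal{F}\|\rho_n)+1/n$, and assume $L:=\liminf_n D(\mathcal{F}\|\rho_n)<\infty$ (else there is nothing to prove). Since $\mathcal{H}$ is separable, $\mathcal{T}(\mathcal{H})=\mathcal{K}(\mathcal{H})^\star$ has separable predual, so Banach--Alaoglu together with the metrisability of the weak\(^\star\)-topology on norm-bounded subsets yields a weak\(^\star\)-convergent subsequence $\sigma_{n_k}\to\sigma_\star$. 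The assumed weak\(^\star\)-closedness of $\mathrm{cone}(\mathcal{F})$ places $\sigma_\star\in\mathrm{cone}(\mathcal{F})$. Running the same argument on a minimising sequence for fixed $\rho$ gives the candidate optimiser.

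The main obstacle is that weak\(^\star\) limits can lose mass, $\tr{\sigma_\star}<1$, in which case $\sigma_\star$ would be a mere sub-state and not an element of $\mathcal{F}$. The plan to exclude this is to establish tightness of $\{\sigma_{n_k}\}$ by coupling to the trace-norm-convergent sequence $\rho_{n_k}\to\rho$. Concretely, for any $\epsilon>0$ pick a finite-rank projection $P$ with $\tr{P\rho}\geq 1-\epsilon$ (such $P$ exists because $\rho$ is trace-class), so that $\tr{P\rho_{n_k}}\geq 1-2\epsilon$ eventually; the data-processing inequality applied to the two-outcome measurement $\{P,\Id-P\}$, together with the uniform bound $D(\sigma_{n_k}\|\rho_{n_k})\leq L+o(1)$, yields a uniform classical relative entropy bound on the two-point output distributions, which forces $\tr{P\sigma_{n_k}}\geq 1-\delta(\epsilon)$ with $\delta(\epsilon)\to 0$. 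Tightness combined with weak\(^\star\) convergence upgrades $\sigma_{n_k}\to\sigma_\star$ to trace-norm convergence, using the standard fact that weak\(^\star\) convergence plus convergence of traces is equivalent to trace-norm convergence on $\mathcal{T}_+(\mathcal{H})$; in particular $\tr{\sigma_\star}=1$, hence $\sigma_\star\in\mathcal{F}$. The classical joint trace-norm lower semicontinuity of $D$ (Lindblad) then yields $D(\sigma_\star\|\rho)\leq\liminf_k D(\sigma_{n_k}\|\rho_{n_k})\leq L$, settling both the semicontinuity statement and the existence of the minimiser in one stroke. The specialisation to $\mathcal{F}=\mathcal{S}$ follows since the cited reference provides the missing weak\(^\star\)-closedness of $\mathrm{cone}(\mathcal{S})$.
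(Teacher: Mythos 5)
Your proposal is correct and follows essentially the same route as the paper: items (1)--(4) from the standard properties of the relative entropy, and for item (5) a Banach--Alaoglu/weak\(^\star\)-compactness extraction of a near-optimal free sequence combined with exactly the paper's no-mass-loss argument (finite-rank projector, data processing to a binary measurement, and the resulting bound \(1-\tr{P\sigma_{n_k}} \lesssim (L+2)/(-\log(1-\tr{P\rho_{n_k}}))\) forcing unit trace of the limit). The only minor variation is the endgame of (5), where you upgrade the free sequence to trace-norm convergence (weak\(^\star\) convergence plus trace convergence on \(\mathcal{T}_+(\mathcal{H})\)) and then use trace-norm lower semicontinuity of \(D\), whereas the paper invokes lower semicontinuity of \(D\) directly in the product weak\(^\star\)-topology; both steps are valid and rest on standard facts.
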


\begin{remark}\label{Rem:Weak_Star}
	The weak\(^\star\)-topology on \(\mathcal{T}(\mathcal{H})\) is the topology induced on the space of trace-class operators by thinking of it as the dual of the space of compact operators on \(\mathcal{H}\) (see e.g.\ \cite[Chapter 2]{Megginson} for details). The closure of \(\mathrm{cone}(\mathcal{F})\) can be established e.g.\ via the method from \cite[Theorem 7]{Lami_2023}. 
\end{remark}

\begin{remark}
	Note that in finite-dimensional resource theories, one typically demands that a resource montone should satisfy asymptotic continuity (see e.g.\ \cite[Section VI.A]{Chitambar_2019}). However, as argued in \cite[Section II.C]{Lami_2021}, in the infinite-dimensional setting lower semicontinuity is the more natural continuity requirement. 
\end{remark}

\begin{proof} Most of these properties -- except the lower semi-continuity result -- are well established for the finite-dimensional case in \cite[Section III]{Eisert_2003}. For the sake of completeness, we verify that they carry over into our infinite-dimensional setting. The proof is mostly standard using the properties of the relative entropy (cf.\ e.g.\ \cite[Chapter 5]{Ohya}). However, for our proof of lower semi-continuity, we additionally need some results from general Banach space theory (see e.g.\ \cite[Chapter 2]{Megginson}).
\begin{enumerate}
	\item Non-negativity follows immediately from the respective property of the relative entropy. Now, assume that \(D (\mathcal{F} \| \rho) = 0 \). By definition of the infimum, this implies the existence of a sequence of free states \( \{\sigma_k\}_{k\in\mathbb{N}} \) such that \( D( \sigma_k \| \rho ) \to 0 \) as \(k \to \infty \). By Pinsker's inequality (see \cite[Theorem 5.5]{Ohya}), this implies that
	\begin{equation}
		\lim_{k \to \infty} \frac{1}{2}\norm{ \sigma_k - \rho }_1 \leq \lim_{k \to \infty} \sqrt{ \frac{1}{2} D( \sigma_k \| \rho ) } = 0 \, .
	\end{equation}
	Since \(\mathcal{F}\) is by assumption closed in the trace norm topology, we conclude that \(\rho \in \mathcal{F}\).
	\item Monotonicity is a direct consequence of the monotonicity of the relative entropy under general CPTP maps~\cite{Lindblad_1975, Uhlmann_1977}. Consider any \(\sigma \in \mathcal{F}\), we then have for any free operation \(\Lambda \in \mathcal{O} \subseteq \mathrm{CPTP}\) that
	\begin{equation}
		 D( \sigma \| \rho) \geq D( \Lambda(\sigma) \| \Lambda(\rho) ) \geq D( \mathcal{F} \| \Lambda(\rho) ) \, ,
	\end{equation}
	where the second inequality holds since \(\Lambda(\sigma) \in \mathcal{F}\) by definition. As this holds for any \(\sigma \in \mathcal{F}\), we can take the infimum on the left-hand side to obtain the claim.
	
	\item This follows by the argument given in \cite[Section 3.2.5]{Boyd} exploiting that the relative entropy is a jointly convex function (see \cite[Theorem 5.4]{Ohya}). 
	
	Let \(\rho_1, \rho_2 \in \mathcal{D}\) and \( \lambda \in [0,1]\). Due to the approximation property of the infimum, we can find for any \(\varepsilon > 0 \) two states \(\sigma_1, \sigma_2 \in \mathcal{F} \)  such that
	\begin{align}
		D(\sigma_1 \| \rho_1 ) &\leq D(\mathcal{F}\|\rho_1) + \varepsilon &\text{and} && D(\sigma_2 \| \rho_2 ) &\leq D(\mathcal{F}\|\rho_2) + \varepsilon \, .
	\end{align}
	Observe that \(\lambda \sigma_1 + (1-\lambda) \sigma_2 \in \mathcal{F} \) by assumption and thus
	\begin{align}
		D(\mathcal{F}\|\lambda \rho_1 + (1-\lambda)\rho_2) &\leq D(\lambda \sigma_1 + (1-\lambda) \sigma_2 \|\lambda \rho_1 + (1-\lambda)\rho_2) \\
		& \leq \lambda D(\sigma_1 \|\rho_1) + (1-\lambda) D(\sigma_2 \| \rho_2 ) \\
		&\leq \lambda D(\mathcal{F} \| \rho_1 ) + (1-\lambda) D(\mathcal{F} \| \rho_2) + \varepsilon \, ,
	\end{align}
	where the second inequality follows from joint convexity of the relative entropy. Taking the limit \(\varepsilon \to 0 \) proves the claim. 
	
	\item Let \( \mathcal{F} \subseteq \mathcal{D}_{AB}\) be a set of free states on a bipartite Hilbert space \(\mathcal{H}_{AB}\), that is  closed under tensor products and partial traces.
	
	The closure of \(\mathcal{F}\) under partial traces implies super-additivity. To see this, we may assume w.l.o.g.\ that \(D(\mathcal{F} \| \rho_A \otimes \omega_B) < \infty\). For any \(\varepsilon > 0 \), there exists \( \sigma_{AB} \in \mathcal{F} \) that satisfies 
	\begin{equation}
		D(\sigma_{AB} \| \rho_A \otimes \omega_B) \leq D(\mathcal{F} \| \rho_A \otimes \omega_B) + \varepsilon \, .
	\end{equation}
	It then holds that
	\begin{align}
		\varepsilon + D(\mathcal{F} \| \rho_A \otimes \omega_B ) \geq D(\sigma_{AB} \| \rho_A \otimes \omega_B) &\geq  D(\sigma_A \| \rho_A) + D(\sigma_B \| \omega_B) \\
		&\geq D(\mathcal{F} \| \rho_A ) + D(\mathcal{F} \| \omega_B ) \, ,
	\end{align}
	where the second inequality holds due the super-additivity of relative entropy (see \cite[Corollary 5.2.1]{Ohya}) and the third since the set of free states is by assumption closed under taking the partial trace. The claim follows upon taking \(\varepsilon \to 0\).
	
	Similarly, sub-additivity is a consequence of the closure of \(\mathcal{F}\) under tensor products. We may assume w.l.o.g.\ that both \( D(\mathcal{F} \| \rho_A ) \) and \(D(\mathcal{F} \| \omega_B) \) are finite. Then, for all \(\varepsilon > 0 \) there exists \(\sigma, \tau\in \mathcal{F} \) such that
		\begin{align}
			D(\sigma \| \rho_A ) &\leq  D(\mathcal{F} \| \rho_A ) + \frac{\varepsilon}{2} & \text{and} && D(\tau \| \omega_B ) &\leq  D(\mathcal{F} \| \omega_B ) + \frac{\varepsilon}{2} \, .
		\end{align}
		Since both relative entropies are finite (by assumption), the support condition implies that \(\sigma \in \mathcal{D}_A \) and \(\tau \in \mathcal{D}_B \). Consequently, \( \sigma \otimes \tau \in \mathcal{F} \) by our assumption and the claim follows from
		\begin{align}
			D( \mathcal{F} \| \rho_A \otimes \omega_B ) \leq D( \sigma \otimes \tau \| \rho_A \otimes \omega_B) = D(\mathcal{F} \| \rho_A ) + D(\mathcal{F} \| \omega_B) + \varepsilon \, ,
		\end{align}
		where in the second step we used the additivity of the relative entropy on tensor product states (see \cite[Corollary 5.2.1]{Ohya}).		
		
		\item The result is similar to \cite[Theorem 5]{Lami_2023}, but our proof works a bit differently.
	
		Let \( \{ \rho_m \}_{m \in \mathbb{N}} \) be a sequence of bipartite states \( \rho_m \) that converges to some \( \rho \in \mathcal{D}_{AB}\) in the trace-norm topology, i.e.\ \( \lim_{m \to \infty} \frac{1}{2} \norm{ \rho_m - \rho}_1 = 0\). We need to prove that
    \begin{equation}
        D ( \mathcal{F} \| \rho) \leq \liminf_{m \to \infty } D ( \mathcal{F} \| \rho_m ) \, .
    \end{equation}
    If the right-hand side is infinite, there is nothing to prove. Otherwise, we can assume w.l.o.g.\ (up to extracting sub-sequences) that
    \begin{equation}\label{Eq:Convergence_Assumption}
        \lim_{m \to \infty}  D( \mathcal{F} \| \rho_m ) = R < \infty \, . 
    \end{equation}
    
    Using the approximation property of infimum, we can then construct a sequence of free states \( \{ \sigma_m \}_{m \in \mathbb{N}} \) such that
    \begin{equation}\label{Eq:Separable_Sequence}
        D(\mathcal{F}\| \rho_m ) \leq D( \sigma_m \| \rho_m ) \leq D(\mathcal{F}\| \rho_m ) + \frac{1}{m} \leq R + 1 < \infty \, 
    \end{equation}
    holds for infinitely many \(m\).
    
    Observe that the set of sub-normalised free states \( \mathcal{F}_{\leq} := \{  \lambda \sigma :  \lambda \in [0,1] , \sigma \in \mathcal{F} \} \) is weak\(^\star\)-compact and hence sequentially weak\(^\star\)-compact.\footnote{It is a known fact that given a separable Hilbert space any weak\(^\star\)-compact subset of \(\mathcal{T}(\mathcal{H})\) is also sequentially weak\(^\star\)-compact (cf.\ \cite[Remark 1]{Lami_2023}).}  To verify this, note that we can write \( \mathcal{F}_{\leq} = \text{cone}(\mathcal{F}) \cap B_1  \), where \( B_1 := \{ X \in \mathcal{T}(\mathcal{H}) : \norm{X}_1 \leq 1 \}\) is weak\(^\star\)-compact by the Banach-Alaoglu theorem \cite[Theorem 2.6.18]{Megginson}, and \( \text{cone}(\mathcal{F}) \) is per assumption weak\(^\star\)-closed. It is then an elementary fact from topology that the intersection of a compact set with a closed set is itself compact. 
        
    Since \(\mathcal{F}_{\leq}\) is sequentially weak\(^\star\)-compact, we can extract from \( \{ \sigma_m \}_{m \in \mathbb{N}} \) a sub-sequence \( \{ \sigma_{m_k} \}_{k \in \mathbb{N} }\) such that \( \sigma_{m_k} \to \lambda \sigma_\star\) converges in the weak\(^\star\)-topology as \(k \to \infty\), for some state \( \sigma_\star \in \mathcal{F} \) and some real coefficient \(\lambda \in [0,1]\).\footnote{In the case \(\lambda=0\), it would be more accurate to say that we can choose the state to be free.} If we could show that \(\lambda=1\) we would be done, because then the claim follows from
    \begin{equation}\label{Eq:Claimed_Limit}
        D (\mathcal{F} \| \rho) \stackrel{(1)}{\leq} D(\sigma_\star \| \rho ) \stackrel{(2)}{\leq}  \liminf_{k \to \infty} D(\sigma_{m_k} \| \rho_{m_k} ) \stackrel{(3)}{\leq} \liminf_{k \to \infty} \left( D (\mathcal{F} \| \rho_{m_k}) + \frac{1}{m_k} \right) \stackrel{(4)}{=}   R \, .
    \end{equation}
    Here, (1) uses the assumption that \(\lambda=1\), (2) follows from the lower semi-continuity of the relative entropy in the product weak\(^\star\)-topology (cf.\ \cite[Lemma 4]{Lami_2023}), (3) by construction of the sequence \( \{ \sigma_m \}_{m \in \mathbb{N}} \) and (4) since the limit exists and we can thus take it along any subsequence.
    
    Now, the point of the proof is to show that indeed \(\lambda = 1\). For this, let \( \{ P_m \}_{m \in \mathbb{N}} \) be a sequence of finite-rank projectors that converges strongly to the identity on \( A \otimes B\), i.e.\ \( P_m \xrightarrow{s} \Id_{AB} \) as \(m\to \infty\). Note that this directly implies \( \lim_{m \to \infty} \tr{ P_m \rho } = 1\) for any state \(\rho \in \mathcal{D}\). 
    
    For all sufficiently large \(k \) and \( m\), we then have that 
    \begin{align}
        R + 1   &\stackrel{(1)}{\geq} D( \sigma_{m_k} \| \rho_{m_k} ) \\
                &\stackrel{(2)}{\geq} D_\text{bin}( \tr{ P_m \sigma_{m_k} } \| \tr{ P_m \rho_{m_k} } ) \\
                &\stackrel{(3)}{\geq} -1 - (1- \tr{ P_m \sigma_{m_k} }) \log ( 1 - \tr{ P_m \rho_{m_k} }) \, ,
    \end{align}     
    where (1) follows by Eq.~\eqref{Eq:Separable_Sequence}, in (2) we used the monotonicity of the relative entropy and in (3)  the elementary inequality
    \begin{equation}\label{Eq:Elementary_inequality}
        D_\text{bin}(q\|p) = - h_\text{bin}(q) - q \log p - (1-q) \log (1-p) \geq -1 - (1-q) \log (1-p) \, .
    \end{equation}
    Here, \( D_\mathrm{bin}(q\|p) \) denotes the relative entropy between the binary probability distributions \( (q,1-q)\) and \( (p,1-p)\). Re-arranging, we find that
    \begin{equation}
        1- \tr{ P_m \sigma_{m_k} } \leq \frac{ R + 2}{ - \log ( 1 - \tr{ P_m \rho_{m_k} }) } \, .
    \end{equation}
    We can now take the limit \( k \to \infty \) and -- using that \( P_m \) is finite-rank and thus compact -- the above inequality becomes
    \begin{equation}
        1 - \lambda \tr{ P_m \sigma_\star } \leq \frac{ R + 2}{ - \log ( 1 - \tr{ P_m \rho }) }
    \end{equation}
    by the weak\(^\star\)-convergence of the sub-sequence \( \{ \sigma_{m_k} \}_{k\in\mathbb{N}} \). Finally, taking \( m\to \infty\) and using the fact that \( \lim_{m \to \infty} \tr{ P_m \sigma_\star } = 1 = \lim_{m \to \infty} \tr{ P_m \rho } \) yields \( 1 - \lambda \leq 0\), which together with \( \lambda \leq 1\) shows that indeed \( \lambda = 1\).
    
    Lastly, the attainability of the minimum follows by specialising the above argument to the particular sequence \(\rho_m = \rho\) for all \(m\). In this case, \( R = D(\mathcal{F}\|\rho) \)  and Eq.~\ref{Eq:Claimed_Limit} immediately implies that \( D(\sigma_\star \| \rho ) = D(\mathcal{F}\|\rho) \) proving attainability. 
	\end{enumerate}
\end{proof}

\subsection{The Error Exponent of Entanglement Distillation and Entanglement Testing}\label{Sec:Error_Exponent_Distillation}

A practically meaningful entanglement measure should admit an operational interpretation, i.e. it should be directly linked to a
concrete task in quantum information processing. For the reverse relative entropy of entanglement, this was established in finite-dimensional systems by Lami et al.\ in~\cite{Lami_2024_2} in the context of entanglement distillation. Our goal is to extend this operational interpretation into the general infinite-dimensional setting.

Let us start with a formal description of entanglement distillation. Two parties, called Alice and Bob, are given \(n\) copies of a bipartite quantum state \(\rho_{AB}\) on a separable Hilbert space \(A \otimes B\). Their task is to convert the given state into as much pure entanglement as possible. To be precise, the goal is to maximise the number of copies \(m\) of the maximally entangled state on the fixed two-qubit system \(A_0 \otimes B_0\), i.e.\ \( \ket{\Phi}_{A_0 B_0} = \frac{1}{\sqrt{2}} \left( \ket{00} + \ket{11} \right) \). Moreover, we do not require that they achieve their goal perfectly, but allow for some small non-zero error \(\varepsilon_n\). Thus, the goal is to find a quantum channel \(\Lambda\) that achieves
\begin{align}
	 \Lambda \left( \rho_{AB}^{\otimes n} \right) &\approx_{\varepsilon_n} \Phi_{A_0 B_0}^{\otimes m} && \text{with} & \Phi_{A_0 B_0} := \ketbra{\Phi}{\Phi}_{A_0 B_0}\, .
\end{align} 

In order to get an interesting theory, not all possible quantum channels \(\Lambda\) are allowed to achieve this goal, but only a subset are deemed \emph{free operations} \(\mathcal{O}\).  Historically, entanglement distillation was studied with \(\mathcal{O}\) given by the \emph{local operations and classical communication} (LOCC) paradigm~\cite{Bennett_1996_1, Bennett_1996_2, Bennett_1996_3}. Although operationally well-defined, the set of LOCC channels has a very complicated mathematical structure~\cite{Chitambar_2014}. Therefore, we consider instead the largest physically 
consistent class of transformations. Specifically, we consider the set of \emph{non-entangling} operations, denoted \(\textsf{NE}\), which encompasses all CPTP maps that do not add entanglement to the state~\cite{Brandao_2008,Brandao_2010_1}. Formally, we define
\begin{equation}
	\textsf{NE}_{n \to m} := \bigg\{ \Lambda \in \mathrm{CPTP}(A^nB^n \to A_0^mB_0^m) : \Lambda(\sigma) \in \mathcal{S}_{A_0^m:B_0^m} \; \forall \sigma \in \mathcal{S}_{A^n:B^n} \bigg\} \, .
\end{equation}

The last piece of the theory is a figure-of-merit that allows us to compare the performance of two different distillation protocols. For this, we require that the associated error satisfies \(\varepsilon_n \to 0\) as \(n\to \infty\), i.e.\ as more and more copies are available the quality of distilled entanglement should increase, becoming perfect in the asymptotic limit. Previous works then focussed on the \emph{quantity} of the obtained entanglement, i.e.\ the asymptotic yield of the protocol was used to quantify performance. This in turn leads to the notion of \emph{distillable entanglement} as figure-of-merit (see e.g.~\cite{Rains_1999} for a formal definition). 

In this work, we instead follow~\cite{Lami_2024_2} and consider the \emph{quality} of the distilled entanglement as our performance quantifier. This means we study the distillation error exponent -- that is, we ask the question how fast the quality of the distilled entanglement improves. Specifically, we require that the error behaves as \(\varepsilon \sim 2^{-c \cdot n} \) and characterise the optimal achievable exponent \(c\). Formally, we define the \emph{distillable entanglement error exponent} for \(m \) copies via
\begin{equation}
	E^{(m)}_{d,\text{err}}(\rho_{AB}) := \sup \bigg\{ \liminf_{n \to \infty} - \frac{1}{n} \log \varepsilon_n : \varepsilon_n = 1 - F \left( \Lambda_n( \rho_{AB}^{\otimes n}) , \Phi_{A_0 B_0}^{\otimes m} \right) , \; \Lambda_n \in \textsf{NE}_{n \to m} \bigg\} 
\end{equation}
with the Uhlmann fidelity function \( F(\rho, \sigma) := \norm{\sqrt{\rho} \sqrt{\sigma} }_1^2\)~\cite{Uhlmann_1976, Josza_1994}.\footnote{Note that we could equivalently use the trace-norm distance to quantify the \(\varepsilon_n\)-closeness.} The asymptotic (zero-rate) error exponent of entanglement distillation is then given by
\begin{equation}
	E_{d,\mathrm{err}} (\rho) = \lim_{m \to \infty} E^{(m)}_{d,\mathrm{err}}(\rho) \, .
\end{equation}

The (zero-rate) error exponent of entanglement distillation is closely connected to another primitive in quantum information theory: quantum state discrimination~\cite{Hiai_1991,Ogawa_2000, Hayashi}. Specifically, Lami et.\ al.~\cite{Lami_2024_2} showed that in finite-dimensions this exponent exactly coincides with the \emph{Sanov exponent} of the composite testing problem known as \emph{entanglement testing}. Here, the goal is to determine if a given quantum state is entangled or not. 

More formally, the null hypothesis is that the unknown state is given by \(\rho_{AB}^{\otimes n}\), i.e.\ \(n \) i.i.d.\ copies of the bipartite entangled state \(\rho_{AB}\). The alternative is that the state is separable w.r.t.\ the bipartite cut \(A_n:B_n\), i.e.\ the whole set \( \mathcal{S}_{A_n:B_n} \). Crucially, while the null hypothesis is simple the alternative is composite, significantly complicating the analysis. The task is then to discriminate between these two hypotheses using a measurement on the global system. Mathematically, this can be modelled by a binary POVM \(T_n\) on the whole Hilbert space, a so-called \emph{quantum test}, where the measurement outcome corresponds to the acceptance or rejection of the null hypothesis. 
 
 As usual, we can associate two types of error with each test \(T_n\). The type-1 error occurs when we mistake the entangled state \(\rho_{AB}^{\otimes n}\) for a separable state; while the type-2 error occurs when we identify a separable state with the entangled \(\rho_{AB}^{\otimes n}\). Requiring that the type-2 error is bounded by some nonzero threshold \(\varepsilon\), we ask for the optimal \emph{asymptotic} decay of the type-1 error as \(\varepsilon \to 0\). This is termed the \emph{Sanov exponent} of entanglement testing.\footnote{Note that this is conceptually different from other versions of the quantum Sanov theorem such as~\cite{Bjelakovic_2005, Noetzel_2014,Hayashi_2025_2,Hayashi_2025_3}.}
 
 At this point, it is convenient to introduce the hypothesis-testing relative entropy~\cite{Buscemi_2010, Wang_2012} given by
\begin{equation}
    D_H^\varepsilon \left( \sigma \middle\| \rho \right) := - \log \inf \bigg\{ \tr{M \rho } : 0 \leq M \leq \Id, \tr{ (\Id - M) \sigma } \leq \varepsilon \bigg\} \, .
\end{equation}
If we interpret \( M \) as an arbitrary POVM element, we can understand the pair \( \{ M, \Id - M\}\) as the most general test we can use to discriminate between \(\rho\) and \(\sigma\). Assigning the first outcome of this measurement to the state \(\sigma\) and the second to \(\rho\), \(  D_H^\varepsilon \left( \sigma \middle\| \rho \right) \) quantifies exactly the optimal type-1 error of testing \(\rho\) against \(\sigma\) given a threshold of \(\varepsilon\) on the type-2 error. Based on this, we can formally define the hypothesis-testing relative entropy of entanglement testing via
\begin{align}
	 D_H^\varepsilon \left( \mathcal{S}_{A:B} \middle\| \rho_{AB} \right) :=& \inf_{\sigma \in \mathcal{S}_{A:B} } D_H^\varepsilon \left( \sigma_{AB} \middle\| \rho_{AB} \right) \\
	 =& - \log \sup_{\sigma \in \mathcal{S}_{A:B} } \inf \big\{ \tr{ M \rho_{AB} } : 0 \leq M \leq \Id, \; \tr{ M \sigma} \geq 1 - \varepsilon \big\} \, .
\end{align}
The Sanov exponent of entanglement testing can then formally be defined as
	\begin{equation}
        \mathrm{Sanov}(\rho_{AB} \| \mathcal{S}_{A:B}) := \lim_{\varepsilon \to 0 } \liminf_{ n \to \infty } \frac{1}{n} D_H^\varepsilon \left( \mathcal{S}_{A_n:B_n} \middle\| \rho_{AB}^{\otimes n}\right) \, .
    \end{equation}  
    
The importance of the Sanov exponent in our analysis stems from the following lemma, that generalises \cite[Lemma 1]{Lami_2024_2} to the infinite-dimensional setting and establishes a direct connection between the Sanov exponent and the (zero-rate) error exponent of entanglement distillation.

\begin{lemma}\label{Lem:Error_Exponent}
		Let \(\mathcal{H}_{AB} = \mathcal{H}_A \otimes \mathcal{H}_B\) be a bipartite separable (infinite-dimensional) Hilbert space and let \(\mathcal{D}_{AB}\) be the set of quantum states. Using the definitions introduced above, the asymptotic error exponent of entanglement distillation under non-entangling operations equals the Sanov exponent of entanglement testing. Formally, we have for any \(\rho_{AB} \in \mathcal{D}_{AB}\) that
		\begin{equation}
			E_{d, \mathrm{err}}(\rho_{AB}) = \mathrm{Sanov}(\rho_{AB}\| \mathcal{S}_{A:B}) \, .
		\end{equation}
\end{lemma}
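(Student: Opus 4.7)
I would closely follow the finite-dimensional template of~\cite[Lemma 1]{Lami_2024_2} and establish the two matching inequalities. A structural observation that makes the extension to infinite dimensions manageable is that the \emph{output} system of any distillation protocol is finite-dimensional (just $m$ qubit pairs), so infinite-dimensional subtleties enter only via the input registers carrying $\rho_{AB}^{\otimes n}$ and the composite alternative $\mathcal{S}_{A_n:B_n}$.

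\textbf{Converse inequality} $E_{d,\mathrm{err}}(\rho_{AB})\leq\mathrm{Sanov}(\rho_{AB}\|\mathcal{S}_{A:B})$. Every distillation protocol immediately yields a test. Given $\Lambda_n \in \textsf{NE}_{n\to m}$ with $F(\Lambda_n(\rho^{\otimes n}), \Phi^{\otimes m}) \geq 1-\varepsilon_n$, I would set $M_n := \Id - \Lambda_n^\dagger(\Phi^{\otimes m}_{A_0^m B_0^m})$. Purity of $\Phi^{\otimes m}$ converts the fidelity condition into $\tr{(\Id - M_n) \rho^{\otimes n}} \geq 1-\varepsilon_n$, while the non-entangling property combined with the textbook bound $\tr{\Phi^{\otimes m}\tau}\leq 2^{-m}$ for any separable $\tau$ on $m$ qubit pairs forces $\tr{(\Id - M_n)\sigma} = \tr{\Phi^{\otimes m}\Lambda_n(\sigma)} \leq 2^{-m}$ uniformly over $\sigma\in\mathcal{S}_{A_n:B_n}$. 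Feeding $M_n$ into the variational formula that defines $D_H^{2^{-m}}(\mathcal{S}_{A_n:B_n}\|\rho^{\otimes n})$ yields $D_H^{2^{-m}}(\mathcal{S}_{A_n:B_n}\|\rho^{\otimes n})\geq-\log\varepsilon_n$. Dividing by $n$, taking $\liminf_n$, the supremum over protocols, and finally $m\to\infty$ (so $2^{-m}\to 0$) delivers the claim.

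\textbf{Achievability inequality} $E_{d,\mathrm{err}}(\rho_{AB})\geq\mathrm{Sanov}(\rho_{AB}\|\mathcal{S}_{A:B})$. Conversely, I would reverse the correspondence. Fix $R<\mathrm{Sanov}(\rho\|\mathcal{S})$ and $m\in\mathbb{N}$, and set $\varepsilon_m:=1/(2^m+1)$, which is precisely the critical value below which the isotropic state $p\,\Phi^{\otimes m}+(1-p)\,\Id/4^m$ stays separable. By the definition of the Sanov exponent together with a minimax argument -- appealing to the weak$^\star$-closedness of the separable cone (cf.\ Remark~\ref{Rem:Weak_Star}) and Banach--Alaoglu compactness of POVM elements -- I would obtain a \emph{single} test $N_n$ with $\tr{N_n\sigma}\geq 1-\varepsilon_m$ uniformly for every $\sigma\in\mathcal{S}_{A_n:B_n}$ and $\tr{N_n\rho^{\otimes n}}\leq 2^{-nR}$ for all sufficiently large $n$. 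Then the classical-quantum branching rule
\begin{equation*}
\Lambda_n(\tau) := \tr{(\Id - N_n)\tau}\,\Phi^{\otimes m} + \tr{N_n\tau}\,\frac{\Id_{A_0^m B_0^m}}{4^m}
\end{equation*}
defines a non-entangling channel: on any separable input $\sigma$ it outputs an isotropic mixture with weight $\leq\varepsilon_m$ on $\Phi^{\otimes m}$, which remains separable by construction. Evaluating on $\rho^{\otimes n}$ instead gives $F(\Lambda_n(\rho^{\otimes n}),\Phi^{\otimes m})\geq 1-2^{-nR}$, so that $E^{(m)}_{d,\mathrm{err}}(\rho_{AB})\geq R$ for every $m$. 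Sending $R\nearrow\mathrm{Sanov}(\rho\|\mathcal{S})$ and then $m\to\infty$ concludes the argument.

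\textbf{Main obstacle.} I expect the minimax step in the achievability direction to be the principal technical difficulty: the set $\mathcal{S}_{A_n:B_n}$ fails to be norm-compact in infinite dimensions, so a Sion-type theorem must be deployed in the weak$^\star$-topology, and one must verify with care that the constraint $\tr{M\sigma}\geq 1-\varepsilon$ behaves continuously along weak$^\star$-convergent nets of sub-normalised separable states. Once this has been handled, the remainder of the argument reduces essentially verbatim to the finite-dimensional reasoning of~\cite{Lami_2024_2}, because all structural constructions on the output side take place on a finite-dimensional space.
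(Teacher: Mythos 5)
Your proposal is correct and follows essentially the same route as the paper's proof: the adjoint-map test $M_n=\Id-\Lambda_n^\dagger(\Phi^{\otimes m})$ plus the separable singlet-fraction bound for the converse, a measure-and-prepare isotropic construction for achievability (your maximally mixed garbage state with the critical threshold $1/(2^m+1)$ works just as well as the paper's $(\Id-\Phi^{\otimes m})/(4^m-1)$ with threshold $2^{-m}$), and the uniform-test minimax rewriting of $D_H^\varepsilon(\mathcal{S}_{A_n:B_n}\|\rho^{\otimes n})$, which the paper isolates as a separate lemma proved via Sion's theorem in the weak$^\star$ topology with Banach--Alaoglu compactness of the POVM interval. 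The only clarification: that minimax step needs compactness on the test side alone, so the non-compactness of $\mathcal{S}_{A_n:B_n}$ you flag as the main worry is harmless (convexity and trace-norm continuity of $\sigma\mapsto\tr{T\sigma}$ suffice on that side).
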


The proof generally follows along the same lines as the finite-dimensional one for \cite[Lemma 1]{Lami_2024_2}, because the output space is always finite-dimensional. We provide a full proof below for the sake of completeness, where we additionally simplify the second part of the original argument. However, we first need to verify that a crucial rewriting of the hypothesis-testing divergence also goes through in the infinite-dimensional setting. 
\begin{lemma}
	Using the definitions introduced above, we have
	 \begin{align}
	 D_H^\varepsilon \left( \mathcal{S}_{A:B} \middle\| \rho_{AB} \right) =& - \log \min \bigg\{ \tr{ M \rho_{AB} } : 0 \leq M \leq \Id, \;  \tr{ (\Id- M) \sigma} \leq \varepsilon \; \forall \sigma \in \mathcal{S}_{A:B} \bigg\} \, .
\end{align}
\end{lemma}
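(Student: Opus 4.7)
The plan is to strip the outer $-\log$ and prove the equality of the two inner optimization values, together with attainment of the minimum on the right. Let me write $P$ for the inner quantity on the left-hand side (a supremum over $\sigma\in\mathcal{S}$ of the $\sigma$-specific infima) and $Q$ for the inner quantity on the right-hand side (a single infimum over tests $M$ that are feasible uniformly over $\mathcal{S}$). The easy direction $P\le Q$ I would prove directly: any $M$ that is uniformly feasible is in particular $\sigma$-feasible for each fixed $\sigma$, so the inner infimum defining $P$ at that $\sigma$ is bounded above by $\tr{M\rho}$; taking $\sup_\sigma$ first and then $\inf$ over uniformly feasible $M$ gives $P\le Q$.

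For the reverse direction $Q\le P$, the plan is to perform a minimax exchange. First I would encode the uniform feasibility constraint through Lagrange multipliers, reparametrizing the pair $(\lambda,\sigma)$ with $\lambda\ge 0$, $\sigma\in\mathcal{S}$ by the single variable $\tau=\lambda\sigma\in\mathrm{cone}(\mathcal{S})$, yielding
\[
Q=\inf_{0\le M\le\Id}\sup_{\tau\in\mathrm{cone}(\mathcal{S})}\Big[\tr{M\rho}+(1-\varepsilon)\,\mathrm{tr}(\tau)-\tr{M\tau}\Big],
\]
since the inner supremum evaluates to $0$ when $M$ is uniformly feasible and to $+\infty$ otherwise. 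I would then invoke Sion's minimax theorem to swap the order. The set $\{M:0\le M\le\Id\}$ is convex and weak$^\star$-compact in $\mathcal{B}(\mathcal{H})$ by the Banach--Alaoglu theorem (using the weak$^\star$-topology of Remark~\ref{Rem:Weak_Star}); for each fixed $\tau$ the objective is linear in $M$, hence convex and weak$^\star$-continuous because $\rho,\tau\in\mathcal{T}(\mathcal{H})$ induce weak$^\star$-continuous functionals on $\mathcal{B}(\mathcal{H})$. Likewise, the objective is linear, hence concave, in $\tau$ over the convex set $\mathrm{cone}(\mathcal{S})$, so Sion's theorem applies and delivers
\[
Q=\sup_{\tau\in\mathrm{cone}(\mathcal{S})}\inf_{0\le M\le\Id}\Big[\tr{M\rho}+(1-\varepsilon)\,\mathrm{tr}(\tau)-\tr{M\tau}\Big].
\]

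To close the argument, I would recognize the right-hand side as the Lagrangian dual of the inner infimum in $P$ assembled over all $\sigma\in\mathcal{S}$. For each fixed $\sigma$ the inner infimum in $P$ is a conic program for which Slater's condition holds strictly at $M=\Id$ (since $\tr{M\sigma}=1>1-\varepsilon$), so strong duality yields
\[
\inf\big\{\tr{M\rho}:0\le M\le\Id,\,\tr{M\sigma}\ge 1-\varepsilon\big\}=\sup_{\lambda\ge 0}\inf_M\big[\tr{M\rho}+\lambda(1-\varepsilon-\tr{M\sigma})\big].
\]
Taking $\sup_{\sigma\in\mathcal{S}}$ and merging with $\sup_\lambda$ via the substitution $\tau=\lambda\sigma$ reproduces exactly the post-minimax form of $Q$, so $P=Q$. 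Attainment of the minimum in $Q$ then follows because the uniform feasibility set is the intersection of the weak$^\star$-compact set $\{0\le M\le\Id\}$ with the weak$^\star$-closed half-spaces $\{M:\tr{M\sigma}\ge 1-\varepsilon\}$ indexed by $\sigma\in\mathcal{S}$, hence is itself weak$^\star$-compact, and $M\mapsto\tr{M\rho}$ is weak$^\star$-continuous.

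The main obstacle is the justification of Sion's minimax in infinite dimensions: one must carefully select a topology on $\mathcal{B}(\mathcal{H})$ under which $\{0\le M\le\Id\}$ is compact \emph{and} the bilinear objective is continuous in $M$ simultaneously. Both requirements force the weak$^\star$-topology and rely on the fact that the Lagrange dual variables live in $\mathcal{T}(\mathcal{H})$ rather than in $\mathcal{B}(\mathcal{H})$. A secondary subtlety is strong duality for the inner conic program in infinite dimensions; however, since the Slater point $M=\Id$ lies in the strict interior of the constraint, the classical Hahn--Banach argument carries over without modification.
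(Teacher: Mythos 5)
Your proposal is correct, but the hard direction proceeds by a genuinely different route than the paper's. The paper never dualises the constraints: setting \(\delta := \sup_{\sigma}\min_{T\in\mathfrak{T}_{\varepsilon,\sigma}}\tr{T\rho}\), it observes that each per-\(\sigma\) optimal test \(T^\star\) yields \(\Id-T^\star\in\mathfrak{T}_{\delta,\rho}\), so that \(\sup_{\sigma}\min_{T\in\mathfrak{T}_{\delta,\rho}}\tr{T\sigma}\le\varepsilon\), and then applies Sion's minimax theorem directly to the bilinear pairing \((T,\sigma)\mapsto\tr{T\sigma}\) on \(\mathfrak{T}_{\delta,\rho}\times\mathcal{S}\) to extract a single test that is simultaneously good for all separable states; this requires the attainment of the per-\(\sigma\) minima (hence the weak\(^\star\)-compactness discussion up front) and the auxiliary threshold set \(\mathfrak{T}_{\delta,\rho}\). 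You instead encode the uniform constraints by Lagrange multipliers ranging over \(\mathrm{cone}(\mathcal{S})\), apply Sion to the resulting Lagrangian on \(\{0\le M\le\Id\}\times\mathrm{cone}(\mathcal{S})\), and then collapse \(\sup_{\lambda\ge0}\inf_M\) back to the per-\(\sigma\) primal via strong duality with the Slater point \(M=\Id\). Both arguments rest on exactly the same topological inputs (Banach--Alaoglu for the operator interval, weak\(^\star\)-continuity of \(M\mapsto\tr{M\tau}\) for \(\tau\in\mathcal{T}(\mathcal{H})\), convexity of \(\mathcal{S}\)); what yours buys is that it avoids the auxiliary threshold \(\delta\) and does not need attainment of the \(\sigma\)-wise inner minima, making the convex-duality structure explicit, at the price of a strong-duality step. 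Two small remarks: your Slater point needs \(\varepsilon>0\) (harmless, since the lemma is only invoked with \(\varepsilon\in(0,1)\) or \(\varepsilon=2^{-m}\)), and in fact weak duality alone already gives \(Q\le P\), which together with your easy direction \(P\le Q\) closes the argument without Slater at all. Your compactness argument for attainment of the right-hand minimum coincides with the paper's.
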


\begin{proof} The general idea for the proof comes from \cite[Lemma 1]{Hayashi_2025}.

	To begin, let us introduce some notation for the two relevant sets of quantum tests:
	\begin{equation}
		\mathfrak{T}_{\varepsilon, \sigma} := \bigg\{ T \in \mathcal{B}(\mathcal{H})  :  0 \leq T \leq \Id, \; \tr{ (\Id - T) \sigma} \leq \varepsilon \bigg\}
	\end{equation}
	and
	\begin{equation}
		\mathfrak{T}_{\varepsilon, \mathcal{S} } := \bigg\{ T \in \mathcal{B}(\mathcal{H}) :  0 \leq T \leq \Id, \; \tr{ (\Id - T) \sigma} \leq \varepsilon \; \forall \sigma \in \mathcal{S} \bigg\} \, .
	\end{equation}
	With this, the claim of the lemma can be rewritten as 
	\begin{equation}
		\sup_{ \sigma \in \mathcal{S} } \min_{ T \in \mathfrak{T}_{\varepsilon, \sigma} } \tr{ T \rho } = \min_{ T \in \mathfrak{T}_{\varepsilon, \mathcal{S} } } \tr{ T \rho } \, .
	\end{equation}
	
	We first argue that both minima are indeed attained. For this, we consider the weak\(^\star\)-topology on \(\mathcal{B}(\mathcal{H})\) (induced on \(\mathcal{B}(\mathcal{H})\) by its pre-dual \(\mathcal{T}(\mathcal{H}) \), cf.\ also Remark~\ref{Rem:Weak_Star}). By the Banach-Alaoglu theorem \cite[Theorem 2.6.18]{Megginson}, the operator interval \( 0 \leq T \leq \Id \), i.e.\ the set of POVM elements, is compact in this topology. Moreover, by definition all functionals of the form \( T \mapsto \tr{T \rho} \) with \(\rho \in \mathcal{T}(\mathcal{H}) \) are continuous. Both constraints thus define a weak\(^\star\)-closed half-space. Consequently, each set of quantum tests can be written as the intersection of a compact with a closed set, and is therefore compact as well. As we minimise a continuous function w.r.t.\ to a compact set, both minima are indeed attained.  

	Now, since \( \mathfrak{T}_{\varepsilon, \mathcal{S} } \subseteq \mathfrak{T}_{\varepsilon, \sigma} \), it immediately follows by a subset argument that
	\begin{equation}\label{Eq:One_Direction}
		\sup_{ \sigma \in \mathcal{S} } \min_{ T \in \mathfrak{T}_{\varepsilon, \sigma} } \tr{ T \rho } \leq \min_{ T \in \mathfrak{T}_{\varepsilon, \mathcal{S} } } \tr{ T \rho } \, .
	\end{equation}
	
	For the reverse direction, consider any \(\sigma \in \mathcal{S} \) and note that
	\begin{equation}\label{Eq:Reverse}
		\min_{ T \in \mathfrak{T}_{\varepsilon, \sigma } } \tr{ T \rho } \leq \sup_{ \sigma \in \mathcal{S} } \min_{ T \in \mathfrak{T}_{\varepsilon, \sigma } } \tr{ T \rho } =: \delta \, .
	\end{equation}
	Note that the optimal test \(T^\star\) on the LHS in Eq.~\eqref{Eq:Reverse} satisfies \( \tr{ T^\star \rho} \leq \delta \) and \( \tr{ (\Id - T^\star) \sigma} \leq \varepsilon \). Therefore, \( \Id - T^\star \in \mathfrak{T}_{ \delta, \rho } \) and we have
	\begin{equation}
		\min_{ T \in \mathfrak{T}_{ \delta, \rho }} \tr{ T \sigma } \leq \tr{ (\Id - T^\star) \sigma} \leq \varepsilon
	\end{equation}
	As this holds for arbitrary \(\sigma \in \mathcal{S}\), we must have
	\begin{equation}
		\sup_{ \sigma \in \mathcal{S} } \min_{ T \in \mathfrak{T}_{ \delta, \rho } } \tr{ T \sigma } \leq \varepsilon \, .
	\end{equation}
	
	Now, note that \( \mathcal{S} \) is a convex subset of the space of trace class operators (endowed with the trace norm topology) and \( \mathfrak{T}_{ \delta, \rho } \) is a convex and compact subset of the space of bounded linear operators (endowed with the weak\(^\star\)-topology). Moreover, \( T \mapsto \tr{ T \sigma } \) is a linear and weak\(^\star\)-continuous function for all \(\sigma \in \mathcal{S}\) and \( \sigma \mapsto \tr{ T \sigma } \) is a linear and trace-norm-continuous function for all quantum tests \(T\).\footnote{Continuity holds in the first case by definition. In the second, this follows from Hölder's inequality stating that
	\begin{equation}
		\norm{ \tr{ T \sigma } - \tr{ T \rho } } \leq \norm{T}_\infty \cdot \norm{ \sigma - \rho\ }_1
	\end{equation} }
	Consequently, we can apply Sion's minimax theorem~\cite{Sion_1958} and interchange the optimisations to obtain
	\begin{equation}
		\min_{ T \in \mathfrak{T}_{ \delta, \rho }  } \sup_{ \sigma \in \mathcal{S} } \tr{ T \sigma } = \sup_{ \sigma \in \mathcal{S} } \min_{ T \in \mathfrak{T}_{ \delta, \rho} } \tr{ T \sigma } \leq \varepsilon \, .
	\end{equation}
	
	This implies that there exists a test \(T^\star\) such that
	\begin{align}
		\sup_{ \sigma \in \mathcal{S} } \tr{ T^\star \sigma } &\leq \varepsilon & \text{and} && \tr{ (\Id - T^\star) \rho } &\leq \delta \, .
	\end{align}
	Note that \( \Id - T^{\star} \in \mathfrak{T}_{\varepsilon, \mathcal{S} } \) and we can complete the proof with
	\begin{equation}
		\min_{ T \in \mathfrak{T}_{\varepsilon, \mathcal{S} }} \tr{T \rho } \leq \tr{ (\Id - T^{\star}) \rho } \leq \delta = \sup_{ \sigma \in \mathcal{S} } \min_{ T \in \mathfrak{T}_{\varepsilon, \sigma } } \tr{ T \rho }\, .
	\end{equation}
	Together with Eq.~\eqref{Eq:One_Direction} this establishes the claimed equality.
\end{proof}

\begin{proof}[Proof of Lemma~\ref{Lem:Error_Exponent}]	
	First, we show how to construct a feasible test for entanglement testing from any feasible distillation protocol. Consider an arbitrary non-entangling distillation protocol \(\Lambda_n \in \textsf{NE}_{n \to m}\) that achieves an error \(\varepsilon_n\). By definition, we have
	\begin{align}
		1 - \varepsilon_n = F \left( \Lambda_n(\rho_{AB}^{\otimes n} ) , \Phi_{A_0 B_0}^{\otimes m} \right) &= \tr{ \Lambda_n(\rho_{AB}^{\otimes n}) \Phi_{A_0 B_0}^{\otimes m} } = \tr{ \rho_{AB}^{\otimes n} \Lambda_n^\dagger ( \Phi_{A_0 B_0}^{\otimes m} ) } \, ,
	\end{align}
	where in the first equality we used that the second argument is pure and in the second step we introduced the adjoint \(\Lambda_n^\dagger\) of the protocol. Since \(\Lambda_n^\dagger\) is completely positive and unital as the adjoint of a CPTP map~\cite{Holevo}, we can identify the operator \(M_n := \Id - \Lambda_n^\dagger (\Phi^{\otimes m} ) \) as a valid POVM element on \(\mathcal{H}^{\otimes n}\).  Moreover, for any \(\sigma_n \in \mathcal{S}_{A_n:B_n}\), it holds that
	\begin{align}
		\tr{ M_n \sigma_n} &= 1 - \tr{ \Lambda_n(\sigma_n) \Phi_{A_0 B_0}^{\otimes m} } \geq 1 - 2^{-m} \, ,
	\end{align}
	where we observed that \(\Lambda_n(\sigma_n) \) is separable since the channel is non-entangling and then invoked a standard bound on the so-called singlet fraction in a separable state from~\cite{Horodecki_1999_2}. Note that the latter directly applies as the output space of \(\Lambda_n\) is finite-dimensional.
	
	Thus, the binary POVM \(\{M_n , \Id - M_n \}\) is a feasible test for the hypothesis-testing relative entropy of  entanglement testing with a type-2 error threshold of \(2^{-m}\). Consequently, we obtain the general bound
	\begin{align}
		\inf_{\sigma_n \in \mathcal{S}_{A_n:B_n} } D_H^{2^{-m}} \left( \sigma_n \middle\| \rho_{AB}^{\otimes n} \right) \geq - \log \tr{ M_n \rho_{AB}^{\otimes n} } \geq - \log \varepsilon_n \, .
	\end{align}
	Dividing by \(n\), taking the limit \(n \to \infty\) and finally the supremum over all non-entangling protocols \( \{ \Lambda_n \}_{n \in \mathbb{N}}\) yields
	\begin{equation}
		E_{d,\text{err}}^{(m)} (\rho_{AB}) \leq \liminf_{n \to \infty} \frac{1}{n} \inf_{\sigma_n \in \mathcal{S}_{A:B} } D_H^{2^{-m}} \left( \sigma_n \middle\| \rho_{AB}^{\otimes n} \right) = \liminf_{n \to \infty} \frac{1}{n} D_H^{2^{-m}} \left( \mathcal{S}_{A_n:B_n} \middle\| \rho_{AB}^{\otimes n} \right) \, .
	\end{equation}
	
	For the other direction, we construct a distillation protocol from any feasible test for entanglement testing. Consider any test operator \(M_n\) satisfying \(\tr{ M_n \sigma_n } \geq 1 - 2^{-m} \) for all \(\sigma_n \in \mathcal{S}_{A_n:B_n} \). With this, we construct the CPTP map \(\Lambda_n \in \mathrm{NE}_{n \to m}\) via the mapping
	\begin{equation}
		X_{A_nB_n} \mapsto \Lambda_n (X_{A_nB_n} ) = \tr{ (\Id - M_n ) X_{A_nB_n} } \Phi_{A_0 B_0}^{\otimes m} + \tr{ M_n X_{A_nB_n} } \frac{ \Id - \Phi_{A_0 B_0}^{\otimes m} }{2^{2m} -1 } \, .
	\end{equation}
	To verify that this map is non-entangling, let us first observe that the output of the map is by construction an isotropic state. It is known that for isotropic states separability is completely characterised in terms of the singlet fraction~\cite{Horodecki_1999_1}. Now, observe that the singlet fraction at the output satisfies
	\begin{align}
		F( \Lambda_n(\sigma_n) , \Phi_{A_0 B_0}^{\otimes m} ) =  1 - \tr{ M_n \sigma_n } \leq 2^{-m} 
	\end{align}
	 by assumption for any separable state \(\sigma_n\). This in turn is equivalent to the separability of \(\Lambda_n(\sigma_n)\)~\cite{Horodecki_1999_1}. As this holds for any separable input state, the map is indeed non-entangling. By construction, we then have \( F( \Lambda_n(\rho_{AB}^{\otimes n}) , \Phi_{A_0 B_0}^{\otimes m} ) = 1 - \tr{ M_n \rho_{AB}^{\otimes n}  } \). Consequently, we obtain a feasible protocol for entanglement distillation that achieves an error \(\varepsilon_n = \tr{ M_n \rho_{AB}^{\otimes n}  }\). 
	
	Now, picking the sequence of optimal tests \(M_n^\star\) that achieves the hypothesis-testing relative entropy we obtain the general bound
	\begin{equation}
		E_{d,\text{err}}^{(m)} (\rho_{AB}) \geq \liminf_{n \to \infty} - \frac{1}{n} \log \tr{ M_n^\star \rho_{AB}^{\otimes n} } = \liminf_{n \to \infty} \frac{1}{n} D_H^{2^{-m}} \left( \mathcal{S}_{A_n:B_n} \middle\| \rho_{AB}^{\otimes n} \right) \, .
	\end{equation}
	
	Finally, we observe that the function \( \varepsilon \to D_H^\varepsilon\left( \mathcal{S}_{A_n:B_n} \middle\| \rho_{AB}^{\otimes n} \right)  \) is monotone non-decreasing. Therefore, the left-sided limit \(\varepsilon \to 0\) exists and we may take it along any sequence, in particular \(\varepsilon_m = 2^{-m} \) with \(m \to \infty\).
	
	 Combining all partial results, we finally conclude that
	\begin{equation}
		E_{d, \mathrm{err}}(\rho_{AB}) = \lim_{m \to \infty} E_{d, \text{err}}^{(m)}(\rho) = \lim_{m \to \infty}  \liminf_{n \to \infty} \frac{1}{n} D_H^{2^{-m}} \left( \mathcal{S}_{A_n:B_n} \middle\| \rho_{AB}^{\otimes n} \right) = \mathrm{Sanov}(\rho_{AB} \| \mathcal{S}_{A:B} ) \, .
	\end{equation}
	\end{proof}
	
\subsection{Generalised Sanov Theorem of Entanglement Testing}\label{Sec:Sanov}

We can now give our first main technical result, which is the extension of \cite[Corollary 15]{Lami_2024_2} to infinite-dimensional systems. Specifically, we establish that the Sanov exponent of entanglement testing is exactly given by the single-letter reverse relative entropy of entanglement. By the previous analysis, this endows the reverse relative entropy of entanglement with an operational interpretation in entanglement distillation, independent of the Hilbert space dimension. 

\begin{theorem}\label{Thm:General_Sanov}
    Let \(\mathcal{H}_{AB} = \mathcal{H}_A \otimes \mathcal{H}_B\) be a bipartite separable (infinite-dimensional) Hilbert space and let \(\mathcal{D}_{AB}\) be the set of quantum states on \(\mathcal{H}_{AB}\). Using the definitions from above, for any state \(\rho_{AB} \in \mathcal{D}_{AB}\), it holds that
    \begin{align}
        \lim_{n\to\infty} \frac{1}{n} D_H^\varepsilon \left(\mathcal{S}_{A_n:B_n} \middle\| \rho_{AB}^{\otimes n}\right) = D(\mathcal{S}_{A:B} \|\rho_{AB}) \quad \forall \varepsilon \in (0,1) \, .
    \end{align}  
    
    Consequently, we have the following equalities:
		\begin{equation}
			E_{d, \mathrm{err}}(\rho_{AB}) = \mathrm{Sanov}(\rho_{AB}\| \mathcal{S}_{A:B}) = D(\mathcal{S}_{A:B} \|\rho_{AB}) \, .
		\end{equation} 
\end{theorem}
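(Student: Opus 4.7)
The theorem makes two claims: the asymptotic equality of the hypothesis-testing divergence with the single-letter reverse relative entropy of entanglement, and the resulting chain of operational equalities. The second follows immediately by combining the first with Lemma~\ref{Lem:Error_Exponent}, so my plan concentrates on establishing matching upper and lower bounds for $\lim_n \frac{1}{n} D_H^\varepsilon(\mathcal{S}_{A_n:B_n}\|\rho_{AB}^{\otimes n})$.

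For the upper bound $\limsup_n \frac{1}{n} D_H^\varepsilon(\mathcal{S}_{A_n:B_n}\|\rho_{AB}^{\otimes n}) \leq D(\mathcal{S}_{A:B}\|\rho_{AB})$, I would rely on the attainability guaranteed by Lemma~\ref{Lem:Reverse_REE_Properties}(5): fix a minimiser $\sigma_\star \in \mathcal{S}_{A:B}$ realising $D(\sigma_\star \| \rho_{AB}) = D(\mathcal{S}_{A:B} \| \rho_{AB})$. Since $\sigma_\star^{\otimes n}$ is separable across the cut $A_n : B_n$, it is feasible in the infimum defining $D_H^\varepsilon(\mathcal{S}_{A_n:B_n} \| \rho_{AB}^{\otimes n})$, and the infinite-dimensional quantum Stein lemma then supplies the desired rate $D(\sigma_\star \| \rho_{AB})$.

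For the lower bound, which is the heart of the argument, my plan is a truncation-and-lifting scheme in the spirit of \cite{Lami_2024_2}. Take a sequence $\{P_k\}$ of finite-rank projectors converging strongly to the identity on $\mathcal{H}_A \otimes \mathcal{H}_B$, and regard each $\mathrm{range}(P_k)$ as a finite-dimensional bipartite system. On this truncated system, the finite-dimensional generalised Sanov theorem of \cite{Lami_2024_2} identifies the asymptotic rate with the reverse relative entropy of entanglement of the truncated state. The task is then to show that for any near-optimal $\sigma_n \in \mathcal{S}_{A_n:B_n}$, one can construct a companion separable state $\tilde{\sigma}_n$ supported on $\mathrm{range}(P_k^{\otimes n})$ whose value in $D_H^\varepsilon(\cdot \| \rho_{AB}^{\otimes n})$ is degraded by at most a quantity that vanishes as $k \to \infty$, uniformly in $n$. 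Once such a reduction is in place, sending $k \to \infty$ and invoking the lower semi-continuity of $\rho \mapsto D(\mathcal{S} \| \rho)$ from Lemma~\ref{Lem:Reverse_REE_Properties}(5) recovers the full bound.

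The main obstacle, explicitly flagged in the prelude to the theorem, lies in exactly this re-separabilisation step: whereas the second argument $\rho_{AB}^{\otimes n}$ can be cleanly localised by a projector, the near-optimal first arguments $\sigma_n$ can live at arbitrarily high Fock levels and need not survive naive projection as valid separable states without inflating the type-2 error constraint beyond $\varepsilon$. Overcoming this will likely require exploiting the specific closure structure of $\mathcal{S}$ from Eq.~\eqref{Eq:Separable} together with a sequential weak$^\star$-compactness extraction along the lines of the proof of Lemma~\ref{Lem:Reverse_REE_Properties}(5), precisely the topological machinery the paper sets up for this purpose.
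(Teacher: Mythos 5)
Your converse direction is fine and coincides with the paper's: $\sigma^{\otimes n}$ is feasible for the $n$-copy test for any separable $\sigma$, and the infinite-dimensional quantum Stein lemma gives the rate $D(\sigma\|\rho_{AB})$; taking the infimum over $\sigma$ (the attainability from Lemma~\ref{Lem:Reverse_REE_Properties} is not even needed) yields the upper bound. Likewise, deducing the operational chain from Lemma~\ref{Lem:Error_Exponent} is correct.

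The achievability direction, however, contains a genuine gap at exactly the step you flag and leave open. Your plan is to replace near-optimal separable alternatives $\sigma_n\in\mathcal{S}_{A_n:B_n}$ by companion separable states supported on $\mathrm{range}(P_k^{\otimes n})$ with a degradation that vanishes as $k\to\infty$ \emph{uniformly in $n$}, but no mechanism for this uniformity is provided, and none is in sight: for fixed $k$ the weight $\tr{P_k^{\otimes n}\sigma_n}$ can tend to zero as $n\to\infty$ (already for i.i.d.\ alternatives with a small per-copy tail), so the truncated operator need not stay close to a separable state and the type-2 constraint at level $\varepsilon$ need not survive. The weak$^\star$-compactness extraction of Lemma~\ref{Lem:Reverse_REE_Properties}(5) acts on sequences of states on the \emph{single-copy} space and does not deliver the required multi-copy, uniform-in-$n$ control; this is precisely the sense in which the paper stresses that a standard truncation argument fails. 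The paper's actual proof never truncates the alternatives: it applies to every copy of $\rho$ a measure-and-prepare LOCC channel that performs the local test $P=\Pi_A\otimes\Pi_B$, records success or failure in classical flag registers, and prepares a fixed separable $\tau$ upon failure. Monotonicity of $D_H^\varepsilon$ under this separability-preserving channel reduces the testing problem to one whose alternative set automatically lives on a finite-dimensional flagged space, where the finite-dimensional Sanov theorem of~\cite{Lami_2024_2} applies at the single-letter level; the flag structure forces any finite-relative-entropy separable state to be block diagonal with separable blocks, giving the expression $\inf_{q\in[0,1]}\left\{D_{\mathrm{bin}}\!\left(q\,\middle\|\,\tr{P^\dagger P\rho}\right)+q\,D(\mathcal{S}_{A:B}\|\rho_P)\right\}$, and the bound is then lifted by letting $P=P_m\to\Id$ strongly, using lower semicontinuity of $D(\mathcal{S}\|\cdot)$ together with Lemma~\ref{Lem:Binary_Relative_Entropy}. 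Without this (or an equivalent) device, your sketch does not close the lower bound.
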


Let us begin with a small technical lemmata that we will need in our proof of the main theorem.
\begin{lemma}\label{Lem:Binary_Relative_Entropy}
    Let \( D_\mathrm{bin}(q\|p) \) denote the relative entropy between the binary probability distributions \( (q,1-q)\) and \( (p,1-p)\). Then, for all \(T \in \mathbb{R}\) we have
    \begin{equation}
        \lim_{p \to 1^-} \inf_{q \in [0,1] } \big\{ D_\emph{bin}(q\|p)  + q T \big\} = T \, .
    \end{equation}
\end{lemma}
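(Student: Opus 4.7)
The plan is to solve the one-dimensional minimization in closed form for each fixed $p \in (0,1)$, and then take the limit $p \to 1^-$ directly.

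First, fix $p \in (0,1)$ and set $f_p(q) := D_\mathrm{bin}(q\|p) + qT$ on $[0,1]$. Since $D_\mathrm{bin}(\cdot\|p)$ is strictly convex on $(0,1)$ in its first argument and $qT$ is linear, $f_p$ is strictly convex, so the infimum is attained at the unique critical point (provided it lies in the interior, which will be immediate). The stationarity condition $\partial_q f_p(q) = 0$, computed with base-$2$ logarithms, reads
\begin{equation}
\log\frac{q(1-p)}{p(1-q)} \;=\; -T,
\end{equation}
which solves to the explicit minimizer
\begin{equation}
q_\star \;=\; \frac{2^{-T}\,p}{(1-p) + 2^{-T}p} \;\in\; (0,1),
\end{equation}
interior for every $p \in (0,1)$ and every finite $T$.

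Next, I would substitute back. Setting $A := (1-p) + 2^{-T}p$, one finds $q_\star/p = 2^{-T}/A$ and $(1-q_\star)/(1-p) = 1/A$, so that
\begin{equation}
D_\mathrm{bin}(q_\star\|p) \;=\; q_\star\bigl(-T - \log A\bigr) + (1-q_\star)\bigl(-\log A\bigr) \;=\; -q_\star T - \log A,
\end{equation}
and hence
\begin{equation}
\min_{q \in [0,1]} f_p(q) \;=\; f_p(q_\star) \;=\; -\log\!\bigl((1-p) + 2^{-T}p\bigr).
\end{equation}

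Finally, since $(1-p) + 2^{-T}p \to 2^{-T}$ as $p \to 1^-$ and $-\log$ is continuous at $2^{-T}$, the limit equals $-\log 2^{-T} = T$, proving the claim. No significant obstacle arises; the only minor points are that $q_\star \in (0,1)$ for every admissible $(p,T)$ (immediate from positivity of the numerator and denominator) and that the endpoints $q \in \{0,1\}$ cannot beat $q_\star$ (immediate from strict convexity, and easily sanity-checked: $f_p(0) = \log\tfrac{1}{1-p} \to +\infty$, while $f_p(1) = -\log p + T \to T$).
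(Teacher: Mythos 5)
Your proof is correct and begins exactly where the paper's does: both exploit convexity of $q\mapsto D_\mathrm{bin}(q\|p)+qT$ and solve the first-order condition, arriving at the same interior minimiser $q_\star=\frac{2^{-T}p}{(1-p)+2^{-T}p}$ (the paper writes it as $\frac{p}{2^{T}(1-p)+p}$). The finish differs: you substitute back and obtain the exact value $\min_{q\in[0,1]}\bigl\{D_\mathrm{bin}(q\|p)+qT\bigr\}=-\log\bigl((1-p)+2^{-T}p\bigr)$, so the limit $T$ follows from continuity of the logarithm, whereas the paper skips this algebra and instead proves the lower bound by lower semicontinuity of $D_\mathrm{bin}$ at $(1,1)$ (using $q_\star\to1$) and the upper bound by the trivial ansatz $q=p$. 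Your route is marginally stronger and cleaner, since it identifies the infimum for every fixed $p$ (it is the binary case of the duality $\inf_q\{D(q\|p)+qT\}=-\log\bigl(p\,2^{-T}+(1-p)\bigr)$, i.e.\ the Legendre-transform identity for the cumulant generating function); your algebraic simplifications and the endpoint/convexity checks are all correct, so there is no gap.
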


\begin{proof}
    For all \( p \in (0,1)\), the function \( q \to  D_\text{bin}(q\|p) + q T \) is convex and differentiable on the domain \( (0,1)\). Its minimum can be found by setting the derivative to zero; it is achieved at
    \begin{equation}
        q = q(p,T) := \frac{p}{2^T (1-p) + p} \, .
    \end{equation}

    Since \( \lim_{p \to 1^-} q(p,T) = 1\), by lower semi-continuity we see that
    \begin{equation}
        \liminf_{p \to 1^-} \inf_{q \in [0,1] } \big\{  D_\text{bin}(q\|p) + q T \big\} =  \liminf_{p \to 1^-} \big\{  D_\text{bin}(q(p,T)\|p) + q(p,T) T \big\} \geq  D_\text{bin}(1\|1) + T = T \, .
    \end{equation}
    The converse bound can be obtained with the simple ansatz \( q = p\).
\end{proof}

\begin{proof}[Proof of Theorem~\ref{Thm:General_Sanov}]
    First, note that the finite-dimensional case is exactly \cite[Corollary 15]{Lami_2024_2}.
    
    The converse is an immediate consequence of the standard quantum Stein's lemma in infinite dimensions. Note that here the strong converse exponents are known as well (see~\cite{Mosonyi_2023} for more details). For any given separable state \( \sigma \in \mathcal{S}_{A:B}\), we have
    \begin{align}
        \limsup_{n \to \infty} \frac{1}{n} D_H^\varepsilon \left(\mathcal{S}_{A_n:B_n} \middle\| \rho^{\otimes n}\right) \leq \lim_{n \to \infty} \frac{1}{n} D_H^\varepsilon \left( \sigma^{\otimes n} \middle\| \rho^{\otimes n}\right) = D(\sigma \| \rho)
    \end{align}
    and the claim then follows by taking the infimum over \(\sigma \in \mathcal{S}_{A:B} \) on the right-hand side.

    The achievability of this exponent is obtained by lifting the finite-dimensional result. Let us start by considering a finite-rank tensor product projector \( P = \Pi_{A \to A'} \otimes \Pi_{B \to B'} \) and apply to every copy the LOCC channel \(  \Lambda : AB \to A'X_A : B' X_B\) defined via
    \begin{align}
        \rho \mapsto \Lambda(\rho) = P \rho P^\dagger \otimes \ketbra{00}{00}_{X_AX_B} + \tau \cdot \tr{ (\Id - P^\dagger P) \rho } \otimes \ketbra{11}{11}_{X_AX_B} \, ,
    \end{align}
    where \( \tau = \tau_{A'B'}\) is an arbitrary separable state on the finite-dimensional bipartite space \(A' \otimes B'\) where \(P\) is supported and \(X_A\) and \(X_B\) are fixed classical single-bit systems. 
    
    Observe that this channel can be implemented by the following LOCC protocol: Alice and Bob perform the local test associated with the finite-rank projector and communicate their outcomes to each other. If both tests succeeded, they keep the state and set their classical registers to \(0\). Otherwise, they discard the state, prepare the fixed separable state \(\tau\) and then set their classical registers to \(1\).

    We can then write
    \begin{align}
         \liminf_{n \to \infty} \frac{1}{n} D_H^\varepsilon \left(\mathcal{S}_{A_n:B_n} \middle\| \rho^{\otimes n}\right) & \stackrel{(1)}{\geq} \liminf_{n \to \infty} \frac{1}{n} D_H^\varepsilon \left( \mathcal{S}_{ A_n^\prime X_A :B_n^\prime X_B}  \middle\| \Lambda(\rho)^{\otimes n}\right) \\
         & \stackrel{(2)}{=} D \left( \mathcal{S}_{A'X_A:B'X_B} \middle\| \Lambda(\rho) \right) \\
         & \stackrel{(3)}{=}\inf_{\stackrel{\sigma_1, \sigma_2 \in \mathcal{S}_{A':B'}}{ q \in [0,1]} } D \bigg( q \sigma_1 \otimes \ketbra{00}{00} + (1-q) \sigma_2 \otimes \ketbra{11}{11} \bigg\| \Lambda(\rho) \bigg) \\
         & \stackrel{(4)}{=} \inf_{\stackrel{\sigma_1, \sigma_2 \in \mathcal{S}_{A':B'}}{ q \in [0,1]} }  D_\text{bin} \left( q \middle\| \tr{P^\dagger P \rho} \right) + q D\left(\sigma_1 \middle\|\rho_P\right) + (1-q) D\left(\sigma_2\middle\|\tau\right) \\
         & \stackrel{(5)}{=} \inf_{ q \in [0,1]} D_\text{bin} \left( q \middle\| \tr{P^\dagger P \rho} \right) + q D(\mathcal{S}_{A^\prime: B^\prime} \|\rho_P)  \, .
    \end{align}

    Here, (1) follows from monotonicity under LOCC maps and in (2) we applied the finite-dimensional generalised Sanov theorem from \cite[Corollary 15]{Lami_2024_2}. To see that we used the most general ansatz in (3), recall that a necessary condition to keep the relative entropy finite is that the support of the first argument has to be contained in the support of the second. Any non-zero cross term in the classical register would violate this support condition. Furthermore, the state \(\sigma_1\) is the post-measurement state when Alice and Bob both measure \(0\) in their classical register. As the global state is separable it therefore has to be separable as well, since it is not possible to create entanglement with local measurements and post-selection alone. The same reasoning then shows that \(\sigma_2\) is also separable. In (4), we expanded the previous expression and introduced the simplifying notation 
    \begin{equation}
        \rho_P := \frac{ P \rho P^\dagger}{ \tr{ P^\dagger P \rho} } \, ,
    \end{equation}
    and in (5) we simply observed that the choice \( \sigma_2 = \tau\) is optimal. 
    
    In the last line, we can now replace \(  D(\mathcal{S}_{A^\prime: B^\prime} \|\rho_P) \) w.l.o.g. by \( D(\mathcal{S}_{A: B} \|\rho_P) \). To see this, observe that the support of \(\rho_P\) is by construction contained in \(A^\prime \otimes B^\prime \). Any \(\sigma \in \mathcal{S}_{A:B}\) that has support outside of \(A^\prime \otimes B^\prime \) leads to an infinite relative entropy, as it violates the support condition. By definition as an infimum, we can thus safely expand the feasible set without changing the optimal value.

    We now take \( P = P_m\) as the \(m\)-th element of a sequence \(\{P_m\}_{m} \) of finite-rank projectors that converges strongly to the identity (note that this also implies the convergence \(\rho_{P_m} \to \rho\) in trace-norm). Since \( \lim_{m\to \infty} \frac{1}{2} \norm{ \rho_{P_m} - \rho }_1 = 0\), by the lower semi-continuity of the reverse relative entropy of entanglement (see Lemma~\ref{Lem:Reverse_REE_Properties}), we have for all \(\delta > 0\) that \( D(\mathcal{S}_{A: B}\|\rho_{P_m}) \geq D(\mathcal{S}_{A: B}\|\rho) - \delta \) for all sufficiently large \( m\). 
    
    Hence, the above argument shows that for all sufficiently large \( m\) we have 
    \begin{align}
        \liminf_{n \to \infty} \frac{1}{n} D_H^\varepsilon \left( \mathcal{S}_{A_n:B_n} \middle\| \rho^{\otimes n}\right) &\geq \inf_{ q \in [0,1]}  \bigg\{ D_\text{bin} \left( q \middle\| \tr{P_m^\dagger P_m \rho} \right) + q D(\mathcal{S}_{A: B} \|\rho_{P_m})  \bigg\} \\ 
        &\geq \inf_{ q \in [0,1]} \bigg\{ D_\text{bin} \left( q \middle\| \tr{P_m^\dagger P_m \rho} \right) + q \bigg( D(\mathcal{S}_{A: B} \|\rho) - \delta \bigg) \bigg\} \, . 
    \end{align}
    We can then take the limit \(m \to \infty\) using Lemma~\ref{Lem:Binary_Relative_Entropy} (note that \( \tr{P_m^\dagger P_m \rho} \to 1\) from below) to deduce that for all \(\delta>0\) we have
    \begin{align}
    	\liminf_{n \to \infty} \frac{1}{n} D_H^\varepsilon \left( \mathcal{S}_{A_n:B_n} \middle\| \rho^{\otimes n}\right) &\geq D(\mathcal{S}_{A:B} \|\rho) - \delta \, .
    \end{align}
    We conclude the proof by letting \(\delta \to 0\).
\end{proof}

\begin{remark} Our argument extends to general quantum resource testing provided that the finite-dimensional Sanov theorem holds. We only require that the cone generated by the set of free states \(\mathrm{cone}(\mathcal{F})\) is weak\(^\star\)-closed (which is often the case) and the set of free operations includes our choice of measure-and-prepare channel.
\end{remark}

\section{Bosonic Continuous-Variable Systems}

In what follows, we specialise our analysis to bosonic continuous-variable (CV) systems with a finite number of modes. We prove that, for Gaussian input states, the reverse relative entropy of entanglement is efficiently computable. We then use it to derive an upper bound on the error exponent of quantum communication for the class of teleportation-simulable channels. We go on to derive explicit analytical expressions of our bound for the most relevant one-mode Gaussian channels. Finally, we provide a preliminary analysis of the achievability of our bound.

\subsection{Gaussian Reverse Relative Entropy of Entanglement}\label{Sec:Gaussian_Reverse_REE}

Let us first revisit the definition of the reverse relative entropy of entanglement given in Sec.~\ref{Sec:Reverse_REE}. Note that even though the reverse relative entropy of entanglement is an operational entanglement measure, it is not efficiently computable in general due to the optimisation over separable states (see the discussion in the main text). However, suppose that Alice and Bob share a bipartite quantum state of a bosonic CV system with \( m = m_A + m_B \) modes, where Alice holds \(m_A\) modes and Bob the remaining \(m_B\). If the quantum state they share is Gaussian, the characterisation of its separability considerably simplifies. This is because the entanglement properties of bipartite Gaussian states can be conveniently translated at the level of quantum covariance matrices in terms of simple positive-semi definite constraints~\cite{Werner_2001} (see also~\cite{Lami_2018_1} for the state-of-the-art results).

In light of this simple characterisation of Gaussian separability, a common approach in the theory of entanglement measures of CV systems is to \emph{Gaussify} the measure under consideration. This then leads to an efficiently computable entanglement measure with notable examples being the entanglement of formation~\cite{Wolf_2004}, squashed entanglement~\cite{Lami_2017} and the standard relative entropy of entanglement~\cite{Chen_2005}. Naturally, we can also define a Gaussian version of the reverse relative entropy of entanglement via
\begin{equation}
	D( \mathcal{S}_G \| \rho)  := \min_{\sigma \in \mathcal{S}_G } D(\sigma \| \rho) \, ,
\end{equation}
where \( \mathcal{S}_G := \mathcal{S} \cap \mathcal{G}\) denotes the set states that are separable and Gaussian.
\begin{remark}
	Note that both the set of Gaussian states \(\mathcal{G}\) \cite[Lemma 1 in Appendix A]{Lami_2018_2} and the separable set are closed w.r.t.\ the trace norm topology; hence, \( \mathcal{S}_G \) is trace-norm closed as the intersection of two closed sets. Moreover, the cones generated by both sets are known to be weak\(^\star\)-closed (see \cite[Lemma 34]{Lami_2023} for the Gaussian case). Since \(\mathrm{cone}(\mathcal{S}_G) = \mathrm{cone}(\mathcal{S}) \cap \mathrm{cone}(\mathcal{G})\), the cone generated by \( \mathcal{S}_G\) is weak\(^\star\)-closed as well. By Lemma~\ref{Lem:Reverse_REE_Properties}, the Gaussian relative entropy of entanglement is a lower-semicontinuous entanglement monotone and the infimum is always attained. However, note also that the set of Gaussian states is \emph{not} convex in general.
\end{remark}

However, a common issue with these Gaussian measures is that they typically lose their operational interpretation and are not known to coincide with their regular counterparts, except in special cases~\cite{Giedke_2003,Akbari_2015}. However, the reverse relative entropy of entanglement is an exception to this rule. Specifically, we show below that the regular and Gaussian reverse relative entropy of entanglement coincide for Gaussian inputs. This is a particularly noteworthy property, as it is believed that the standard relative entropy of entanglement does \emph{not} have this property. 

\begin{lemma}\label{Lem:Gaussian_Reverse_REE}
    Let \(\mathcal{H}_{AB} = \mathcal{H}_A \otimes \mathcal{H}_B\) denote the Hilbert space of a bosonic \((m_A + m_B)\)-mode CV system, and let \(\mathcal{S}\) and \(\mathcal{G}\) denote the set of separable and Gaussian quantum states on \( \mathcal{H}_{AB}\), respectively. Using the definitions introduced above, for any \(\rho_G \in \mathcal{G} \), we have the equality
    \begin{equation}
        D( \mathcal{S} \| \rho_G) =  D( \mathcal{S}_G \| \rho_G) \, ,
    \end{equation}
    where \( \mathcal{S}_G := \mathcal{S} \cap \mathcal{G}\) denotes the set of states that are both separable and Gaussian.
\end{lemma}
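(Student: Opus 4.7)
The inclusion $\mathcal{S}_G \subseteq \mathcal{S}$ gives $D(\mathcal{S}\|\rho_G) \leq D(\mathcal{S}_G\|\rho_G)$ for free, so all the work is in the reverse inequality. The plan is to show that every $\sigma \in \mathcal{S}$ with $D(\sigma\|\rho_G) < \infty$ can be replaced, without increasing its relative entropy against $\rho_G$, by a Gaussian separable state $\sigma_G$ with the same first and second moments as $\sigma$. This will be done in two steps: (i) reduce the feasible set to separable states with finite second moments; (ii) Gaussify via the bipartite quantum central limit theorem and exploit the Gibbs form of $\rho_G$.

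For step (i), the goal is to argue that $\sigma$ with infinite second moments necessarily has $D(\sigma\|\rho_G) = +\infty$. The natural tool is the variational formula for the (measured) relative entropy of~\cite{Ferrari_2023}, specialised to quadratic test operators such as small multiples of a mode number operator $\hat{n}_i$. Because $\rho_G$ has Gaussian tails, $Z_c := \mathrm{Tr}\!\big[\rho_G\exp(c\,\hat{n}_i)\big]$ is finite for sufficiently small $c>0$, yielding a bound of the form $D(\sigma\|\rho_G)\ \geq\ c\,\mathrm{Tr}[\sigma\hat{n}_i] - \log Z_c$. Finiteness of $D(\sigma\|\rho_G)$ then forces finiteness of $\mathrm{Tr}[\sigma\hat{n}_i]$ for every mode, so $\sigma$ has a well-defined covariance matrix $\boldsymbol{V}_\sigma$ and first moment vector $\boldsymbol{\mu}_\sigma$.

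For step (ii), given such $\sigma \in \mathcal{S}$, take $\sigma_G := \rho_G[\boldsymbol{\mu}_\sigma,\boldsymbol{V}_\sigma]$. By the bipartite quantum central limit theorem, $\sigma_G$ is the trace-norm limit of the states obtained from $\sigma^{\otimes n}$ by applying a generic passive symplectic (e.g.\ an $n$-mode quantum Fourier transform) locally to Alice's $n$ modes and locally to Bob's $n$ modes, and then tracing out all but one output copy. Each pre-limit state is separable across $A{:}B$ since local operations and partial traces preserve separability of the initially separable $\sigma^{\otimes n}$; by trace-norm closure of $\mathcal{S}$ this gives $\sigma_G \in \mathcal{S}_G$. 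To conclude, use the Gibbs form \eqref{Eq:Gibbs_Form}: since $-\log\rho_G$ is quadratic in the quadratures, the cross-term $\mathrm{Tr}[\sigma(-\log\rho_G)]$ depends only on $\boldsymbol{\mu}_\sigma,\boldsymbol{V}_\sigma$, so it is unchanged under the passage $\sigma \to \sigma_G$. Combining this with the Gaussian maximum entropy principle $H(\sigma_G) \geq H(\sigma)$ (both quantities finite thanks to step (i)) gives $D(\sigma_G\|\rho_G) \leq D(\sigma\|\rho_G)$, and taking the infimum over $\sigma \in \mathcal{S}$ yields $D(\mathcal{S}_G\|\rho_G) \leq D(\mathcal{S}\|\rho_G)$.

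The main obstacle is step (i). In infinite dimensions one cannot simply truncate $\sigma$ to a low-energy subspace without destroying separability, so the naive finite-dimensional reductions are unavailable. What makes the argument work is precisely the variational characterisation of the relative entropy from~\cite{Ferrari_2023}, which remains meaningful on the full infinite-dimensional state space and can be probed against unbounded quadratic observables through a careful limiting procedure. Once finite second moments are secured, the Gaussification in step (ii) and the concluding entropy comparison are essentially the standard CV argument used for the entanglement of formation and related Gaussian measures.
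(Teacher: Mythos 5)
Your overall strategy is the same as the paper's: the trivial inclusion from $\mathcal{S}_G\subseteq\mathcal{S}$, a reduction to separable states with finite second moments via the variational lower bound of~\cite{Ferrari_2023}, and then Gaussification combined with the quadratic Gibbs form of $\rho_G$ and the maximum entropy principle. The one genuine gap is in your step~(i) as stated: you apply the variational formula of~\cite{Ferrari_2023} directly to $D(\sigma\|\rho_G)$, but that lemma requires the second argument to be faithful, and a Gaussian $\rho_G$ need not be (its Williamson decomposition may contain vacuum factors; $\rho_G$ could even be pure). The paper first writes $\rho_G = U_G\big(\ketbra{0}{0}_k\otimes Z^{-1}e^{-\hat H}\big)U_G^\dagger$, uses the support condition implied by $D(\sigma\|\rho_G)<\infty$ to conclude $U_G^\dagger\sigma U_G=\ketbra{0}{0}_k\otimes\sigma'$, and only then invokes the variational bound on the faithful factor, choosing $L=\exp(\hat H_n/2)$ built from the Gibbs Hamiltonian of $\rho_G$ itself so that $\tr{Z^{-1}e^{-\hat H/2}}<\infty$ is automatic. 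Your choice $L=\exp(c\hat n_i)$ additionally requires the (true, but unproven in your sketch) claim that every Gaussian state has finite exponential photon-number moments for small enough $c>0$; you do correctly anticipate the need for a truncation/limiting procedure, which parallels the paper's spectral cutoff $\hat H_n=P_n\hat H$. So step~(i) needs either the faithfulness reduction or an independent justification of the lower bound for non-faithful $\rho_G$ to be complete.

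In step~(ii) you take a genuinely different route to the separability of the Gaussification: a bipartite quantum central limit theorem (local $n$-splitters on Alice's and Bob's modes applied to $\sigma^{\otimes n}$, partial trace, and trace-norm closedness of $\mathcal{S}$), whereas the paper argues at the level of covariance matrices (Property~3 of Lemma~\ref{Lem:Gaussification}): a separable state satisfies $\boldsymbol{V}_{AB}\geq\boldsymbol{V}_A\oplus\boldsymbol{V}_B$, so its Gaussification arises from a product Gaussian state via a classical-noise channel (random local displacements), which preserves separability. Your CLT route is legitimate and conceptually appealing, but it needs two small repairs: the standard CLT converges to the \emph{zero-mean} Gaussification, so you should first displace $\sigma$ to vanishing first moments by local unitaries and displace back at the end; and the trace-norm convergence needs a justification (e.g.\ pointwise convergence of characteristic functions together with the fact that weak convergence of states to a state upgrades to trace-norm convergence), which in turn uses exactly the finite second moments secured in step~(i). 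The concluding comparison $D(\sigma_G\|\rho_G)\leq D(\sigma\|\rho_G)$ via the moment-only dependence of $\tr{\sigma\log\rho_G}$ and $H(\sigma)\leq H(\sigma_G)$ coincides with the paper's argument.
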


The key technical tool in our proof is a method known as Gaussification of quantum states (see e.g.~\cite{Marian_2013}). Let \(\rho\) be an arbitrary \(m\)-mode CV quantum state with finite first and second moments. We can then associate with \(\rho\)  the unique Gaussian state that has the same first and second statistical moments. This is often termed the \emph{Gaussificiation} of \(\rho\) and will be denoted in the following by \(\rho_G\). In the following lemma, we collect the key properties of the Gaussification that we need.
\begin{lemma}\label{Lem:Gaussification}
	Let \(\mathcal{H}_{AB}\) be the bipartite Hilbert space of a bosonic \((m_A + m_B)\)-mode CV system and consider an arbitrary quantum state \(\rho\) with finite first and second moment. Its Gaussification \(\rho_G\), as defined above, satisfies the following properties:
    \begin{enumerate}
         \item \( \tr{ \rho \log \tau_G } = \tr{ \rho_G \log \tau_G } \) for any Gaussian state \(\tau_G \in \mathcal{G}(\mathcal{H})\).
         
        \item  \( H(\rho) \leq H(\rho_G) \) provided that \( H(\rho) < \infty \).

        \item If \( \rho \) is separable w.r.t.\ to the cut \(A:B\), then \( \rho_G \) is separable as well.
    
    \end{enumerate}

\end{lemma}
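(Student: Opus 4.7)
The three items have rather different flavors, so I would prove them in order, with the third being the real technical content.

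For Part 1, the plan is to exploit the Gibbs-form representation recalled in the preliminaries: any Gaussian state $\tau_G$ satisfies
\begin{equation}
\log \tau_G = -\log Z[\boldsymbol{V}_\tau] \cdot \Id - (\hat{\boldsymbol{r}} - \boldsymbol{\mu}_\tau)^T \boldsymbol{G}[\boldsymbol{V}_\tau] (\hat{\boldsymbol{r}} - \boldsymbol{\mu}_\tau).
\end{equation}
Since this is at most quadratic in the quadrature operators, the expectation $\tr{\rho \log \tau_G}$ is an affine function of the first and second statistical moments of $\rho$ alone (the finite-second-moment hypothesis ensures that all relevant expectations are well defined). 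By definition, $\rho_G$ shares these moments with $\rho$, so the equality follows immediately. For Part 2, I would apply Klein's inequality $D(\rho\|\rho_G) \geq 0$; under the assumption $H(\rho) < \infty$ this expands as $H(\rho) \leq -\tr{\rho \log \rho_G}$, and invoking Part 1 with $\tau_G = \rho_G$ rewrites the right-hand side as $-\tr{\rho_G \log \rho_G} = H(\rho_G)$. This is the familiar maximum-entropy principle for Gaussian states.

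For Part 3, the main obstacle, my plan is to realise $\rho_G$ as a trace-norm limit of separable states via the quantum central limit theorem (Cushen--Hudson and its more recent quantitative refinements). Specifically, I would consider $\rho^{\otimes n}$ on the $n$-fold tensor power of $\mathcal{H}_{AB}$, regroup the $n$ replicas of Alice's modes on one side of the cut and the $n$ replicas of Bob's modes on the other, and apply a local passive symplectic unitary $U_A^{(n)} \otimes U_B^{(n)}$ implementing the discrete-Fourier mixing of the $n$ copies inside Alice's and Bob's subsystems separately (shifting by the displacement of $\rho$ if it is nonzero). Tracing out all but the first $N_A + N_B$ modes yields a state $\rho_n$ which, by the CV quantum CLT, converges in trace norm to the Gaussification $\rho_G$ as $n \to \infty$. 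Each $\rho_n$ is separable across the $A\!:\!B$ cut: $\rho^{\otimes n}$ is separable because separability is stable under tensor products, the local passive Gaussian unitary preserves this, and so does the subsequent local partial trace. Since the set of separable states is trace-norm closed, the limit $\rho_G$ is separable as well.

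The principal subtlety will be ensuring that the CLT convergence is genuinely in trace norm under the finite-second-moment hypothesis (rather than merely in weak/characteristic-function sense), for which I would rely on the quantitative quantum CLT results available for CV systems with bounded second moments. A secondary technical point is the mode-relabelling needed to read the bipartition $A^n\!:\!B^n$ of the $n$-fold tensor power in a way compatible with the mixing unitary being local across the cut; this is a bookkeeping matter but needs to be set up carefully. Once those two aspects are handled, the local operations $+$ limit $+$ closedness argument closes the proof cleanly.
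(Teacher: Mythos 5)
Your Parts 1 and 2 coincide with the paper's argument: the Gibbs exponential form makes $\log\tau_G$ a second-order polynomial in the quadratures, so the expectation depends only on first and second moments, and Part 2 is non-negativity of $D(\rho\|\rho_G)$ combined with Part 1. For Part 3, however, you take a genuinely different route. The paper follows Werner--Wolf (via Serafini): it shows that the covariance matrix of any separable state satisfies $\boldsymbol{V}_{AB}\geq \boldsymbol{V}_A\oplus\boldsymbol{V}_B$ for some local quantum covariance matrices, so the Gaussian state with covariance matrix $\boldsymbol{V}_{AB}$ is obtained from a product Gaussian state by adding classical Gaussian noise, i.e.\ a mixture of local displacements, which preserves separability. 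Your argument instead realises $\rho_G$ as a trace-norm limit of separable states through the Cushen--Hudson quantum central limit theorem applied with the beamsplitter-mixing unitary acting locally on Alice's $n$ copies and Bob's $n$ copies, and then invokes trace-norm closedness of $\mathcal{S}$ (which the paper indeed has, cf.\ Eq.~\eqref{Eq:Separable}). This is sound: the mixing symplectic acts identically and independently on each mode, so it does factor as $U_A^{(n)}\otimes U_B^{(n)}$ across the $A\!:\!B$ cut, and the convergence you worry about does hold in trace norm under finite second moments alone --- Cushen--Hudson gives convergence of states to a bona fide density operator, and weak convergence of density operators to a density operator automatically upgrades to trace-norm convergence, so you do not actually need the quantitative refinements (which would impose higher-moment hypotheses). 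The paper's proof buys elementarity and self-containedness (pure covariance-matrix linear algebra plus a separability-preserving channel), while yours buys conceptual economy by outsourcing the work to the quantum CLT; both are complete, though yours leans on a heavier external theorem than necessary for this statement.
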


\begin{proof} Note that all of these results are well-known in the literature and we only provide their proof here for the sake of completeness.

\begin{enumerate} 

    \item This observation was made e.g.\ in \cite[Appendix]{Holevo_1999} (see also \cite[Theorem 1]{Marian_2013}).

    Note that by definition the Gaussification satisfies for all \( 1 \leq j, k \leq 2 m \) that
    \begin{align}
        \tr{ ( \rho - \rho_G) \hat{r}_k} &= 0 && \text{and} &\tr{ ( \rho - \rho_G) \anticom{\hat{r}_j}{\hat{r}_k}} &= 0 \, .
    \end{align}
    Moreover, we also have \( \tr{ ( \rho - \rho_G) \com{\hat{r}_j}{\hat{r}_k}} = 0 \) due to the CCR. Consequently, \( \tr{ ( \rho - \rho_G) f(\hat{\boldsymbol{q}}) } = 0 \) holds for any second-order polynomial \(f\)  in the canonical quadrature operators. 
    
    Given an arbitrary Gaussian state \( \tau_G\), the operator \( \log \tau_G \) is such a second-order polynomial. The latter follows immediately from the Gibbs exponential form given in Eq.~\ref{Eq:Gibbs_Form}.
    
    \item This is referred to as the maximum entropy principle (see e.g.\ \cite[Appendix]{Holevo_1999}). This in turn can be seen as a special case of the general extremality principle from \cite[Lemma 1]{Wolf_2006}.

    We have that
    \begin{equation}
        H(\rho_G) - H(\rho) = \tr{ \rho( \log \rho - \log \rho_G) } + \tr{ (\rho - \rho_G) \log \rho_G } \geq 0 \, ,
    \end{equation}
    where we identified the first term as the relative entropy \( D(\rho\|\rho_G) \), which is known to be non-negative, and the second term vanishes by the previous property.

    \item This was first established in \cite[Proposition 1]{Werner_2001}. We follow the proof given in \cite[Chapter 7.2]{Serafini}.

    We first show that the for an arbitrary separable state with covariance matrix \( \boldsymbol{V}_{AB} \), there always exists covariance matrices \(\boldsymbol{V}_A\) and \( \boldsymbol{V}_B \) such that \( \boldsymbol{V}_{AB} \geq \boldsymbol{V}_A \oplus \boldsymbol{V}_B \). Note that by definition any separable state can be decomposed into a convex mixture of product states with convex weights \(p_k\) as
    \begin{equation}
        \rho_{AB} = \sum_k p_k ( \rho_A^k \otimes \rho_B^k ) \, .
    \end{equation}
    Let the component product states \(\rho_A^k \otimes \rho_B^k\) have displacement vectors \(\boldsymbol{\mu}^k\) and block diagonal covariance matrices \( \boldsymbol{V}^k = \boldsymbol{V}_A^k \oplus \boldsymbol{V}_B^k\).

    By linearity of the trace, the displacement vector of \(\rho_{AB}\) has components \( \mu_i = \sum_k p_k \mu^k_i \). Similarly, the components of its covariance matrix satisfy
    \begin{equation}\label{Eq:Separable_Covariance_Matrix}
        V_{i, j} + 2 \mu_i \mu_j = \sum_k p_k ( V_{i,j}^k + 2 \mu_i^k \mu_j^k ) \, .
    \end{equation}
    The difference between \( \boldsymbol{V}\) and the block-diagonal matrix \( \sum_k p_k \boldsymbol{V}^k \) is thus given by
    \begin{equation}
        \Delta_{i,j} = 2 \left( \sum_k p_k \mu_i^k \mu_j^k  - \sum_{k,l} p_k p_l \mu_i^k \mu_j^l\right) \, .
    \end{equation}
    This defines a positive semi-definite matrix since for any \( \boldsymbol{x}\) we have
    \begin{equation}
        \boldsymbol{x}^T \boldsymbol{\Delta} \boldsymbol{x} = \sum_{j,j} x_i \Delta_{i,j} x_j = \sum_{k,l} p_k p_l (s_k - s_l)^2 \geq 0 \, ,
    \end{equation}
    where we introduced \( s_k = \sum_{i} x_i \mu_i^k \). 
    
    The above observation implies that we can write \( \boldsymbol{V}_{AB} = \boldsymbol{V}_A \oplus \boldsymbol{V}_B  + \boldsymbol{Y}\), where \(\boldsymbol{Y}\) is a positive semi-definite matrix. Hence, the Gaussian state with covariance matrix \( \boldsymbol{V}_{AB}\) may be obtained from an uncorrelated Gaussian state with covariance matrix \(\boldsymbol{V}_A \oplus \boldsymbol{V}_B\) by the action of an additive noise channel. Mathematically, this can be represented as the action of random local unitaries weighted by a Gaussian probability distribution (see e.g.\ \cite[Chapter 5.3.2]{Serafini} for more details). Crucially, the action of such a channel preserves the separability of the input state, proving the claim.
    \end{enumerate}
\end{proof}

The main technical hurdle in our proof is to show that we can restrict without loss of generality to separable states with finite second moments. Once this is established, the proof follows almost immediately by the properties of the Gaussification discussed above.
\begin{proof}[Proof of Lemma~\ref{Lem:Gaussian_Reverse_REE}]

First, observe that since \( \mathcal{S}_G \subseteq \mathcal{S} \), we have by a subset argument that
\begin{equation}\label{Eq:Gaussian_Bound}
    D(\mathcal{S}\|\rho) \leq D(\mathcal{S}_G\|\rho)
\end{equation}
for any quantum state \(\rho \in \mathcal{D}(\mathcal{H})\). The point of the proof is to show that the reverse holds provided that the state is Gaussian, i.e.\, we will show for any \(\rho_G \in \mathcal{G}\) that
\begin{equation}\label{Eq:Claim}
	D(\mathcal{S}_G\|\rho_G) \leq D(\mathcal{S}\|\rho_G) \, .
\end{equation}

We now show that we can restrict the optimisation to states with finite second moments. Note that this automatically implies the finiteness of its first moments. The idea for this part of the proof comes from \cite[Corollary 35]{Lami_2023}. 

Without loss of generality, we assume that \(D(\mathcal{S}\|\rho_G)\) is finite (since otherwise the claimed inequality holds trivially). Thus, consider an arbitrary state \( \sigma  \in \mathcal{S} \) with \( D(\sigma\|\rho_G) < \infty \). Following \cite[Chapter 3]{Serafini}, we can write the Gaussian state as \[ \rho_G =  U_G  \left( \ketbra{0}{0}_k \otimes \frac{1}{Z} \exp( - \hat{H} ) \right) U_G^\dagger \, ,\] where \( \ketbra{0}{0}_k\) is the \(k\)-mode vacuum state, \(U_G\) a Gaussian unitary operator, 
	\begin{equation}
		\hat{H} = \sum_{j=1}^{m-k} \frac{\omega_j}{2} ( \hat{x}_j^2 + \hat{p}_j^2 ) 
	\end{equation}
	 is the canonical Hamiltonian of the system with \(\omega_j > 0 \) and \(Z\) a normalisation constant. Due to the unitary invariance of the relative entropy, we have
	\begin{equation}
		D( \sigma \| \rho_G ) = D \left( U_G^\dagger \sigma U_G \middle\| \ketbra{0}{0}_k \otimes Z^{-1} \exp \left( - \hat{H} \right)  \right) \, . 
	\end{equation}
	Given that \( D(\sigma\|\rho_G) < \infty \) holds, we can write \( U_G^\dagger \sigma U_G= \ketbra{0}{0}_k \otimes \sigma^\prime \) due to the support condition. Consequently, we have established the equality \( D( \sigma \| \rho_G ) =  D \left( \sigma^\prime \middle\| Z^{-1} \exp \left( - \hat{H} \right) \right) \). 
	
	As the state \( Z^{-1} \exp \left( - \hat{H} \right) \) is faithful, we can now invoke the variational expression for the measured relative entropy from \cite[Lemma 20]{Ferrari_2023} to obtain
\begin{align}
	D \left( \sigma^\prime \middle\| Z^{-1} \exp \left( - \hat{H} \right) \right)  &\geq \sup_{ L > 0 } \left\{ \tr{ \sigma^\prime \log L } - \log \tr{ Z^{-1} \exp \left( - \hat{H} \right) L }  \right\} \\
		&\geq \frac{1}{2} \tr{ \sigma^\prime \hat{H}_n } - \log \tr{ Z^{-1} \exp \left( - \frac{\hat{H}_n}{2} \right) } \, , \label{Eq:Bound}
\end{align}
where in the last step we simply picked the feasible operator \( L =  \exp \left( \hat{H}_n /2\right) >  0\) with \( \hat{H}_n = P_n \hat{H}\), where \(P_n\) denotes the spectral projector of \(H\) corresponding to the interval \([0,n]\).  As this holds for any \(n\), we can take the limit \(n \to \infty\), resulting (after rewriting) in
\begin{equation}
	\frac{1}{2} \tr{ \sigma^\prime \hat{H} } \leq D \left( \sigma^\prime \middle\| Z^{-1} \exp \left( - \hat{H} \right) \right)  +\log \tr{ Z^{-1} \exp \left( - \frac{\hat{H}}{2} \right) } \, .
\end{equation}

Note that the RHS is finite as 
\begin{equation}
	\log \tr{ Z^{-1} \exp \left( - \frac{\hat{H}}{2} \right) } = \sum_{j=1}^{m-k} \log \left( 2 \cosh \left( \frac{\omega_j}{4}\right) \right) < \infty \, .
\end{equation}
We conclude that \( \tr{ \sigma^\prime \hat{H} } < \infty\), which in turn implies that \( \tr{ \sigma \hat{H} } < \infty\) and we must have \(\tr{ \sigma ( \hat{x}_j^2 + \hat{p_j}^2 ) } < \infty \) for all modes of the system. This is equivalent to \(\sigma\) having finite first and second moments. Note that this also implies that the state has finite entropy (see e.g.~\cite{Shirokov_2006}).

 Having established the viability of this restriction, we can now apply the method of Gaussification. Fix an arbitrary \( \sigma \in \mathcal{S}\) with finite second moments (which implies finite first moments) and its associated Gaussian state \(\sigma_G\) as defined above. Using the properties from Lemma~\ref{Lem:Gaussification}, we have that
\begin{align}
    D(\sigma\|\rho_G) = - H(\sigma) - \tr{ \sigma \log \rho_G} \geq - H(\sigma_G) - \tr{ \sigma_G \log \rho_G} = D(\sigma_G\|\rho_G) \geq D(\mathcal{S}_G\|\rho_G) \, ,
\end{align}
where the first inequality follows by Property 1 and 2 of Lemma~\ref{Lem:Gaussification} and the second by Property 3 of Lemma~\ref{Lem:Gaussification}. We then finish the proof by taking the infimum over \(\sigma \in \mathcal{S}\) on the LHS.
\end{proof}

\subsection{Efficiently Computable Entanglement Measure}\label{Sec:Convex_Program}

Lemma~\ref{Lem:Gaussian_Reverse_REE} is the key to the first main insight of this chapter, which shows that the reverse relative entropy of entanglement is an operationally meaningful entanglement measure that is also efficiently computable for Gaussian inputs. Specifically, we use it to derive a characterisation of the reverse relative entropy of entanglement for Gaussian states as a \emph{convex} program on the level of quantum covariance matrices with two simple PSD constraints. Consequently, the reverse relative entropy of entanglement is -- to the best of the authors' knowledge -- the first \emph{operational} entanglement measure that is also \emph{efficiently computable} for Gaussian states. 

\begin{proposition}[Main finding \ref{Prop:Main_2} from main text]\label{Prop:Convex_Program}
	Let \(\mathcal{H}_{AB}\) be the Hilbert space of a bosonic \((m_A + m_B)\)-mode CV system and \(\mathcal{G}\) be the set of Gaussian quantum states on \( \mathcal{H}_{AB} \).  The reverse relative entropy of entanglement of \( \rho_G \in\mathcal{G}\) with covariance matrix \( \boldsymbol{V}_\rho\) can be computed by the following convex program:
\begin{align}
     \displaystyle \min_{\boldsymbol{V}_{AB}, \boldsymbol{\gamma}_A } \quad & \frac{\tr{ \boldsymbol{V}_{AB} ( \boldsymbol{G}[\boldsymbol{V}_\rho] - \boldsymbol{G}[\boldsymbol{V}_{AB}] ) }}{ 2 \ln(2)} + \log \sqrt{  \frac{ \det \left( \boldsymbol{V}_\rho + i \boldsymbol{\Omega} \right)}{\det \left( \boldsymbol{V}_{AB} + i \boldsymbol{\Omega} \right)} } \\
     \text{s.t.} \quad & \boldsymbol{V}_{AB} \geq \boldsymbol{\gamma}_A \oplus i \boldsymbol{\Omega}_B \nonumber \quad \mathrm{and} \quad \boldsymbol{\gamma}_A \geq i \boldsymbol{\Omega}_A \nonumber
\end{align}
\end{proposition}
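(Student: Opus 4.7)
The plan is to reduce the infimum in Definition~\ref{Def:Reverse_REE} to a finite-dimensional optimization over $2N \times 2N$ real matrices in three steps: restrict to Gaussian separable states via Lemma~\ref{Lem:Gaussian_Reverse_REE}, express the Gaussian relative entropy explicitly in terms of covariance matrices, and parameterize Gaussian separability via the PSD description from~\cite{Lami_2018_1}. Since local Gaussian displacements are free operations preserving both separability and Gaussianity, and since matching first moments minimizes the relative entropy between two Gaussians with fixed covariance matrices, the infimum may be further restricted to Gaussian separable states sharing the displacement vector of $\rho_G$. After a joint local displacement unitary, I may therefore assume both states have vanishing first moments without loss of generality.

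Next, I would compute $D(\sigma_G[\boldsymbol{V}_\sigma]\|\rho_G[\boldsymbol{V}_\rho])$ directly from the Gibbs form of Eq.~\eqref{Eq:Gibbs_Form}. The key identity is that, for any zero-mean Gaussian state, $\tr{\sigma_G\, \hat{\boldsymbol{r}}^T A \hat{\boldsymbol{r}}} = \frac{1}{2}\tr{A\boldsymbol{V}_\sigma}$ whenever $A$ is symmetric (as both $\boldsymbol{G}[\boldsymbol{V}_\rho]$ and $\boldsymbol{G}[\boldsymbol{V}_\sigma]$ are), which follows from $\tr{\sigma_G \hat{r}_i \hat{r}_j} = \frac{1}{2}((V_\sigma)_{ij} + i\Omega_{ij})$ and antisymmetry of $\boldsymbol{\Omega}$. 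Plugging into $-H(\sigma_G) - \tr{\sigma_G \log\rho_G}$ and using $Z[\boldsymbol{V}] = \sqrt{\det(\frac{1}{2}(\boldsymbol{V}+i\boldsymbol{\Omega}))}$ (with the $\frac{1}{2}$ factors cancelling in the ratio) reproduces exactly the objective displayed in the proposition. For the feasible set, the Gaussian separability criterion of~\cite{Lami_2018_1} states that a Gaussian state with covariance matrix $\boldsymbol{V}_\sigma$ is separable across $A:B$ if and only if there exists a bona fide single-party covariance matrix $\boldsymbol{\gamma}_A \geq i\boldsymbol{\Omega}_A$ satisfying $\boldsymbol{V}_\sigma \geq \boldsymbol{\gamma}_A \oplus i\boldsymbol{\Omega}_B$, which is precisely the constraint set of the program.

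The main obstacle is establishing convexity: although the constraints are linear in $(\boldsymbol{V}_\sigma,\boldsymbol{\gamma}_A)$, the objective depends nonlinearly on $\boldsymbol{V}_\sigma$ through the Gibbs matrix $\boldsymbol{G}[\boldsymbol{V}_\sigma] = i\boldsymbol{\Omega}\arcoth(\boldsymbol{V}_\sigma i\boldsymbol{\Omega})$ and through $\log\det(\boldsymbol{V}_\sigma + i\boldsymbol{\Omega})$. Rather than attempting a direct Hessian calculation, I would lift the joint convexity of the quantum relative entropy to the level of covariance matrices via Gaussification. Given two feasible points $\boldsymbol{V}_\sigma^{(1)}, \boldsymbol{V}_\sigma^{(2)}$ with associated zero-mean Gaussian separable states $\sigma_G^{(1)}, \sigma_G^{(2)}$, the mixture $\sigma := \lambda\sigma_G^{(1)} + (1-\lambda)\sigma_G^{(2)}$ is separable but generally non-Gaussian; however, because both components share vanishing first moments, its covariance matrix equals $\lambda\boldsymbol{V}_\sigma^{(1)} + (1-\lambda)\boldsymbol{V}_\sigma^{(2)}$, and by Property~3 of Lemma~\ref{Lem:Gaussification} its Gaussification $\tilde{\sigma}_G$ remains separable. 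Properties~1 and~2 of Lemma~\ref{Lem:Gaussification} then yield $D(\tilde{\sigma}_G\|\rho_G) \leq D(\sigma\|\rho_G)$ via $H(\sigma) \leq H(\tilde{\sigma}_G)$ and $\tr{\sigma\log\rho_G} = \tr{\tilde{\sigma}_G\log\rho_G}$, while ordinary convexity of $D(\cdot\|\rho_G)$ gives $D(\sigma\|\rho_G) \leq \lambda D(\sigma_G^{(1)}\|\rho_G) + (1-\lambda)D(\sigma_G^{(2)}\|\rho_G)$. Chaining these inequalities establishes convexity of $\boldsymbol{V}_\sigma \mapsto D(\sigma_G[\boldsymbol{V}_\sigma]\|\rho_G)$ on the feasible set, which combined with the convex PSD constraints in $(\boldsymbol{V}_\sigma,\boldsymbol{\gamma}_A)$ completes the proof.
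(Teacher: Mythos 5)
Your proposal is correct and follows essentially the same route as the paper's proof: restriction to Gaussian separable states via Lemma~\ref{Lem:Gaussian_Reverse_REE}, reduction to vanishing first moments, the explicit moment-based formula for the Gaussian relative entropy (which you rederive from the Gibbs form rather than citing \cite[Theorem 7]{Pirandola_2017}), the separability criterion of \cite{Lami_2018_1}, and convexity lifted from the state level to covariance matrices via the Gaussification properties of Lemma~\ref{Lem:Gaussification}. The only differences are presentational, so there is nothing further to add.
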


\begin{proof}
	Building on Lemma~\ref{Lem:Gaussian_Reverse_REE}, the proof essentially follows by combining the separability criterion from \cite[Theorem 5]{Lami_2018_1} with the characterisation of the relative entropy between two Gaussian states in terms of their statistical moments from \cite[Theorem 7]{Pirandola_2017}.

	To begin with, observe that we can shift the vector of first moments of \(\rho_G\), denoted \(\boldsymbol{\mu}_\rho \), to zero using local Gaussian unitaries~\cite{Serafini}. Moreover, note that the reverse relative entropy of entanglement is invariant under such local unitary operations. This follows directly from its monotonicity under local operations (cf.\ Lemma~\ref{Lem:Reverse_REE_Properties}) combined with the reversibility of unitary channels. In what follows, we can therefore assume \(\boldsymbol{\mu}_\rho = 0\) without loss of generality.

	Thus, let \(\rho_G := \rho_G [ 0, \boldsymbol{V}_\rho ]\) and consider an arbitrary Gaussian state \( \sigma_G := \sigma_G [\boldsymbol{\mu}_\sigma, \boldsymbol{V}_{\sigma} ]\).  According to \cite[Theorem 7]{Pirandola_2017}, we can write their relative entropy as
	\begin{equation}
    	D( \sigma_G [\boldsymbol{\mu}_\sigma, \boldsymbol{V}_{\sigma} ] \| \rho_G [ 0, \boldsymbol{V}_\rho ] ) = - \Sigma ( \boldsymbol{V}_\sigma, \boldsymbol{V}_\sigma,0) + \Sigma\left( \boldsymbol{V}_\sigma, \boldsymbol{V}_\rho, \boldsymbol{\mu}_\sigma \right) \, ,
	\end{equation}
	where the functional \(\Sigma\) is given by
	\begin{equation}
    	2 \ln(2) \cdot \Sigma\left( \boldsymbol{V}_1, \boldsymbol{V}_2, \boldsymbol{\mu}_1 - \boldsymbol{\mu}_2 \right) :=  2 \ln Z[\boldsymbol{V}_2] + \tr{ \boldsymbol{V}_1 \boldsymbol{G}[\boldsymbol{V}_2] }+ (\boldsymbol{\mu}_1 - \boldsymbol{\mu}_2)^T \boldsymbol{G}[\boldsymbol{V}_2] (\boldsymbol{\mu}_1 - \boldsymbol{\mu}_2) \, .
	\end{equation}

	The Gibbs matrix is positive-definite and thus \( (\boldsymbol{\mu}_\sigma - \boldsymbol{\mu}_\rho)^T \boldsymbol{G}(\boldsymbol{V}_2) (\boldsymbol{\mu}_\sigma - \boldsymbol{\mu}_\rho) > 0 \) for all \(\boldsymbol{\mu}_\sigma \not= \boldsymbol{\mu}_\rho = 0\). Since we are searching for a minimum, we can thus assume w.l.o.g.\ that \( \boldsymbol{\mu}_\sigma = \boldsymbol{\mu}_\rho = 0\). Consequently, the objective function is given explicitly by
	\begin{equation}
    	D(\sigma_G[0, \boldsymbol{V}_\sigma] \| \rho_G[0, \boldsymbol{V}_\rho] ) =  \frac{\tr{ \boldsymbol{V}_{\sigma} ( \boldsymbol{G}[\boldsymbol{V}_\rho] - \boldsymbol{G}[\boldsymbol{V}_{\sigma}] ) }}{2 \ln(2)} + \log \sqrt{  \frac{ \det \left( \boldsymbol{V}_\rho + i \boldsymbol{\Omega} \right)}{\det \left( \boldsymbol{V}_{\sigma} + i \boldsymbol{\Omega} \right)} } \; .
	\end{equation}
	
	Moreover, the separability of Gaussian states can be expressed in terms of their covariance matrix only. Concretely, we take the criterion from \cite[Theorem 5]{Lami_2018_1} which states that a \((m_A+m_B)\)-mode Gaussian state with covariance matrix \(\boldsymbol{V}_{AB} \) is separable if and only if there exists a \emph{bona-fide} \(m_A\)-mode quantum covariance matrix \(\boldsymbol{\gamma}_A \geq i \boldsymbol{\Omega}_A \) such that \( \boldsymbol{V}_{AB} \geq \boldsymbol{\gamma}_A \oplus i \boldsymbol{\Omega}_B \). Combining this with the above expression results in the program as stated in the proposition.

	Lastly, we prove the convexity of this program. The convexity of the feasible set is clear. Regarding the objective function, since we can assume that \(\boldsymbol{\mu}_\sigma = \boldsymbol{\mu}_\rho = 0\), we can lift the convexity of the relative entropy on the level of states can to the level of covariance matrices. That is, we will show convexity of the map
	\begin{equation}
		\boldsymbol{V_\sigma} \mapsto D( \sigma_G[0, \boldsymbol{V}_\sigma] \| \rho_G[0, \boldsymbol{V}_\rho] ) \, .
	\end{equation} 
	Consider an arbitrary collection \(\{ p_i, \boldsymbol{V}_i\}_i \) with \( \sum_i p_i  =1\). It then follows that
	\begin{align}
		\sum_i p_i D( \sigma_G[ 0, \boldsymbol{V}_i ] \| \rho_G )  &\geq D \left( \sum_i p_i  \sigma_G[ 0, \boldsymbol{V}_i ] \middle\| \rho_G \right) \\
			& \geq D \left( \left(  \sum_i p_i  \sigma_G[ 0, \boldsymbol{V}_i  ] \right)_G \middle\| \rho_G \right)  =  D \left( \sigma_G \left[ 0 , \sum_i p_i \boldsymbol{V}_i \right]  \middle\| \rho_G \right) \, ,
	\end{align}
	where in the first inequality step we used the convexity of the relative entropy and in the second we used our Gaussification argument from the proof of Lemma~\ref{Lem:Gaussian_Reverse_REE}. In the final step, we then observed that \(\sum_i p_i  \sigma_G[ 0, \boldsymbol{V}_i ] \) has zero first moments and its covariance matrix is given by the convex mixture of the individual covariance matrices (cf.\ Eq.~\ref{Eq:Separable_Covariance_Matrix} to see this).
\end{proof}

\begin{remark}\label{Rem:Efficiency}
	The claim of efficiency for this program derives from the fact that it is a finite-dimensional program with two simple PSD constraints. As such, it can straightforwardly be solved with off-the-shelf solvers based on interior-point methods, which are well-known to be efficient in praxis (cf.\ e.g.~\cite{Boyd}). For a rigorous complexity-theoretic efficiency statement, one would (after rewriting it into a standard conic program using the epigraph formulation) need to construct a self-concordant logarithmic barrier function of the resulting convex cone. There exist universal constructions of such barrier functions that would lead to provable convergence guarantees using the barrier method, i.e.\ polynomial-time solvability. However, these construction themselves are not efficient in general. A possible solution would be to adapt the specialised literature on relative entropy optimisation such as~\cite{Fawzi_2023} to this special case. 
\end{remark}

\subsection{The Error Exponent of Two-Way Assisted Quantum Communication}\label{Sec:Upper_Bound}
	
We now turn our focus to the problem of quantum communication, referred to in technical as subspace transmission. Suppose Alice holds an \(m\)-qubit system \(M\) that she wishes to transmit to Bob such that any entanglement with a reference system \(R\)  is preserved. To achieve this, they may use the noisy quantum channel \(\mathcal{N} = \mathcal{N}_{A\to B}\) \(n\) times and employ unlimited two-way classical communication and adaptive local operations, referred to as adaptive LOCC and denoted \(\mathrm{LOCC}_{\leftrightarrow} (\mathcal{N}^{\times n}) \). 
 
 Let us denote the effective channel realised by the protocol \( \Lambda_n \in \mathrm{LOCC}_{\leftrightarrow} (\mathcal{N}^{\times n}) \) as \( \mathcal{M}_{M \to \overline{M}} \), where \(\overline{M} \simeq M \) is hold by Bob with \(|M| = |\overline{M}| = 2^m\). The goal of a quantum communication protocol is to approximate the identity channel \( \mathcal{I}_{M \to \overline{M}} \). We can define the error of the protocol \(\Lambda_n\) as
 \begin{equation}
 	\varepsilon_{\mathrm{QC},\Delta}( \Lambda_n) := \sup_{ \phi_{MR} } \Delta \bigg( \mathcal{M}_{M \to \overline{M}}( \phi_{MR}) , \phi_{\overline{M}R} \bigg) \, ,
 \end{equation}
 where \(\Delta\) is the trace distance (or the purified distance), and the optimisation runs over all pure states \( \phi_{MR} \).
 
 Previous works in this setting characterised performance in terms of the \emph{quantity} of the transmitted qubits, asking at what rate faithful qubit transmission is possible asymptotically (see e.g.~\cite{Pirandola_2017,Bennett_1996_3}). This leads to the \emph{two-way assisted quantum capacity} of the channel as the relevant figure-of-merit. In this work, we instead take again the \emph{quality} of the transmission as our performance measure. Specifically, we define the error exponent of two-way assisted quantum communication for \(m\) qubits as
 \begin{equation}
 	Q_{\leftrightarrow, \Delta}^{(m)} ( \mathcal{N} ) := \sup_{ \{\Lambda_n\} } \bigg\{ \liminf_{n\to\infty} - \frac{1}{n} \log \varepsilon_{\mathrm{QC},\Delta}( \Lambda_n) : \Lambda_n \in \mathrm{LOCC}_{\leftrightarrow}(\mathcal{N}^{\times n}) \bigg\} \, .
 \end{equation}
Ultimately, our goal here is to characterise the (zero-rate) \emph{error exponent of two-way assisted quantum communication}, defined as 
 \begin{equation}
 	Q_{\leftrightarrow, \Delta} ( \mathcal{N} ) := \lim_{m \to \infty} Q_{\leftrightarrow, \Delta}^{(m)} ( \mathcal{N} ) \, .
 \end{equation}
 
 A closely related task in this assistance framework is the distribution of entanglement, where Alice and Bob aim to establish \(m\) copies of the maximally entangled state \(\Phi_{A_0B_0}\) between them. Denoting the output state of the protocol as \(\rho(\Lambda_n)\), we now want that \( \rho(\Lambda_n) \approx_\varepsilon \Phi^{\otimes m}\). The error of an entanglement distribution protocol \(\Lambda_n\) is then defined as
 \begin{equation}
 	\varepsilon_{\mathrm{ED}, \Delta}( \Lambda_n) := \Delta \left( \rho(\Lambda_n), \Phi^{\otimes m} \right) \, ,
 \end{equation}
 where typically choices for the error metric \(\Delta\) are the trace distance, infidelity (or purified distance). We can then define the error exponent of two-way assisted entanglement distribution for \(m\) ebits as
\begin{align}\label{Eq:Entanglement_Distribution_Exponent}
	E_{\leftrightarrow, \Delta}^{(m)}(\mathcal{N}) &:= \sup_{ \{\Lambda_n \} } \left\{ \liminf_{n \to \infty} -\frac{1}{n} \log \varepsilon_{\mathrm{ED}, \Delta}( \Lambda_n)    : \Lambda_n \in \mathrm{LOCC}(\mathcal{N}^{\times n}) \right\} \, .
\end{align}

\begin{remark}
	Observe that the target state is pure and thus an immediate consequence of the Fuchs–van de Graaf inequalities is that
	\begin{equation}
		E^{(m)}_{\leftrightarrow, \mathrm{P} } (\mathcal{N})  \leq E^{(m)}_{\leftrightarrow, \mathrm{Tr}} (\mathcal{N}) \leq E^{(m)}_{\leftrightarrow, \mathrm{F}} (\mathcal{N}) \quad \forall m \, .
	\end{equation}
\end{remark}

The (zero-rate) \emph{error exponent of two-way assisted entanglement distribution} is then defined as 
\begin{equation}
	E_{\leftrightarrow, \Delta} (\mathcal{N}) := \lim_{m \to \infty} E_{\leftrightarrow, \Delta}^{(m)}(\mathcal{N}) \, .
\end{equation}

Note that the distribution of entanglement is a special case of the transmission of arbitrary qubits. Moreover, once Alice and Bob can reliably establish maximally entangled qubits, they can also transmit an arbitrary qubit with the help of the quantum teleportation protocol. Below we formally show that this results in the error exponents for these two tasks being equal. We are thus justified to refer to the exponent in Eq.\ \ref{Eq:Entanglement_Distribution_Exponent} in more general terms as \emph{the} (zero-rate) error exponent of two-way assisted quantum communication, as we have done in the main text.
\begin{lemma}\label{Lem:QC_Exponent}
	For all \(m\in\mathbb{N}\), we have the identity
	\begin{equation}
		E^{(m)}_{\leftrightarrow, \Delta} ( \mathcal{N}) = Q^{(m)}_{\leftrightarrow,\Delta} (\mathcal{N}) \quad \text{where} \quad \Delta \in \{ \mathrm{P}, \mathrm{Tr} \} \, .
	\end{equation}
\end{lemma}

\begin{proof}
	The inequality \( E^{(m)}_{\leftrightarrow, \Delta} ( \mathcal{N}) \geq Q^{(m)}_{\leftrightarrow,\Delta} (\mathcal{N}) \) follows by observing that Alice can locally prepare \(m\) ebits and then use the quantum communication protocol to transmit one part of each ebit to Bob achieving an error of at most \( \varepsilon_{\mathrm{QC},\Delta}( \Lambda_n) \).
	
	For the reverse inequality, we use the quantum teleportation protocol. We can phrase this as follows: There exists an LOCC channel \( \Pi_{MA_0^m: B_0^m \to \emptyset : \overline{M} }\) such that
	\begin{equation}
		\Pi_{MA_0^m: B_0^m \to \emptyset : \overline{M} } \left( (\cdot) \otimes \Phi_{A_0B_0}^{\otimes m} \right) = \mathcal{I}_{M \to \overline{M}} (\cdot) \, .
	\end{equation} 
	Let the quantum communication protocol consists of first running the entanglement distribution protocol \(\Lambda_n\) followed by quantum teleportation. This modified protocol \(\tilde{\Lambda}\)  achieves the error 
	\begin{align}
		\varepsilon_{\mathrm{QC},\Delta}( \tilde{\Lambda}_n) &= \sup_{ \phi_{MR} } \Delta \bigg( \Pi_{MA_0^m: B_0^m \to \emptyset : \overline{M} } ( \phi_{MR} \otimes \rho(\Lambda_n) )  , \phi_{\overline{M}R} \bigg) \\
		&= \sup_{ \phi_{MR} } \Delta \bigg( \Pi_{MA_0^m: B_0^m \to \emptyset : \overline{M} } ( \phi_{MR} \otimes \rho(\Lambda_n) )  , \Pi_{MA_0^m: B_0^m \to \emptyset : \overline{M} } \left( \phi_{MR} \otimes \Phi_{A_0B_0}^{\otimes m} \right) \bigg) \\
		&\leq \sup_{ \phi_{MR} } \Delta \bigg( \phi_{MR} \otimes \rho(\Lambda_n)  , \phi_{MR} \otimes \Phi_{A_0B_0}^{\otimes m} \bigg) \\
		&= \Delta \bigg( \rho(\Lambda_n)  , \Phi_{A_0B_0}^{\otimes m} \bigg) = \varepsilon_{\mathrm{ED},\Delta}( \Lambda_n) \, ,
	\end{align}
	where in the last two steps we used the data-processing inequality and the multiplicativity of the trace distance (or purified distance). 
\end{proof}

A further simplification of this exponent is obtained by restricting to entanglement distribution protocols that produce an isotropic state at the output. Recall that the maximally entangled state of dimension \(d \in \mathbb{N} \) is defined as 
\begin{align}
	\ket{\Phi_d} &:= \frac{1}{\sqrt{d}} \sum_{i=0}^{d-1} \ket{i} \ket{i} & \text{and} && \Phi_d := \ketbra{\Phi_d}{\Phi_d}
\end{align}
A \(d\)-dimensional isotropic state then has the general form \( f \Phi_d + (1-f) \tau_d  \), where \(\tau_d \) is the orthogonal complement of \(\Phi_d\) and \(f \) is called the singlet fraction. For these class of protocols, an alternative notion of the error exponent can be defined based on the obtained singlet fraction: 
	\begin{equation}
		E^{(m)}_{\leftrightarrow, f }(\mathcal{N}) := \sup_{ \{\Lambda_n \} } \left\{ \liminf_{n \to \infty} -\frac{1}{n} \log (1-f_n)   : \rho(\Lambda_n) \stackrel{!}{=} f_n \Phi_{2^m} + (1-f_n) \tau_{2^m} , \Lambda_n \in \mathrm{LOCC}(\mathcal{N}^{\times n}) \right\}
	\end{equation}
and \( E_{\leftrightarrow, f }(\mathcal{N}) \) is obtained as before in the limit \(m \to \infty\). Crucially, as we can implement an isotropic twirling with LOCC post-processing, this restriction does not lose any generality. Since trace distance and infidelity coincide on isotropic states, we can thus further strengthen the inequalities between the error exponents from above to equalities.
\begin{lemma}\label{Lem:Exponents}
	For all \(m \in \mathbb{N}\), we have 
	\begin{equation}
		E^{(m)}_{\leftrightarrow, f} (\mathcal{N}) = E^{(m)}_{\leftrightarrow, \mathrm{Tr}} (\mathcal{N}) = E^{(m)}_{\leftrightarrow, \mathrm{F}} (\mathcal{N}) = 2 E^{(m)}_{\leftrightarrow, \mathrm{P}} (\mathcal{N}) \, .
	\end{equation}
	
	Consequently, the following chain of equalities holds:
	\begin{equation}
		E_{\leftrightarrow, f} (\mathcal{N}) = E_{\leftrightarrow, \mathrm{Tr}} (\mathcal{N}) = E_{\leftrightarrow, \mathrm{F}} (\mathcal{N}) = 2 E_{\leftrightarrow, \mathrm{P}} (\mathcal{N}) \, .
	\end{equation}
\end{lemma}

\begin{proof}

Fix \(m \in \mathbb{N} \).  Observe that \( E^{(m)}_{\leftrightarrow, \mathrm{F}} (\mathcal{N}) = 2 E^{(m)}_{\leftrightarrow, \mathrm{P}} (\mathcal{N}) \) holds by definition.

We start by showing that \(  E^{(m)}_{\leftrightarrow, f} (\mathcal{N}) \leq  E^{(m)}_{\leftrightarrow, \mathrm{Tr}} (\mathcal{N}) \). Let \( d = 2^{m}\) and consider a sequence of entanglement distribution protocols \(\Lambda_n \in \mathrm{LOCC}_\leftrightarrow(\mathcal{N}^{\times n})\) that outputs \( \rho(\Lambda_n) = f_n \Phi_d + (1-f_n) \tau_d \). Observe that
	\begin{equation}
			\frac{1}{2} \norm{ \rho(\Lambda_n) - \Phi_{d} }_{1} =  1 - f_n  \, .
		\end{equation}
	Thus, we have the generic lower bound
	\begin{equation}
		E^{(m)}_{\leftrightarrow, \mathrm{Tr}} (\mathcal{N}) \geq \liminf_{n \to \infty} - \frac{1}{n} \log( 1-f_n) 
	\end{equation}
	and taking the supremum on the RHS over all such sequences of protocols shows the claim.
	
	Observe that by the Fuchs–van de Graaf inequalities, we have \(E^{(m)}_{\leftrightarrow, \mathrm{Tr}} (\mathcal{N}) \leq E^{(m)}_{\leftrightarrow, \mathrm{F}} (\mathcal{N})\). Thus, to complete the proof we only need to show that \(  E^{(m)}_{\leftrightarrow, \mathrm{F}} (\mathcal{N}) \leq  E^{(m)}_{\leftrightarrow, f} (\mathcal{N}) \). 
	
	For this, consider a sequence of entanglement distribution protocols \(\Lambda_n \in \mathrm{LOCC}(\mathcal{N}^{\times n})\) that achieves fidelity \( F \left( \rho(\Lambda_n), \Phi_{d} \right) \). We can then straightforwardly modify this sequence of protocols such that the outputs are isotropic states by implementing an isotropic twirl at the end~\cite{Bennett_1996_3, Horodecki_1999_1}. Mathematically, the isotropic twirl of dimension \(d\) is given by
		\begin{equation}
			\mathcal{T}_d( \cdot ) = \int_{\mathrm{U}(d)} U \otimes \overline{U} ( \cdot ) ( U \otimes \overline{U} )^\dagger d \mu_H(U) \, ,
		\end{equation}
		where \( \mathrm{d}\mu_{H}(U) \) denotes the Haar measure over the unitary group of dimension \(d\). Crucially, it is known that this map can be implemented by two-way assisted LOCC using only a finite number of unitaries (resorting to a unitary 2-design). The modified protocols \( \tilde{\Lambda}_n \) now produce
		\begin{equation}
			\rho(\tilde{\Lambda}_n) = \mathcal{T}_d \left( \rho \left(\Lambda_n \right) \right) = f_n \Phi_d + (1- f_n) \tau_d \, ,
		\end{equation}
	with a singlet fraction that satisfies
	\begin{equation}
		f_n =  F( \mathcal{T}_d \left( \rho \left(\Lambda_n \right) \right), \Phi_{d} ) = F \left( \mathcal{T}_d \left( \omega \left(\Lambda_n \right) \right),  \mathcal{T}_d \left( \Phi_{d} \right) \right) \geq F( \rho \left(\Lambda_n \right), \Phi_d ) \, ,
	\end{equation}
	where in the second equality we used the invariance of the maximally entangled state under isotropic twirling and the inequality holds due to the data-processing inequality for the fidelity. Thus, we have the generic lower bound
	\begin{equation}
		E^{(m)}_{\leftrightarrow, f} (\mathcal{N}) \geq \liminf_{n \to \infty} - \frac{1}{n} \log \bigg( 1 - F \big( \rho \left(\Lambda_n \right), \Phi_d \big) \bigg) 
	\end{equation}
	and taking the supremum over all sequences of protocols finishes the proof.
\end{proof}

\subsection{A Converse Bound for Teleportation-Simulable Channels}\label{Sec:Upper_Bound_2}

	In the following, we focus on the special class of bosonic channels that can be simulated (asymptotically) with a teleportation protocol using their Choi state. A CV quantum channel \(\mathcal{N}_{A \to B}\) is said to be \emph{teleportation-simulable} with its Choi state if for all \(r \geq 0\) there is an LOCC protocol \(\Lambda_r \in \mathrm{LOCC}(AA^\prime:B \to B) \) (with \( A^\prime \simeq A\)) such that for all states on \(\sigma\) on \(\mathcal{H}\), it holds that
	\begin{equation}
		\lim_{r \to \infty} \Lambda_r \left( \sigma_{A} \otimes \rho_\mathcal{N}(r)  \right) = \mathcal{N}(\sigma)_B \, ,
	\end{equation}
	where the convergence is understood in trace norm and \( \rho_\mathcal{N}(r) \) is the quasi-Choi state of \(\mathcal{N}\) defined in Eq.~\eqref{Def:Choi}. 
	
	Notably, all Gaussian channels, and in fact all \emph{linear bosonic channels}~\cite{Holevo_2001, Lami_2018_3}, are known to be teleportation-simulable, with \(\Lambda_r\) being based on the Braunstein-Kimble CV teleportation protocol~\cite{Braunstein_1998}. For this class of channels, we can bound the (zero-rate) error exponent of two-way assisted quantum communication in terms of the reverse relative entropy of entanglement. Our bound nicely parallels the capacity bound of Pirandola et al.\ \cite{Pirandola_2017} (see also Bennett et al.\ \cite{Bennett_1996_3}) in terms of the (regularised) relative entropy of entanglement. 
	
	\begin{proposition}[Main Finding \ref{Prop:Main_1} in main text]\label{Lem:Error_Exponent_Quantum_Communication}
		Let \(\mathcal{H}_{AB}\) be the Hilbert space of a bosonic \((m_A + m_B)\)-mode CV system and \(\mathcal{S}_{A:B}\) the set of separable states on \(\mathcal{H}_{AB}\). Using the definitions introduced above, for any teleportation-simulable channel \(\mathcal{N}_{A \to B}\), we have that
		\begin{equation}
			Q_{\leftrightarrow, \mathrm{Tr} }(\mathcal{N}_{A \to B}) \leq \liminf_{r \to \infty} D( \mathcal{S}_{A:B} \| \rho_\mathcal{N}(r) ) \, ,
		\end{equation}
		where \( \rho_\mathcal{N}(r) \coloneqq (\mathcal{N} \otimes \mathcal{I})\big(\Phi(r)^{\otimes m}\big) \) is the quasi-Choi state that is obtained by sending one half of the state \(\Phi(r)^{\otimes m}\), where $\Phi(r)$ is the two-mode squeezed vacuum state, through the channel.
	\end{proposition}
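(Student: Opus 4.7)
The plan is to bound any achievable exponent $s$ by a single-letter quantity on the quasi-Choi state, combining the teleportation reduction with three properties of the reverse relative entropy of entanglement established in Lemma~\ref{Lem:Reverse_REE_Properties}—lower semi-continuity, monotonicity under non-entangling maps, and additivity on product inputs—and the hypothesis-testing reformulation used in the proof of Lemma~\ref{Lem:Error_Exponent}. By Lemma~\ref{Lem:Achievable_Exponent} I may assume that for every $m\in\mathbb{N}_+$ and all sufficiently large $n$ there is an adaptive LOCC protocol $\Lambda_n\in\mathrm{LOCC}_\leftrightarrow(\mathcal{N}^{\times n})$ whose output $\rho(\Lambda_n)$ is an isotropic state with fidelity $F_n\geq 1-2^{-ns}$ to $\Phi_d^{\otimes m}$ for $d=2$.

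The first key step rewrites the distillation condition as a hypothesis test on $\rho(\Lambda_n)$ itself. Setting $M:=\mathrm{Id}-\Phi_d^{\otimes m}$, the standard isotropic singlet-fraction bound of Horodecki et al.\ yields $\tr{(\mathrm{Id}-M)\sigma}=\tr{\Phi_d^{\otimes m}\sigma}\leq 2^{-m}$ for every separable $\sigma$, while $\tr{M\rho(\Lambda_n)}=1-F_n\leq 2^{-ns}$, giving $D_H^{2^{-m}}(\mathcal{S}\|\rho(\Lambda_n))\geq ns$. The standard inequality $D_H^\varepsilon(\sigma\|\rho)\leq (D(\sigma\|\rho)+h_2(\varepsilon))/(1-\varepsilon)$, applied to an optimiser $\sigma_\star\in\mathcal{S}$ of $D(\mathcal{S}\|\rho(\Lambda_n))$—which exists by Lemma~\ref{Lem:Reverse_REE_Properties}—then delivers $ns \leq (D(\mathcal{S}\|\rho(\Lambda_n))+h_2(2^{-m}))/(1-2^{-m})$.

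The second key step uses the teleportation reduction to bound $D(\mathcal{S}\|\rho(\Lambda_n))$ by a quantity on the quasi-Choi state. For each squeezing $r$, replacing every use of $\mathcal{N}$ in $\Lambda_n$ by the Braunstein--Kimble LOCC teleportation with pre-distributed resource $\rho_\mathcal{N}(r)$ produces an LOCC (hence non-entangling) protocol $\widetilde\Lambda_n^{(r)}$ acting directly on $\rho_\mathcal{N}(r)^{\otimes n}$, and teleportation-simulability implies $\widetilde\Lambda_n^{(r)}(\rho_\mathcal{N}(r)^{\otimes n})\to\rho(\Lambda_n)$ in trace norm as $r\to\infty$. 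Monotonicity under non-entangling maps together with additivity on product inputs from Lemma~\ref{Lem:Reverse_REE_Properties} give $D(\mathcal{S}\|\widetilde\Lambda_n^{(r)}(\rho_\mathcal{N}(r)^{\otimes n}))\leq nD(\mathcal{S}\|\rho_\mathcal{N}(r))$ for every $r$; combined with the lower semi-continuity from the same lemma, this yields
\begin{equation}
D(\mathcal{S}\|\rho(\Lambda_n)) \;\leq\; \liminf_{r\to\infty} D\left(\mathcal{S}\middle\|\widetilde\Lambda_n^{(r)}(\rho_\mathcal{N}(r)^{\otimes n})\right) \;\leq\; n\liminf_{r\to\infty}D(\mathcal{S}\|\rho_\mathcal{N}(r)).
\end{equation}
Chaining the two bounds, dividing by $n$, and letting first $n\to\infty$ and then $m\to\infty$ delivers $s\leq \liminf_{r\to\infty} D(\mathcal{S}_{A:B}\|\rho_\mathcal{N}(r))$, as claimed.

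The main obstacle is establishing the trace-norm convergence $\widetilde\Lambda_n^{(r)}(\rho_\mathcal{N}(r)^{\otimes n})\to\rho(\Lambda_n)$ in the second step: teleportation-simulability is stated pointwise in the input for a single channel use, whereas $\widetilde\Lambda_n^{(r)}$ applies the Braunstein--Kimble teleportation $n$ times nested inside an adaptive LOCC circuit whose intermediate states themselves depend on $r$. Because $n$ is fixed when this limit is taken, the convergence can be propagated by a finite induction on the number of channel uses combined with continuity of each adaptive LOCC operation; the $\liminf_{r\to\infty}$ in the statement is precisely the device that absorbs the fact that no single squeezing suffices for all $n$, and the additivity of the reverse relative entropy of entanglement is what eliminates the need for any regularisation on the right-hand side.
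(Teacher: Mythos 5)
Your proposal is correct and follows the same overall architecture as the paper: reduction to achievable exponents via Lemma~\ref{Lem:Achievable_Exponent}, teleportation stretching to replace the $n$ channel uses by $\rho_\mathcal{N}(r)^{\otimes n}$, and then lower semicontinuity, LOCC monotonicity and additivity of the reverse relative entropy of entanglement from Lemma~\ref{Lem:Reverse_REE_Properties} to reach $n\liminf_{r\to\infty}D(\mathcal{S}\|\rho_\mathcal{N}(r))$. Where you genuinely deviate is the final lower bound on $D(\mathcal{S}\|\rho(\Lambda_n))$: the paper exploits the isotropic symmetry of the output, using the twirl and joint convexity to evaluate $D(\mathcal{S}\|\rho_{\mathrm{iso}}(F))=D_\mathrm{bin}(d^{-1}\|F)$ exactly and then applying the elementary inequality of Eq.~\eqref{Eq:Elementary_inequality}, whereas you run a data-processing argument through the hypothesis-testing quantity with the single binary test $\{\Phi_{2^m},\mathrm{Id}-\Phi_{2^m}\}$ and the singlet-fraction bound $\tr{\Phi_{2^m}\,\sigma}\le 2^{-m}$ for separable $\sigma$. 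Your route is slightly more elementary: it never needs the exact evaluation of the reverse relative entropy of isotropic states, and in fact only uses $F_n\ge 1-2^{-ns}$ rather than the full isotropic structure (the existence of the optimiser $\sigma_\star$ is also dispensable — a near-optimiser suffices); the paper's symmetry computation, on the other hand, is self-contained and yields the sharper identity $D(\mathcal{S}\|\rho_{\mathrm{iso}}(F))=D_\mathrm{bin}(d^{-1}\|F)$, which is of independent interest. One caveat on your teleportation step: the "finite induction plus contractivity" sketch additionally requires that the pointwise convergence $\Lambda_r(\sigma\otimes\rho_\mathcal{N}(r))\to\mathcal{N}(\sigma)$ persists when the simulated channel acts on one subsystem of a larger correlated state, i.e.\ convergence of the simulating channels tensored with an identity on the registers held back by Alice and Bob; this is exactly what the paper outsources to \cite[Lemma 3]{Pirandola_2017}, so you should either invoke that lemma (as the paper does) or note that strong convergence of channels is stable under tensoring with the identity before running your induction.
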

	
\begin{proof}

		Note that by Lemma \ref{Lem:QC_Exponent} and \ref{Lem:Exponents}, we can focus w.l.o.g.\ on the exponent \( E_{\leftrightarrow, f} (\mathcal{N}) \). 
		
		Let \( 0 \leq s < E_{\leftrightarrow, \mathrm{f}}(\mathcal{N}) \). By definition, there exists \(m_0\) such that for all \(m \geq m_0\), we have
		\begin{equation}
			E_{\leftrightarrow, \mathrm{f}}^{(m)}(\mathcal{N}) > s 
		\end{equation}
		
		Consider an arbitrary \(m \geq m_0\) and set \(d = 2^m \). Then, by definition of \( E_{\leftrightarrow, \mathrm{f}}^{(m)}(\mathcal{N})  \) there exists a sequence of entanglement distribution protocols \( \left\{ \Lambda_{n}^{(m)} \right\}_{n \in \mathbb{N} } \) with \( \Lambda_{n}^{(m)} \in \mathrm{LOCC}(\mathcal{N}^{\otimes n})\) protocols such that
		\begin{align}
			\rho\left( \Lambda_{n}^{(m)} \right) = f_n \Phi_d + (1-f_n) \tau_d \quad \text{with} \quad \liminf_{n \to \infty} - \frac{1}{n} \log( 1- f_n) > s
		\end{align}
		Equivalently, there exists \(N(m) \) such that for all \(n \geq N(m) \) we have \( f_n \geq 1 - 2^{-ns} \). 
		
		Now consider \(n \geq N(m) \). By the teleportation-simulability of the channel, there exists a sequence of LOCC operations \( \overline{\Lambda}_{n,r}^{(m)} \) such that
		\begin{equation}
			\lim_{r \to \infty } \overline{\Lambda}_{n,r}^{(m)} ( \rho_\mathcal{N}(r)^{\otimes n}) = \rho\left( \Lambda_{n}^{(m)} \right)\, ,		\end{equation}
		where the convergence is understood in trace-norm. 
		
		Now, on the one hand, we then have
		\begin{align}
			D \left( \mathcal{S} \middle\| \rho\left( \Lambda_{n}^{(m)} \right) \right) &\leq \liminf_{r \to \infty} D \left( \mathcal{S} \middle\| \overline{\Lambda}_{n,r}^{(m)} \left( \rho_\mathcal{N}(r)^{\otimes n}\right) \right) \\
			&\leq \liminf_{r \to \infty} D \left( \mathcal{S} \middle\| \rho_\mathcal{N}(r)^{\otimes n}  \right) \\
			&= n \liminf_{r \to \infty} D \left( \mathcal{S} \middle \|  \rho_\mathcal{N}(r)  \right) \, ,
		\end{align}
		where we first used the lower-semicontinuity of the reverse relative entropy of entanglement, then its monotonicity under LOCC processing and in the last step its additivity on tensor products (recall Lemma~\ref{Lem:Reverse_REE_Properties} for these properties).  
		
		On the other hand, we can make use of the fact that the output state of the protocol is an isotropic state of dimension \(d\). First, by the support condition, we can restrict the optimisation w.l.o.g.\ to  the set of \(d\)-dimensional separable states, denoted \(\mathcal{S}_d\). The reverse relative entropy of entanglement can then be evaluated analytically using the methods from~\cite{Vollbrecht_2001}. Recall that, by definition, an isotropic state is invariant under all unitaries of the form \( \mathcal{U} := U \otimes \overline{U} \), where \(U\) is any \(d\)-dimensional unitary \(U \in \mathrm{U}(d)\). Since such an operation preserves separability, we can invoke the unitary invariance and the joint convexity of the relative entropy to deduce that for all \(\sigma \in \mathcal{S}_d\):
		\begin{align}
		D \left( \sigma \middle\| \rho\left( \Lambda_{n}^{(m)} \right) \right)  &= \int_{\mathrm{U}(d)} D \left( \mathcal{U} \sigma \mathcal{U}^\dagger \middle\| \rho\left( \Lambda_{n}^{(m)} \right)  \right) \mathrm{d}\mu_{H}(U) \\
		& \geq D \left( \int_{\mathrm{U}(d)}  \mathcal{U} \sigma \mathcal{U}^\dagger \mathrm{d}\mu_{H}(U) \middle\| \rho\left( \Lambda_{n}^{(m)} \right)  \right) \\
		&= D \left( p \Phi_d + (1-p) \tau_d  \middle\| \rho\left( \Lambda_{n}^{(m)} \right)  \right) = p \log \left( \frac{p}{f_n} \right) + (1-p) \log \left( \frac{1-p}{1-f_n}\right) \, ,
	\end{align}
	where \(\mathrm{d}\mu_{H}(U)\) denotes the Haar measure of the unitary group \(\mathrm{U}(d)\) and \(p = \tr{ \sigma \Phi_d}\). 
	
	As we are searching for a minimum, it follows that we can restrict the optimisation w.l.o.g.\ to isotropic states. Recall that the separability of an isotropic state is equivalent to \(p \in [0, 1/d ]\)~\cite{Horodecki_1999_1}. A quick calculation confirms that for all sufficiently large \(n\), we have 
	\begin{align}
		D \left( \mathcal{S} \middle\| \rho\left( \Lambda_{n}^{(m)} \right) \right) = D_\mathrm{bin} \left( d^{-1} \middle\| f_n \right) \, .
	\end{align}
	Then, using the elementary inequality from Eq.~\eqref{Eq:Elementary_inequality} and the fact that \(f_n \geq 1 - 2^{-ns}\), we get the lower bound 
	\begin{align}
		D \left( \mathcal{S} \middle\| \rho\left( \Lambda_{n}^{(m)} \right) \right) &\geq - \log 2 + ns \left( 1 - \frac{1}{d} \right) \, .
	\end{align}
	
	Overall, we showed that for all \(m \geq m_0\) the following holds for all sufficiently large \(n \):
	\begin{equation}
		- 1 + ns \left( 1 - \frac{1}{2^m} \right) \leq n \liminf_{r \to \infty} D \left( \mathcal{S} \middle \|  \rho_\mathcal{N}(r)  \right)
	\end{equation}
	
	Dividing by \(n\), taking the limit \(n\to\infty\) and then the limit \(m\to \infty\) and finally taking the supremum over \(s\), we obtain the claim.
\end{proof}

\begin{remark}\label{Rem:Tight_Bound}
    Recalling the tightest available bounds on the distillable entanglement, there are \emph{essentially} three: (1)~the regularised relative entropy of entanglement, which corresponds to relaxing the LOCC framework to non-entangling operations \cite{Vedral_1998}; (2)~the negativity, which corresponds to the relaxation to PPT operations \cite{Vidal_2002}; and (3)~the squashed entanglement, which is conceptually different and cannot be phrased in the language of LOCC relaxations \cite{Christandl_2004, Wilde_2016}. Moreover, combining~(1) and~(2) yields the Rains bound, which is tighter than both on their own \cite{Rains_1999_2}. To be precise, \cite{Lami_2024_3} investigated dually non-entangling operations, which give a strictly better bound than~(1) that is, however, also regularised. It is an interesting open problem to investigate if there are versions of~(2) and~(3) in the error exponent setting that we study in this work. However, as is evident by the above, there are not many options that can compete with the relative entropy approach of~(1). Moreover, note that none of the above approaches~(1)--(3) can certify NPT bound entanglement.
\end{remark}

\subsection{One-Mode Gaussian Channels}\label{App:Example}

We can compute our upper bound explicitly for the most important one-mode Gaussian channels in optical communication. In the following, we use the mathematical description used in Serafini's textbook (see \cite[Section 5.3]{Serafini} for more details)\footnote{However, we make use of the more common notation \(\lambda = \cos(\theta)^2 \) and \(\eta := \cosh(s)^2 \) for the transmissivity of the attenuator and gain of the amplifier, respectively.}.
\begin{enumerate}
	\item The \emph{thermal attenuator channel} is described mathematically by 
		\begin{align}
			\boldsymbol{X} &= \sqrt{\lambda} \boldsymbol{\sigma}_0  && \text{and} & \boldsymbol{Y} &= n (1-\lambda)\boldsymbol{\sigma}_0
		\end{align}
		with \(\lambda \in [0,1]\) describing the transmissivity of the channel and \(  n \geq 1 \) a thermal noise parameter. We find that the reverse relative entropy of entanglement of the asymptotic Choi state is given by
		\begin{equation}
			\lim_{r \to \infty} D\left(\mathcal{S} \middle\| \rho_G\left[ \boldsymbol{V}_{\mathrm{Att}}(r) \right] \right) = \frac{ n_\mathrm{sep} \left(  \arcoth(n) - \arcoth(n_\mathrm{sep}) \right)  }{\ln(2)} +\log \left( \sqrt{ \frac{n^2-1}{n_\mathrm{sep}^2-1} } \right)
		\end{equation}
		for \(1 \leq n \leq n_\mathrm{sep}(\lambda)\) with \( n_\mathrm{sep}(\theta) := \frac{1+\lambda}{1-\lambda} \) and zero otherwise. Note that this diverges for \(n \to 1\), i.e.\ in the case of the pure loss channel.
	
	\item The \emph{thermal amplifier channel} is described mathematically by
		\begin{align}
			\boldsymbol{X} &= \sqrt{\eta} \boldsymbol{\sigma}_0  && \text{and} & \boldsymbol{Y} &= n (\eta-1)\boldsymbol{\sigma}_0
		\end{align}
		with \(\eta \in [1, \infty) \) describes the gain of the channel and \(  n \geq 1 \) a thermal noise parameter. We find that the reverse relative entropy of entanglement of the asymptotic Choi state is given by
	\begin{equation}
		\lim_{r \to \infty} D\left(\mathcal{S} \middle\| \rho_G\left[ \boldsymbol{V}_{\mathrm{Amp}}(r) \right] \right) = \frac{ n_\mathrm{sep} \left(  \arcoth(n) - \arcoth(n_\mathrm{sep}) \right)  }{\ln(2)} +\log \left( \sqrt{ \frac{n^2-1}{n_\mathrm{sep}^2-1} } \right)
	\end{equation}
	for \(1 \leq n \leq n_\mathrm{sep}(\eta)\) with \( n_\mathrm{sep}(\eta) := \frac{\eta+1}{\eta-1} \) and zero otherwise. Note that this diverges for \(n \to 1\), i.e.\ in the case of the quantum limited amplifier. 
	
	\item Finally, the most common example of an \emph{additive noise channel} is described by
		\begin{align}
			\boldsymbol{X} &= \boldsymbol{\sigma}_0  && \text{and} & \boldsymbol{Y} &= \mu \boldsymbol{\sigma}_0
		\end{align}
		with \( \mu \in [0,\infty)\) describing the induced noise in the channel. We find that the reverse relative entropy of entanglement of the asymptotic Choi state is given by
		\begin{equation}
			 \lim_{r \to \infty} D\left(\mathcal{S} \middle\| \rho_G\left[ \boldsymbol{V}_{\mathrm{Noise}}(r) \right] \right) = \frac{2 - \mu}{\mu \ln(2) } + \log\left( \frac{\mu}{2} \right)
		\end{equation}
		for \( 0 \leq \mu \leq 2 \) and zero otherwise. As expected this diverges in the limit \(\mu \to 0 \), i.e.\ the identity channel.
\end{enumerate}

In the following, we provide the technical derivation of these results. We start with a general prescription that works for the complete class of phase-insensitive channels to reduce the problem to a standard multivariate optimisation problem and then delve into details for our specific channels.

\subsubsection{Exploiting the Symplectic Symmetries}

In order to tackle the optimisation problem analytically, we first take advantage of the inherent symmetries in order to simplify it as far as possible. Observe that the input covariance matrix in our examples is of the general form
 \begin{equation}
	\boldsymbol{V}_\rho = \begin{pmatrix}
		x \boldsymbol{\sigma}_0 & z \boldsymbol{\sigma}_3 \\ z \boldsymbol{\sigma}_3 & y \boldsymbol{\sigma}_0
	\end{pmatrix} \, .
\end{equation}
In the literature, this is known as the \emph{normal form} of two-mode covariance matrices~\cite{Simon_2000, Duan_2000}. 

Consider the generic symplectic transformation \( \boldsymbol{S} = \boldsymbol{S}_1 \oplus \boldsymbol{S}_2 \). This acts on the covariance matrix by congruence as
\begin{equation}
	\begin{pmatrix}
		\boldsymbol{S}_1 & 0 \\ 0 & \boldsymbol{S}_2 
	\end{pmatrix} 
	\begin{pmatrix}
		x \boldsymbol{\sigma}_0 & z \boldsymbol{\sigma}_3 \\ z \boldsymbol{\sigma}_3 & y \boldsymbol{\sigma}_0
	\end{pmatrix}
	\begin{pmatrix}
		\boldsymbol{S}_1^T & 0 \\ 0 & \boldsymbol{S}_2^T
	\end{pmatrix} =
	\begin{pmatrix}
		x \boldsymbol{S}_1 \boldsymbol{S}_1^T & z \boldsymbol{S}_1 \boldsymbol{\sigma}_3 \boldsymbol{S}_2^T\\ z \boldsymbol{S}_2 \boldsymbol{\sigma}_3 \boldsymbol{S}_1^T& y \boldsymbol{S}_2 \boldsymbol{S}_2^T
	\end{pmatrix} .
\end{equation}
The matrix is an invariant of this transformation provided the following three conditions hold:
\begin{align}
	\boldsymbol{S}_1 \boldsymbol{S}_1^T &= \boldsymbol{\sigma}_0 & \boldsymbol{S}_2 \boldsymbol{S}_2^T &= \boldsymbol{\sigma}_0 && \text{and} & \boldsymbol{S}_1 \boldsymbol{\sigma}_3 \boldsymbol{S}_2^T = \boldsymbol{\sigma}_3 \implies \boldsymbol{S}_2 = \boldsymbol{\sigma}_3 \boldsymbol{S}_1 \boldsymbol{\sigma}_3 \, .
\end{align}
The first condition forces \(\boldsymbol{S}_1\) to be orthogonal. Moreover, since \( \boldsymbol{\sigma}_3 \boldsymbol{\Omega}_1 \boldsymbol{\sigma}_3 = - \boldsymbol{\Omega}_1 \) holds, it straightforwardly follows that \(\boldsymbol{S}_2 = \boldsymbol{\sigma}_3 \boldsymbol{S}_1 \boldsymbol{\sigma}_3 \) inherits orthogonality and symplecticity from \(\boldsymbol{S}_1\):
\begin{align}
	\boldsymbol{S}_2 \boldsymbol{\Omega}_1 \boldsymbol{S}_2^T = \boldsymbol{\sigma}_3 \boldsymbol{S}_1 \boldsymbol{\sigma}_3 \boldsymbol{\Omega}_1 \boldsymbol{\sigma}_3 \boldsymbol{S}_1^T \boldsymbol{\sigma}_3 = - \boldsymbol{\sigma}_3 \boldsymbol{S}_1 \boldsymbol{\Omega}_1  \boldsymbol{S}_1 ^T \boldsymbol{\sigma}_3 = - \boldsymbol{\sigma}_3 \boldsymbol{\Omega}_1 \boldsymbol{\sigma}_3= \boldsymbol{\Omega}_1 \, .
\end{align}
Hence, a covariance matrix in normal form is invariant under all transformations of the form 
\begin{equation}
	\boldsymbol{S} = \boldsymbol{O} \oplus ( \boldsymbol{\sigma}_3 \boldsymbol{O} \boldsymbol{\sigma}_3 ) \, ,
\end{equation}
where \(\boldsymbol{O}\) is any \(2 \times 2\) orthogonal symplectic matrix. 

In the Hilbert space picture, such a transformation is represented by an infinite-dimensional local Gaussian unitary \(U_{\boldsymbol{S}} \). Using the unitary invariance of the relative entropy, we obtain
\begin{align}
	D( \sigma_G [ \boldsymbol{V} ] \| \rho_G  [ \boldsymbol{V}_{\rho} ] ) &=  D \left( U_{\boldsymbol{S}} \sigma_G [ \boldsymbol{V} ] U_{\boldsymbol{S}}^\dagger \middle\| U_{\boldsymbol{S}} \rho_G [ \boldsymbol{V}_{\rho} ] U_{\boldsymbol{S}}^\dagger \right) \\
		& = D \left( \sigma_G \left[ \boldsymbol{S} \boldsymbol{V} \boldsymbol{S}^T \right] \middle\| \rho_G \left[ \boldsymbol{S} \boldsymbol{V}_{\rho} \boldsymbol{S}^T \right] \right) \\
		&= D \left( \sigma_G \left[ \boldsymbol{S} \boldsymbol{V} \boldsymbol{S}^T  \right] \middle\| \rho_G [ \boldsymbol{V}_{\rho}  ] \right) = \int D\left( \sigma_G \left[ \boldsymbol{S} \boldsymbol{V} \boldsymbol{S}^T  \right]  \middle\| \rho_G [ \boldsymbol{V}_{\rho}  ] \right) \mathrm{d}\mu_{H} (\boldsymbol{O}_S) \, ,
\end{align}
where \( \mathrm{d}\mu_{H} (\boldsymbol{O}_S) \) denotes the Haar measure over the orthogonal subgroup of the symplectic group. Note that this exists as this is a compact subgroup of the symplectic group~\cite{Arvind_1995}.


Then, using joint convexity of the relative entropy, it follows  that
\begin{align}
	D( \sigma_G [ \boldsymbol{V} ] \| \rho_G  [ \boldsymbol{V}_{\rho} ] ) &= \int D\left( \sigma_G \left[ \boldsymbol{S} \boldsymbol{V} \boldsymbol{S}^T  \right] \middle\| \rho_G [ \boldsymbol{V}_{\rho}  ] \right) \mathrm{d}\mu_{H} (\boldsymbol{O}_S) \\
	&  \geq D \left( \int \sigma_G \left[ \boldsymbol{S} \boldsymbol{V} \boldsymbol{S}^T  \right]  \mathrm{d}\mu_{H} (\boldsymbol{O}_S) 
 \middle \| \rho_G [ \boldsymbol{V}_{\rho}  ] \right) \\
 		& \geq D \left( \left( \int \sigma_G \left[ \boldsymbol{S} \boldsymbol{V} \boldsymbol{S}^T  \right]  \mathrm{d}\mu_{H} (\boldsymbol{O}_S)\right)_G
 \middle \| \rho_G [ \boldsymbol{V}_{\rho}  ] \right)  \\
		& =  D \left( \sigma_G \left[ \int \boldsymbol{S} \boldsymbol{V} \boldsymbol{S}^T  \mathrm{d}\mu_{H} (\boldsymbol{O}_S) \right] \middle\| \rho_G [ \boldsymbol{V}_{\rho}  ] \right) 
\end{align}
where we first used the convexity in the Hilbert space picture followed by our Gaussification argument from the proof of Lemma~\ref{Lem:Gaussian_Reverse_REE}. For the latter, recall that both states have zero first moments and thus the convex mixture on the state level corresponds to a convex mixture of the covariance matrices.

As we are searching for a minimum, we can thus assume w.l.o.g.\ that the optimiser is of the form
 \begin{equation}
	\int \boldsymbol{S} \boldsymbol{V} \boldsymbol{S}^T  \mathrm{d}\mu_{H} (\boldsymbol{O}_S) =
	\begin{pmatrix}
		x \boldsymbol{\sigma}_0 & z \boldsymbol{\sigma}_3 \\ z \boldsymbol{\sigma}_3 & y \boldsymbol{\sigma}_0
	\end{pmatrix}
\end{equation}
with three real-valued parameters \(x,y\) and \(z\), i.e.\ a two-mode covariance matrix in normal form. 

\subsubsection{Details on Two-Mode Normal Forms}\label{Sec:Normal_Form}

The normal form of two-mode covariance matrices is well studied in the literature (see e.g.~\cite{Serafini_2004, Pirandola_2009} and \cite[Section II.C]{Weedbrook_2012} for detailed discussions). Using standard techniques from linear algebra, one finds that the \emph{bona-fide} condition is equivalent to the following constraints on the normal form parameters:
\begin{align}
	x &\geq 1 \, , & y &\geq 1 \, , & \abs{z} &\leq  z_{\max}  :=  \sqrt{ \min \bigg\{  (x-1)(y+1) , (x+1)(y-1)  \bigg\} }=  \sqrt{ xy -1 - |x-y| } \, .
\end{align}

Moreover, by \cite[Theorem]{Simon_2000}, the Peres-Horodecki criterion provides a necessary and sufficient criterion for the separability of two-mode Gaussian states. In terms of the normal form parameters, this can be shown to be equivalent to
\begin{equation}
	z^2 \leq (x-1) (y-1) \, .
\end{equation}

We can then further simplify the optimisation by restricting to the boundary of the separable set. To see that this entails no loss of generality, let \(\boldsymbol{V}_\mathrm{opt}\) be the optimal separable covariance matrix and define \( f(t) := D( (1-t) \boldsymbol{V}_\mathrm{opt} + t \boldsymbol{V}_\rho \| \boldsymbol{V}_\rho ) \) for \(t \in [0,1]\), where \( D( \boldsymbol{V}_1 \| \boldsymbol{V}_2) :=  D( \sigma_G [ \boldsymbol{V}_1 ] \| \rho_G  [ \boldsymbol{V}_2 ] )\). The function \(f\) is convex and non-negative, achieving its minimal value of zero at \(t = 1\). Hence, it is monotone decreasing on the interval \([0,1]\). Since \(\boldsymbol{V}_\mathrm{opt}\) is by assumption the minimiser, it follows that \( (1-t) \boldsymbol{V}_\mathrm{opt} + t \boldsymbol{V}_\rho \) is not separable for all \(t > 0 \). Consequently, \( \boldsymbol{V}_\mathrm{opt}\) must lie on the boundary of the feasible set, i.e.\ the optimiser is so-called border-separable.

Furthermore, the symplectic eigenvalues of a two-mode covariance matrix are given by the handy formula:
\begin{equation}
	\nu_\pm =\sqrt{ \frac{ \Delta(V) \pm \sqrt{ \Delta(V)^2 - 4\det(V) }  }{2} }
\end{equation}
using the two global symplectic invariants
\begin{align}
	\Delta(V) &= x^2 + y^2 - 2z^2 && \text{and} & \det(V) &= (xy - z^2)^2 \, .
\end{align}
In terms of the normal form parameters, the two symplectic eigenvalues are given explicitly by
\begin{align}\label{Eq:Sym_Spectrum}
	\nu_1 &:= \frac{ \sqrt{(x+y)^2 - 4z^2} + (x-y) }{2} & \nu_2 &:= \frac{ \sqrt{(x+y)^2 - 4z^2} + (y-x) }{2} \, ,
\end{align}
where we used the identity
\begin{equation}
	A \pm \sqrt{ A^2 - B^2} = \left( \sqrt{\frac{A+B}{2}} \pm \sqrt{\frac{A-B}{2}}  \right)^2 \, .
\end{equation}

Recall that Williamson's theorem ensures the existence of a symplectic matrix \(\boldsymbol{S}\) that diagonalises the covariance matrix by congruence -- \( \boldsymbol{V} := \boldsymbol{S} \boldsymbol{W} \boldsymbol{S}^T \) -- into its Williamson form \( \boldsymbol{W} := ( \nu_1 \boldsymbol{\sigma}_0 \oplus \nu_2 \boldsymbol{\sigma}_0 )\). For a two-mode covariance matrix in normal form, the symplectic matrix  \(\boldsymbol{S}\) that achieves this transformation is given explicitly by
\begin{align}
	\boldsymbol{S} &= \begin{pmatrix}
		\omega_+ \boldsymbol{\sigma}_0 & \mathrm{sgn}(z) \omega_- \boldsymbol{\sigma}_3 \\ \mathrm{sgn}(z) \omega_- \boldsymbol{\sigma}_3 & \omega_+ \boldsymbol{\sigma}_0
	\end{pmatrix} & \text{with} &&  \omega_\pm &:= \sqrt{ \frac{ x + y \pm \sqrt{(x+y)^2-4z^2} }{ 2 \sqrt{(x+y)^2-4z^2} } } \, .
\end{align}

Using this, we can compute the Gibbs matrix explicitly. We make use of its characterisation in terms of the symplectic action of the real function \(f(x) := \arcoth(x)\). Following \cite[Section IV.B]{Spedalieri_2013}, the symplectic action of \(f\) is defined as \( f_\star(\boldsymbol{V}) := \boldsymbol{S} f(\boldsymbol{W}) \boldsymbol{S}^T\), where \(f(\boldsymbol{W})\) is the standard matrix function of the Williamson form \(\boldsymbol{W}\). Note that this is in general different from the standard matrix function, since the symplectic and spectral decomposition need not coincide. It can be shown that the Gibbs matrix is then given by (cf.\ \cite[Appendix A]{Banchi_2015}):
\begin{equation}
	\boldsymbol{G}[\boldsymbol{V}] = - \boldsymbol{\Omega} \arcoth_\star ( \boldsymbol{V} ) \boldsymbol{\Omega} \, .
\end{equation}
Using the notation from above, the Gibbs matrix of a normal form covariance matrix is then also in normal form, 
\begin{align}
	\boldsymbol{G}[\boldsymbol{V}] &= \begin{pmatrix}
		\alpha \boldsymbol{\sigma}_0 & \gamma \boldsymbol{\sigma}_3  \\ \gamma \boldsymbol{\sigma}_3  & \beta \boldsymbol{\sigma}_0
	\end{pmatrix} \, ,
\end{align}
with the normal form parameters given by
\begin{align}\label{Eq:Gibbs_Normal}
	\alpha &= \omega_+^2 \arcoth(\nu_1) + \omega_-^2 \arcoth(\nu_2) \\
	\beta &= \omega_-^2 \arcoth(\nu_1) + \omega_+^2 \arcoth(\nu_2) \\
	\gamma &= - \mathrm{sgn}(z) \omega_- \omega_+ ( \arcoth(\nu_1) + \arcoth(\nu_2)) \, .
\end{align}

\subsubsection{Setting Up the Optimisation Problem}

We have now everything in place in order to express the objective function solely in terms of the three normal form parameters. We start from the following equivalent characterisations of the relative entropy
\begin{align}
	D( \sigma_G[\boldsymbol{V}] \| \rho_G[\boldsymbol{V}_\rho]) &= \frac{ \tr{ \boldsymbol{V} (\boldsymbol{G}[\boldsymbol{V}_\rho ] - \boldsymbol{G}[\boldsymbol{V} ])  }}{ 2 \ln 2 } + \log \frac{ Z \left[ \boldsymbol{V}_\rho \right]}{Z \left[ \boldsymbol{V} \right]} \\ 
		& =  \frac{ \tr{ \boldsymbol{V} \boldsymbol{G}[\boldsymbol{V}_\rho ]  }}{ 2 \ln 2 } - H( \sigma_G[\boldsymbol{V}] ) + \log Z \left[ \boldsymbol{V}_\rho \right] \label{Eq:Objective_2}
\end{align} 

The normalisation constant \( Z[ \boldsymbol{V}]\) can be expressed solely in terms of the symplectic spectrum. Namely, we can write it as 
\begin{equation}
	\log Z \left[ \boldsymbol{V} \right] = \log \sqrt{ \det \left(\frac{1}{2} (\boldsymbol{V} + i \boldsymbol{\Omega}_2) \right) } = \frac{1}{2} \bigg( \log (\nu_1^2 - 1) + \log (\nu_2^2 - 1) \bigg) - 2
\end{equation}
Moreover, a similar handy expression in terms of the symplectic spectrum is available for the von-Neumann entropy. Following \cite[Proposition 1]{Serafini_2004} (see also~\cite{Holevo_1999}), we can write it as
\begin{align}
	H(\sigma_G[\boldsymbol{V}]) &= g(\nu_1) + g(\nu_2) && \text{with} & g(x) &:= \left( \frac{x +1}{2} \right) \log  \left( \frac{x +1}{2} \right) -  \left( \frac{x -1}{2} \right) \log  \left( \frac{x -1}{2} \right) \, .
\end{align}
Using Eq.~\eqref{Eq:Sym_Spectrum}, this is then given  implicitly in terms of the normal form coefficients. Lastly, since both covariance matrices are given in normal form, we can write the overlap term \(  \tr{ \boldsymbol{V}_1 G[ \boldsymbol{V}_2]  } \) solely in terms of the respective normal form coefficients:
\begin{align}
	\tr{ \boldsymbol{V}_1 G[ \boldsymbol{V}_2]  } = &2 x_1 \alpha_2 + 2y_1 \beta_2 + 4 z_1 \gamma_2 \, ,
\end{align}
where \((\alpha_2, \beta_2, \gamma_2)\) denote the normal form coefficients of the Gibbs matrix \( G[ \boldsymbol{V}_2] \). 

Hence, starting from Eq.~\eqref{Eq:Objective_2} the objective function becomes
\begin{align}\label{Eq:Objective_Normal_Form}
	F(x,y,z) &:= \frac{ x \alpha_{\rho} + y \beta_\rho + 2z \gamma_\rho}{\ln 2} - g(\nu_1(x,y,z)) - g(\nu_2(x,y,z)) + \log Z \left[ \boldsymbol{V}_\rho \right] \, ,
\end{align}
where \((x,y,z)\)  have to satisfy the constraints discussed in the previous sections. The problem is now set up in principle as a multivariate optimisation problem that can be solved with the methods from standard calculus. However, in general this leads to intractable transcendental equations, thus we will resort in the following to an asymptotic analysis in the large squeezing regime to obtain closed form solutions. 

\subsubsection{Analysis for Thermal Attenuator Channel}\label{Sec:Thermal_Attenuator}

Let us begin with the thermal attenuator channel. The quasi-Choi state of this channel is a Gaussian state with covariance matrix
\begin{equation}\label{Eq:Covariance_TA}
	\boldsymbol{V}_\mathrm{Att}( \lambda, n, r) =\begin{pmatrix}
		( \cosh(2r) \lambda + n (1-\lambda) ) \boldsymbol{\sigma}_0  & \sinh(2r) \sqrt{\lambda} \boldsymbol{\sigma}_3 \\ \sinh(2r) \sqrt{\lambda}   \boldsymbol{\sigma}_3   & \cosh(2r) \boldsymbol{\sigma}_0
	\end{pmatrix} \, .
\end{equation}
We find that it is entangled provided that
\begin{equation}
	n \leq n_\mathrm{sep}(\lambda) := \frac{1+\lambda}{1-\lambda} 
\end{equation}
independent of the squeezing parameter \(r\). Thus, we assume \(1 \leq n \leq n_\mathrm{sep}(\lambda) \) in the following.

As we are ultimately interested in the limit \(r \to \infty\), we focus on the large squeezing regime to make further analytical progress. For \(\lambda \in (0,1)\), we find that the symplectic spectrum of the quasi-Choi states in Eq.~\eqref{Eq:Covariance_TA} satisfies
\begin{align}
	\nu_1 ( \lambda, n, r) &= n + \mathcal{O} \left(\frac{1}{\cosh(2r)}\right)  \\
	\nu_2 ( \lambda, n, r) &= \cosh(2r) (1-\lambda) + n \lambda + \mathcal{O} \left(\frac{1}{\cosh(2r)}\right) 
\end{align}
With this, we can then derive the associated Gibbs matrix. We find for \(\lambda \in (0,1)\) that
\begin{equation}
	\boldsymbol{G}[ \boldsymbol{V}_\mathrm{Att}( \lambda, n, r) ] = \frac{\arcoth(n)}{ (1-\lambda) } \begin{pmatrix}
		\lambda \boldsymbol{\sigma}_0 & - \sqrt{\lambda} \boldsymbol{\sigma}_3 \\ -  \sqrt{\lambda} \boldsymbol{\sigma}_3 & \boldsymbol{\sigma}_0
	\end{pmatrix} + \mathcal{O}\left( \frac{1}{\cosh(2r)} \right) \, .
\end{equation}

Then, focusing on the first term in Eq.~\eqref{Eq:Objective_Normal_Form}, we can rewrite it (ignoring the normalisation) as 
\begin{equation}
	\frac{x+y}{2} (\alpha_\mathrm{Att} + \beta_\mathrm{Att} ) + \frac{x-y}{2}  (\alpha_\mathrm{Att} - \beta_\mathrm{Att} ) + 2z \gamma_\mathrm{Att} \, .
\end{equation}
From this, we are able to deduce that \( x \geq y \geq 1 \) may be assumed and we can replace \(z \to \abs{z}\) in the above w.l.o.g.\ as we are searching for a minimum. The former follows from the fact that \(\alpha_{\mathrm{Att}} \leq \beta_{\mathrm{Att}} \) asymptotically and that the entropy term is invariant under swapping \(x\) and \(y\). Similarly, the latter is due to the optimal \(z\) having the reversed sign of \(\gamma_{\mathrm{Att}}\) (note that the symplectic eigenvalues are invariant under swapping the sign of \(z\)).

We can then explicitly re-parametrise the problem in terms of the two symplectic eigenvalues \( (\nu_1, \nu_2 )\) of the optimisation variable, using that \(\nu_1 \geq \nu_2 \geq 1\) must hold due to \(x \geq y \). We explicitly have
\begin{align}
	x &= \frac{1 + \nu_1 \nu_2 + \nu_1 - \nu_2}{2} & y &= \frac{1 + \nu_1 \nu_2 + \nu_2 - \nu_1}{2}
\end{align}
and
\begin{equation}
	\abs{z} =   \frac{ \sqrt{ (1+\nu_1 \nu_2)^2 - (\nu_1 + \nu_2)^2 }}{2} \, .
\end{equation}
The first-order condition for the minimum then requires
\begin{equation}
	\frac{\partial F}{\partial \nu_1} =  \alpha_\rho \frac{\nu_2 + 1}{2} + \beta_\rho \frac{\nu_2 -1}{2} + \gamma_\rho \nu_1 \sqrt{ \frac{ \nu_2^2 -1 }{ \nu_1^2 -1 } }  - \arcoth(\nu_1) = 0
\end{equation}
and 
\begin{equation}
	\frac{\partial F}{\partial \nu_2} = \alpha_\rho \frac{\nu_1 - 1}{2} + \beta_\rho \frac{\nu_1 + 1}{2} + \gamma_\rho \nu_2 \sqrt{ \frac{ \nu_1^2 -1 }{ \nu_2^2 -1 } }  - \arcoth(\nu_2) = 0 \, .
\end{equation}
However, this is a transcendental system of equations and cannot be solved analytically for all parameters. Observe that in the large squeezing regime, we have
\begin{equation}
	\log Z_\mathrm{Att} = \log \left( (1-\lambda) \cosh(2r) \right) + \frac{1}{2}\log( n^2-1)  -2 + \mathcal{O}\left( \frac{1}{\cosh(2r)} \right) \, ,
\end{equation}
which grows without bounds as \(r \to \infty\). As we are looking for a minimum, we need that \( \nu_1 \nu_2 = \mathcal{O}(\cosh(2r) )\) must hold to precisely cancel this growth and keep the relative entropy finite. Making the ansatz \( \nu_1 = A \cosh(2r) + B \) and \(\nu_2 = D \) with \(r \gg 1\), we have
\begin{equation}
	\frac{\partial F}{\partial \nu_1} = \frac{\alpha_\rho + \beta_\rho}{2} D + \frac{\alpha_\rho - \beta_\rho}{2} + \gamma_\rho \sqrt{D^2 -1 } + \mathcal{O} \left(\frac{1}{\cosh(2r)} \right) \, ,
\end{equation}
which leads to the solution
\begin{equation}
	\nu_2 = \frac{1 + \lambda}{1- \lambda} + \mathcal{O} \left( \frac{1}{\cosh(2r)}\right) = n_\mathrm{sep}(\lambda) + \mathcal{O} \left( \frac{1}{\cosh(2r)}\right) \, .
\end{equation}
Moreover, we have
\begin{equation}
	\frac{\partial F}{\partial \nu_2} = A \left( \frac{\alpha_\rho + \beta_\rho}{2} + \gamma_\rho \frac{D}{\sqrt{D^2-1}}\right) \cosh(2r) + \mathcal{O}(1)
\end{equation}
Plugging in the optimal \(D\), the bracket evaluates to zero; hence, \(A\) is not fixed by this and a careful analysis of the next-order reveals the same happens there. Consequently, we have to resort to a numerical solution of the problem given in Eq.~\ref{Eq:Objective_Normal_Form} to determine that \( \nu_1 = \cosh(2r) \sin(\theta)^2 + \mathcal{O}(1) \) holds in the large squeezing regime. Using this, we can then determine that
\begin{equation}
	F_\mathrm{opt} = \frac{ n_\mathrm{sep} \left( \arcoth(n) - \arcoth(n_\mathrm{sep})  \right) }{\log(2)} + \frac{1}{2} \log\left( \frac{n^2-1}{n_\mathrm{sep}^2-1} \right) + \mathcal{O}\left(\frac{1}{\cosh(2r)} \right) \, .
\end{equation}

\subsubsection{Analysis of Thermal Amplifier Channel}\label{Sec:Thermal_Amplifier}

The covariance matrix for the quasi-Choi of the thermal amplifier is given by 
\begin{equation}
	\boldsymbol{V}_{\mathrm{Amp}}(\eta, n, r) =\begin{pmatrix}
		( \cosh(2r) \eta + n (\eta-1) ) \boldsymbol{\sigma}_0  & \sinh(2r) \sqrt{\eta}\boldsymbol{\sigma}_3 \\ \sinh(2r) \sqrt{\eta} \boldsymbol{\sigma}_3   & \cosh(2r) \boldsymbol{\sigma}_0
	\end{pmatrix} \, .
\end{equation}
with gain parameter \( \eta \in [1, \infty)\) and thermal noise parameter \(n \geq 1\). We find that it is entangled if 
\begin{equation}
	n \leq \frac{\eta +1}{\eta-1} := n_\mathrm{sep}(\eta) 
\end{equation}
independent of the squeezing parameter. For the symplectic eigenvalues, we find for \( \eta>1 \) that
\begin{align}
	\nu_1 ( \eta, n, r) &= \cosh(2r) (\eta-1) + n \eta + \mathcal{O} \left(\frac{1}{\cosh(2r)}\right) \\
	\nu_2 ( \eta, n, r) &= n + \mathcal{O} \left(\frac{1}{\cosh(2r)}\right)
\end{align}
and the associated Gibbs matrix satisfies
\begin{equation}
	\boldsymbol{G}[ \boldsymbol{V}_\mathrm{Amp}( \eta, n, r) ] = \frac{\arcoth(n_\mathrm{th})}{(\eta-1) } \begin{pmatrix}
		 \boldsymbol{\sigma}_0 & - \sqrt{\eta} \boldsymbol{\sigma}_3 \\ -  \sqrt{\eta} \boldsymbol{\sigma}_3 & \eta\boldsymbol{\sigma}_0
	\end{pmatrix} + \mathcal{O}\left( \frac{1}{\cosh(2r)} \right) \, .
\end{equation}
Thus, compared to the previous example of the attenuator, the roles of \(x\) and \(y\) are reversed here. Re-running the above analysis with this consideration, the optimal value is given by
\begin{equation}
	F_\mathrm{opt} = \frac{ n_\mathrm{sep} \left( \arcoth(n) - \arcoth(n_\mathrm{sep})  \right) }{\ln(2)} + \frac{1}{2} \log\left( \frac{n^2-1}{n_\mathrm{sep}^2-1} \right) + \mathcal{O}\left(\frac{1}{\cosh(2r)} \right) \, .
\end{equation}
using the re-defined value \(n_\mathrm{sep}(\eta)\). Notably, this diverges in the case of \(n\to 1 \), which corresponds to the quantum limited amplifier.

\subsubsection{Analysis of Additive Noise Channel}\label{Sec:Noise}

The quasi-Choi of the additive Gaussian noise channel has covariance matrix given by 
\begin{equation}
	\boldsymbol{V}_{\mathrm{Noise}}(\mu, r) =\begin{pmatrix}
		( \cosh(2r) + \mu ) \boldsymbol{\sigma}_0  & \sinh(2r) \boldsymbol{\sigma}_3 \\ \sinh(2r) \boldsymbol{\sigma}_3   & \cosh(2r) \boldsymbol{\sigma}_0
	\end{pmatrix} \, .
\end{equation}
with noise parameter \(\mu \in [0, \infty)\) and we find that it is entangled if \( \mu \in [0,2)\). The symplectic eigenvalues for \(\mu > 0 \) are given by
\begin{align}
	\nu_1 ( \mu, r) &= \sqrt{\mu \cosh(2r) } + \frac{\mu}{2} + \mathcal{O} \left(\frac{1}{\sqrt{\cosh(2r)}}\right) \\
	\nu_2 ( \mu, r) &= \sqrt{\mu \cosh(2r) } - \frac{\mu}{2} + \mathcal{O} \left(\frac{1}{\sqrt{\cosh(2r)}}\right)
\end{align}
and the Gibbs matrix is given by
\begin{equation}
	\boldsymbol{G}[ \boldsymbol{V}_\mathrm{Noise}( \mu, r) ] = \frac{1}{\mu} \begin{pmatrix}
		 \boldsymbol{\sigma}_0 & -\boldsymbol{\sigma}_3 \\ -\boldsymbol{\sigma}_3 & \boldsymbol{\sigma}_0
	\end{pmatrix} + \mathcal{O}\left( \frac{1}{\sqrt{\cosh(2r)}} \right) \, .
\end{equation}
The solution can thus be assumed symmetric in \(x\) and \(y\) and we find with a similar analysis as before that 
\begin{equation}
	F_\mathrm{opt} = \frac{2 - \mu}{\mu \ln(2) } + \log\left( \frac{\mu}{2} \right) + \mathcal{O}\left(\frac{1}{\sqrt{\cosh(2r)}} \right) \, .
\end{equation}

\subsection{Achievability Bound for Gaussian Channels}\label{App:Achievability}

Lastly, we provide a preliminary analysis of the achievability of our converse bound. For this, we leverage a recently developed random-coding-based achievability bound for the error exponent of LOCC-assisted channel coding in purified distance from \cite{Berta_2026}.

\begin{proposition}
	Let \(\mathcal{N}_{A \to B}\) be a \(m\)-mode Gaussian channel. With definitions as above, we then have that
	\begin{equation}
		Q_{\leftrightarrow, \mathrm{Tr} }(\mathcal{N}_{A \to B}) \geq  \limsup_{r \to \infty} \left\{ - \log \tr{ \left( \mathrm{Tr}_{\overline{A}}\left[ \sqrt{ \rho_{\mathcal{N}}(r) } \right] \right)^2 } \right\} \, ,
	\end{equation} 
	where \( \rho_\mathcal{N}(r) \coloneqq (\mathcal{N} \otimes \mathcal{I})\big(\Phi(r)^{\otimes m}\big) \) is the quasi-Choi state that is obtained by sending one half of the state \(\Phi(r)^{\otimes m}\), where $\Phi(r)$ is the two-mode squeezed vacuum state, through the channel.
\end{proposition}

\begin{proof}
	
	We start with a proposition taken from \cite[Theorem 14]{Berta_2026} that gives an achievability result for the error exponent of LOCC-assisted channel coding in purified distance. They showed that for any finite-dimensional quantum channel \(\mathcal{N}_{A \to B} \) and transmission rate \(R > 0 \), we have
	\begin{equation}
		Q_{\mathrm{LOCC},\mathrm{P}} (\mathcal{N}, R) \geq \frac{1}{2} \sup_{s \in (0,1)} \left\{ s \left( \lim_{m \to \infty} \frac{1}{m}  I^{c}_{\frac{1}{1+s}} \left( \mathcal{N}_{A \to B}^{\otimes m} \right) - R \right) \right\} \, ,
	\end{equation}
	where the subscript LOCC denotes a restriction to non-adaptive protocols and \[I^{c}_{\alpha}(\mathcal{N}_{A \to B}) := \sup_{\phi_{\overline{A}A}} I_\alpha \left( \overline{A} \middle\rangle B \right) = \sup_{\phi_{\overline{A}A}} \inf_{\sigma_B} \overline{D}_\alpha \left( \mathcal{N}(\phi_{\overline{A}A}) \| \mathbb{1}_{\overline{A}} \otimes \sigma_B \right) \] is the Petz R\'enyi coherent information of the channel with \( \overline{A} \simeq A \). Recall that the Petz R\'enyi-divergence is defined as \( \overline{D}_\alpha( \rho \| \sigma ) := \frac{1}{\alpha-1} \log \tr{ \rho^{\alpha} \sigma^{1-\alpha}}  \) \cite{Petz_1986}.
	
	To begin, let us note that 
	\begin{equation}
		Q_{\leftrightarrow, \mathrm{Tr} }(\mathcal{N}) = 2Q_{\leftrightarrow, \mathrm{P} }(\mathcal{N}) \geq 2Q_{\mathrm{LOCC}, \mathrm{P} }(\mathcal{N}) \geq \liminf_{R \to 0^+} 2 Q_{\mathrm{LOCC},\mathrm{P}} (\mathcal{N}, R) \, ,
	\end{equation}
	where the first inequality holds by a subset argument and the last inequality follows by the same argument as given on page 26 of \cite{Girardi_2025_2}.
	
	Now, let us further lower bound the right-hand side by fixing the input state to be a product state, i.e.\ \(  \phi_{\overline{A}^m A^m} = \phi_{\overline{A} A}^{\otimes m} \). Note that this choice is most likely not optimal as the Petz R\'enyi coherent information is known to be not additive in general. This then leads to the simplified lower bound
	\begin{equation}\label{Eq:Start}
		Q_{\leftrightarrow, \mathrm{Tr} }(\mathcal{N}) \geq \liminf_{R \to 0^+} \sup_{ s \in (0,1) } \left\{  s \left( \inf_{ \sigma_B } \overline{D}_{\frac{1}{1+s}}( \rho_{\overline{A}B} \| \mathbb{1}_{\overline{A}}\otimes \sigma_B) - R \right) \right\} \, ,
	\end{equation}
	where \( \rho_{\overline{A}B} = \mathcal{N}_{A \to B} (\phi_{\overline{A}A} ) \).
	
	In a first step, let us rewrite this as follows
	\begin{align}
		\sup_{\alpha \in \left( 1/2, 1\right) } \left\{ \frac{1-\alpha}{\alpha}\left[ \inf_{\sigma_B} \overline{D}_\alpha ( \rho_{\overline{A}B} \| \mathbb{1}_{\overline{A}} \otimes \sigma_B) - R \right] \right\} &= \sup_{\alpha \in \left(1/2, 1\right) } \left\{ \inf_{\sigma_B} -\frac{1}{\alpha} \log \tr{ \rho_{\overline{A}B}^\alpha \sigma_B^{1-\alpha} } - \frac{1-\alpha}{\alpha} R\right\} \nonumber \\
		&=\sup_{\beta \in \left(0,1/2 \right) } \left\{ \inf_{\sigma_B} \frac{1}{\beta-1} \log \tr{ \rho_{\overline{A}B}^{1-\beta} \sigma_B^{\beta} } - \frac{\beta}{1-\beta} R\right\}  \nonumber \\
		&= \sup_{\beta \in \left(0,1/2 \right) } \left\{ \inf_{\sigma_B} \overline{D}_\beta( \mathbb{1}_{\overline{A}} \otimes \sigma_B \| \rho_{\overline{A}B} ) - \frac{\beta}{1-\beta} R\right\} \nonumber \, ,
	\end{align}
	where we first parametrised \( \alpha = \frac{1}{1+s} \) and then re-parametrised to \(\beta = 1 - \alpha \) in a second step in order to obtain a \emph{reversed} version of the coherent information. 
	
	Now, we can carry out the limit \(R \to 0^+\). For this, observe that for fixed \(\beta\) the function
	\begin{equation}
		\inf_{\sigma_B} D_\beta( \mathbb{1}_{\overline{A}} \otimes \sigma_B \| \rho_{\overline{A}B} ) - \frac{\beta}{1-\beta} R
	\end{equation}
	is monotone decreasing in \(R\) since \( \frac{\beta}{1-\beta} > 0 \). This monotonicity is inherited by the supremum over \(\beta\). Consequently, we can compute the limit as a supremum over \(R > 0 \). After an interchange of the two suprema, we can carry out the limit \(R \to 0^+ \) within the supremum over \(s\) to obtain
	\begin{equation}
		\lim_{R \to 0^+}  \sup_{\beta \in \left(0,1/2 \right)  } \left\{ \inf_{\sigma_B} \overline{D}_\beta( \mathbb{1}_{\overline{A}} \otimes \sigma_B \| \rho_{\overline{A}B} ) - \frac{\beta}{1-\beta} R \right\} = \sup_{\beta \in \left(0,1/2 \right) } \inf_{\sigma_B} \overline{D}_\beta( \mathbb{1}_{\overline{A}}  \otimes \sigma_B \| \rho_{\overline{A}B} ) \, .
	\end{equation}
	Note that the optimiser \(\sigma_B\) can be determined explicitly for any \(\beta \in (0,1)\) (cf.\ \cite[Appendix A]{Girardi_2025_2}). We start by rewriting
	\begin{align}
		\inf_{\sigma_B} \overline{D}_\beta( \mathbb{1}_A \otimes \sigma_B \| \rho_{AB} ) = \frac{1}{\beta-1} \log \sup_{\sigma_B} \tr{  \sigma_B^{\beta} \trA{\rho_{AB}^{1-\beta}} } \, .
	\end{align}
	Then, setting \( X_B := \sigma_B^\beta \) and \(Y_B := \trA{ \rho_{AB}^{1-\beta} } \), we have
	\begin{align}
		X_B &\geq 0 & \norm{X_B}_{1/\beta} = \left( \tr{ X_B^{1/\beta} } \right)^\beta = 1 \, .
	\end{align}
	By Hölder's inequality, we can conclude that
	\begin{align}
		\frac{1}{\beta-1} \log \sup_{\sigma_B} \tr{  \sigma_B^{\beta} \trA{\rho_{AB}^{1-\beta}} } &= \frac{1}{\beta-1} \log \sup_{X \geq 0, \norm{X}_{1/\beta} = 1 } \tr{ X Y } \\
		&= \frac{1}{\beta-1} \log \norm{Y}_{\frac{1}{1-\beta}} = - \log \tr{ \left(\trA{ \rho_{AB}^{1-\beta} } \right)^{\frac{1}{1-\beta}} } \, .
	\end{align}
	Finally, as the Petz \(\alpha\)-divergence is monotonically increasing in the order parameter, we get the largest achievability bound for \(\beta \to  1/2\). Overall, we have found that for finite dimensional channels \(\mathcal{N}_{A \to B} \)  that
	\begin{equation}\label{Eq:End}
		Q_{\leftrightarrow, \mathrm{Tr} }(\mathcal{N}) \geq - \log \norm{ \mathrm{Tr}_{\overline{A}}\left[ \sqrt{\rho_{\overline{A}B}} \right] }_2^2 =- \log \tr{ \left( \mathrm{Tr}_{\overline{A}}\left[ \sqrt{ \rho_{\overline{A}B} } \right] \right)^{2} } \, ,
	\end{equation}
	where \( \rho_{\overline{A}B} = \mathcal{N}_{A \to B} (\phi_{\overline{A}A} ) \) for any fixed input state \(\phi_{\overline{A}A}\) .
	
	In a next step, we lift this bound to Gaussian channels. Recall that by definition, for any Gaussian input the output of the channel is Gaussian as well. We can then make use of the LOCC truncation map from \cite[Lemma 10]{Pirandola_2017}. This asserts the existence of an LOCC channel \(\mathbb{T}_d\) that maps any energy-constrained bosonic state, which includes Gaussian states,  into a truncated state \(\tilde{\rho}_{AB}\) defined over a \(d \times d\)-dimensional support such that 
	\begin{equation}
		\frac{1}{2} \norm{\rho_{AB} - \tilde{\rho}_{AB} }_1 \leq \sqrt{\gamma_d} + \gamma_d \quad \text{with} \quad \gamma_d := \frac{E}{\sqrt{d}-1} \, ,
	\end{equation}
	where \( \tr{\rho_{AB} \hat{H} } \leq E \) for the canonical Hamiltonian \( \hat{H} \) of the system. Observe that this implies trace-norm convergence of the output to the true output as \(d \to \infty\). 
	
	Now, we consider the following scheme: The encoder prepares n copies of the two-mode squeezed vacuum state \( \Phi_{\overline{A}A} (r)^{\otimes m} \), sends it via the channel \(\mathcal{N}_{A \to B}^{\otimes n} \) to the receiver. Sender and Receiver then apply the  truncation map \(\mathbb{T}_d\). Note that our achievability bound from above applies to the truncated state for any \(d \in \mathbb{N}\). Now, we can make use of the continuity of the bound in Eq.\ \eqref{Eq:End}, which follows from its characterisation in terms of the Schatten-2 norm. Taking the limit \(d\to\infty\), we then get the following achievability bound for Gaussian channels:
	\begin{equation}
		Q_{\leftrightarrow, \mathrm{Tr} }(\mathcal{N}) \geq - \log \tr{ \left( \mathrm{Tr}_{\overline{A}}\left[ \sqrt{ \rho_{\mathcal{N}}(r) } \right] \right)^2 } \, .
	\end{equation}
	Finally, taking the limit \(r \to \infty\) completes the proof.
\end{proof}

\begin{remark}

	A careful reading of the proof of \cite[Theorem 10]{Berta_2026} shows that one can interchange the role of Alice and Bob in their state-merging protocol to obtain a backward-assisted LOCC protocol for entanglement distillation. This can then in turn be used in the proof of \cite[Theorem 10]{Berta_2026} in order to obtain another lower bound on the error exponent of LOCC-assisted channel coding as
	\begin{equation}
		Q_{\mathrm{LOCC},\mathrm{P}} (\mathcal{N}, R) \geq \frac{1}{2} \sup_{s \in (0,1)} \left\{ s \left( \lim_{m \to \infty} \frac{1}{m}  I^{c, \mathrm{rev} }_{\frac{1}{1+s}} \left( \mathcal{N}_{A \to B}^{\otimes m} \right) - R \right) \right\} \, ,
	\end{equation}
	in terms of the Petz reverse coherent information \( I^{c, \mathrm{rev}}_{\alpha}(\mathcal{N}_{A \to B}) := \sup_{\phi_{\overline{A}A}} \inf_{\sigma_{\overline{A}} } \overline{D}_\alpha \left( \mathcal{N}(\phi_{\overline{A}A}) \| \sigma_{\overline{A}} \otimes \mathbb{1}_B \right) \). Re-running our proof above then gives another achievability bound for our setting as
	\begin{equation}
		Q_{\leftrightarrow, \mathrm{Tr} }(\mathcal{N}_{A \to B}) \geq  \limsup_{r \to \infty} \left\{ - \log \tr{ \left( \mathrm{Tr}_{B}\left[ \sqrt{ \rho_{\mathcal{N}}(r) } \right] \right)^2 } \right\} \, .
	\end{equation} 
\end{remark}

\subsubsection{One-Mode Gaussian Channels}

Recall that the quasi-Choi state of our one-mode channels is Gaussian with zero mean. Thus, our goal is to compute for such states, the following expression for Gaussian states
\begin{equation}
	\mathrm{LB}(\rho_{AB} ) := \left. - \log \tr{ \left(\trA{ \rho_{AB}^{1-\beta} } \right)^{\frac{1}{1-\beta}} } \right|_{\beta = 0.5}
\end{equation}
(as well as the version with the partial trace over the \(B\)-system).

From the Gibbs-type exponential form for Gaussian states, we can immediately see that 
\begin{equation}
	\rho^\alpha = \sigma_\alpha \cdot \tr{ \rho^\alpha }
\end{equation}
where \( \sigma_\alpha := \frac{\rho^{\alpha}}{\tr{\rho^\alpha}}\) is a normalised Gaussian state with Gibbs matrix \( \boldsymbol{G}[ \boldsymbol{V}_\alpha] := \alpha \boldsymbol{G}[\boldsymbol{V}] \). 

We can use this identity to determine \( \boldsymbol{V}_\alpha = \coth( \alpha \arcoth(\boldsymbol{V} i \boldsymbol{\Omega}) ) i \boldsymbol{\Omega} \). Considering the Williamson decomposition \( \boldsymbol{V} = \boldsymbol{S} \left(\bigoplus _i\nu_i \boldsymbol{\sigma}_0 \right) \boldsymbol{S} \), then \( \boldsymbol{V}_\alpha = \boldsymbol{S} \left(\bigoplus_i \nu_i(\alpha) \boldsymbol{\sigma}_0 \right) \boldsymbol{S} \) with symplectic eigenvalues \( \nu_{i}(\alpha) = \coth( \alpha \arcoth (\nu_i )) \). Consequently, if \(\boldsymbol{V}\) is given in two-mode normal form with parameters \((x,y,z)\), then the covariance matrix of the scaled state is also in normal form. The reduced state \( \trA{ \sigma_\alpha} \) thus has covariance matrix \( \boldsymbol{V}_B = y_\alpha \boldsymbol{\sigma}_0 \), where the coefficient can be explicitly computed with the tools from Sec.\ \ref{Sec:Normal_Form} as
\begin{equation}
	y_\alpha = \frac{1}{2} \left( \frac{x+y}{\sqrt{(x+y)^2-4z^2}} (\nu_{2}(\alpha) + \nu_{1}(\alpha) ) + (\nu_{2}(\alpha) - \nu_{1}(\alpha) ) \right) \, .
\end{equation}

\begin{remark}
	For our bound in terms of the partial trace over \(B\), we find that the reduced state has covariance matrix \( \boldsymbol{V}_{\overline{A}} = x_\alpha \boldsymbol{\sigma}_0 \) with 
	\begin{equation}
	x_\alpha = \frac{1}{2} \left( \frac{x+y}{\sqrt{(x+y)^2-4z^2}} (\nu_{2}(\alpha) + \nu_{1}(\alpha) ) + (\nu_{1}(\alpha) - \nu_{2}(\alpha) ) \right) \, .
\end{equation}
\end{remark}

Moreover, we can compute \( \tr{ \rho^\alpha } \) for \(m\)-mode Gaussian states via the symplectic eigenvalues of the corresponding covariance matrix using the following formula (cf.\ Serafini's textbook \cite{Serafini}):
\begin{equation}
	\tr{ \rho^\alpha } = \prod_{i=1}^m \frac{2^{\alpha}}{(\nu_i + 1)^\alpha - (\nu_i - 1)^{\alpha} } = \prod_{i=1}^m g_\alpha(\nu_i)
\end{equation}

Then, note that 
\begin{align}
	\mathrm{LB}(\rho_{AB} ) &= -\frac{1}{1-\beta} \log \tr{ \rho_{AB}^{1-\beta} } - \log \tr{ \trA{ \sigma_{1-\beta}}^{\frac{1}{1-\beta} }}  \\
		&=  -\frac{1}{1-\beta} \bigg( \log(g_{1-\beta}(\nu_1) ) + \log(g_{1-\beta}(\nu_2) ) \bigg) - \log g_{\frac{1}{1-\beta}}(y_\alpha)  \, ,
\end{align}
which is now completely specified in terms of the symplectic eigenvalues and the normal form parameters of the original covariance matrix.

\subsubsection{Thermal Attenuator}

For the thermal attenuator, we find that the bound involving the partial trace over the \(A\) part gives the best bound. From the analysis in Sec.\ \ref{Sec:Thermal_Attenuator}, we can directly give the relevant terms in leading order in the squeezing parameter as
\begin{align}
	\nu_1 &= n & \nu_2 &= (1- \lambda ) \cosh(2r) & y_\alpha &= \frac{\cosh(2r)}{\alpha} \, .
\end{align}
With this, we get the expression
\begin{equation}
	  \frac{1}{1-\beta} \log \left( (n +1)^{1-\beta} - (n -1)^{1-\beta} \right) - \frac{\beta}{1-\beta} \log \left( 1- \lambda \right) - \log(2) \, .
\end{equation}
Evaluating the latter expression at \(\beta=0.5\) gives us
\begin{align}
	 &2 \log \left( \sqrt{n+1}- \sqrt{n-1} \right) - \log(1-\lambda) - \log(2)  =  - \log\left( (1-\lambda) \left( n + \sqrt{n^2-1} \right) \right)
\end{align}
and this is positive for 
\begin{equation}
	1 \geq \lambda > 1 - \frac{1}{ n+ \sqrt{n^2-1}} \, .
\end{equation}

\subsubsection{Thermal Amplifier}

For the thermal amplifier, we find that the bound involving the partial trace over the \(B\) part gives the best bound. We get with our analysis from Sec.\ \ref{Sec:Thermal_Amplifier} the lower bound
\begin{align}
	- \log\left( \frac{\eta-1}{\eta} \left( n + \sqrt{n^2-1} \right)  \right) \, ,
\end{align}
which is positive for 
\begin{equation}
	1 \leq \eta \leq 1 + \frac{1}{ n -1 + \sqrt{n^2-1}} \, .
\end{equation}

\subsubsection{Additive Gaussian Noise}

Lastly, for the additive Gaussian noise channel, we get by the same machinery that both of our bounds evaluate to
\begin{equation}
	- \log( 2 \mu)
\end{equation}
which is positive for \( 0 \leq \mu \leq 0.5 \).

\end{document}